\newcommand{\singleconstantrule}{\kw{SCEQrule}}
\newtheorem{example}{Example}
\newtheorem{lem}{Lemma}
\newtheorem{definition}{Definition}
\newtheorem{theorem}{Theorem}
\newtheorem{corollary}{Corollary}
\newtheorem{proposition}{Proposition}
\newcommand{\sigside}{{\cal S}_{\kw{Side}}}
\newcommand{\myeat}[1]{}
\newcommand{\sch}{\mathcal{S}}
\newcommand{\simplerewritingconstraints}{\Sigma_{\kw{simple}}}
\newcommand{\fastrewritingconstraints}{\Sigma_{\ptime}}
\newcommand{\ltgd}{\kw{LTGD}}
\newcommand{\gtgd}{\kw{GTGD}}
\newcommand{\fgtgd}{\kw{FGTGD}}
\newcommand{\uid}{\kw{UID}}
\newcommand{\incd}{\kw{IncDep}}
\newcommand{\atomicmap}{\kw{AtomMap}}
\newcommand{\projectionmap}{\kw{ProjMap}}
\newcommand{\guardedmap}{\kw{GuardedMap}}
\newcommand{\cqmap}{\kw{CQMap}}
\newcommand{\mappingClass}{\kw{Map}}
\newcommand{\policyClass}{\kw{Policy}}
\newcommand{\critinstrewrite}{\kw{CritRewrite}}
\newcommand{\critinstrewriteptime}{\critinstrewrite_{\ptime}}
\newcommand{\annot}{\kw{Annot}}
\newcommand{\hocwq}{\kw{HOCWQ}}
\newcommand{\facts}{\ensuremath{\mathcal{F}}}
\newcommand{\instance}{\ensuremath{\mathcal{D}}}
\newcommand{\schema}{\ensuremath{\mathcal{S}}}
\newcommand{\TuringMachine}{\ensuremath{\mathscr{T}}}
\newcommand{\letters}{\ensuremath{\Sigma}}
\newcommand{\states}{\ensuremath{Q}}
\newcommand{\initialState}{\ensuremath{q_0}}
\newcommand{\transitionFunction}{\ensuremath{\delta}}
\newcommand{\stateType}{\ensuremath{g}}
\newcommand{\Pspace}{\textsc{PSpace}}
\newcommand{\pspace}{\Pspace}
\newcommand{\PTime}{\textsc{PTime}}
\newcommand{\ptime}{\PTime}
\newcommand{\NP}{\textsc{NP}}
\newcommand{\np}{\NP}
\newcommand{\Exptime}{\textsc{ExpTime}}
\newcommand{\exptime}{\Exptime}
\newcommand{\expspace}{\textsc{ExpSpace}}
\newcommand{\hide}{\kw{Hide}}
\newcommand{\DoubleExptime}{\textsc{2ExpTime}}
\newcommand{\twoexp}{\DoubleExptime}
\newcommand{\twoexptime}{\twoexp}
\newcommand{\froneltgd}{\kw{Fr1}\ltgd}
\newcommand{\init}{\kw{Init}}
\newcommand{\quantif}{\ensuremath{\mathscr{Q}}}
\newcommand{\genAddr}{\kw{GenAddr}}
\newcommand{\data}{\kw{Cell}}
\newcommand{\children}{\kw{Children}}
\newcommand{\addr}{\kw{Address}}
\newcommand{\config}{\kw{Config}}
\newcommand{\transition}{\kw{Transition}}
\newcommand{\A}{\ensuremath{\mathcal{A}}\xspace}
\newcommand{\C}{\ensuremath{\mathcal{C}}\xspace}
\newcommand{\D}{\ensuremath{\mathcal{D}}\xspace}
\newcommand{\I}{\ensuremath{\mathcal{I}}\xspace}
\newcommand{\M}{\ensuremath{\mathcal{M}}\xspace}
\newcommand{\T}{\ensuremath{\mathcal{T}}\xspace}
\newcommand{\V}{\ensuremath{\mathcal{V}}\xspace}
\newcommand{\qentail}{\kw{QEntail}}
\newcommand{\kw}[1]{{\mathsf{#1}}\xspace}
\newcommand{\asecretquery}{p}
\newcommand{\secretquery}{\asecretquery} 
\newcommand{\disclose}{\kw{Disclose}}
\newcommand{\discloseClass}{\ensuremath{\kw{Disclose_C}}}
\newcommand{\critinst}{\instance_{\kw{Crit}}}
\newcommand{\critelement}{c_{\kw{Crit}}}
\newcommand{\iscrit}{\kw{IsCrit}}
\newcommand{\visiblechase}{\kw{VisChase}}
\newcommand{\chase}{\kw{Chase}}
\newcommand{\reach}{\kw{Reachable}}
\newcommand{\sourceschema}{{\cal S}}
\newcommand{\closedpreds}{{\cal C}}
\newcommand{\globalpreds}{{\cal G}}
\newcommand{\targetschema}{{\globalpreds(\M)}}
\newcommand{\viewcon}{\Sigma_{\M}}
\newcommand{\sourcecon}{\Sigma_{\kw{Source}}}
\newcommand{\org}{OR}
\newcommand{\isopen}{\kw{IsOpen}}
\newcommand{\docspec}{\kw{DocSpec}}
\newcommand{\patbldg}{\kw{PatBdlg}}
\newcommand{\patspec}{\kw{PatSpec}}
\newcommand{\patdoc}{\kw{PatDoc}}
\newcommand{\docbldg}{\kw{DocBldg}}
\newcommand{\openhours}{\kw{OpenHours}}
\newcommand{\visitinghours}{\kw{VisitingHours}}
\newcommand{\doctorlist}{\kw{DocList}}
\title{Reasoning about Disclosure in Data Integration in the Presence of Source Constraints}
\author{
  Michael Benedikt$^1$\and
  Pierre Bourhis$^2$\and
  Louis Jachiet$^{2}$\And
  Michaël Thomazo$^{3}$\\
  \affiliations
  $^1$University of Oxford\\
  $^2$CNRS CRIStAL, Université Lille, Inria Lille\\
  $^3$Inria, DI ENS, ENS, CNRS, PSL University\\
  \emails %
  \{pierre.bourhis, louis.jachiet\}@univ-lille.fr, %
  michael.thomazo@inria.fr, %
  michael.benedikt@cs.ox.ac.uk
}
\begin{document}

\maketitle

\begin{abstract}
Data integration systems allow users to 
access data sitting in multiple sources by means of queries over a
global schema, related to the sources via mappings.
Datasources often contain sensitive information, and thus an analysis
is needed to verify that a schema satisfies a privacy policy,
given as a set of queries whose answers should not be accessible to users.
Such an analysis should take into account not only knowledge that an
attacker may have about the mappings, but also what they may know
about the semantics of the sources.
In this paper, we show that \emph{source constraints} can have a
dramatic impact on disclosure analysis.
We study the problem of determining whether a given data integration
system discloses a source query to an attacker in the presence of
constraints, providing both lower and upper bounds on source-aware
disclosure analysis.
\end{abstract}

\section{Introduction}

In data integration,  users are shielded from the heterogeneity of
 multiple datasources by querying via
a \emph{global schema}, which provides a unified vocabulary. The relationship between sources and the user-facing schema
are specified declaratively via \emph{mapping rules}.
In data integration systems based on knowledge representation techniques, users pose queries
against the global schema, and these queries are answered using  data
in the sources and background knowledge. The computation of the answers involves reasoning based on the
query, the mappings, and any additional semantic information that is known  on the
global schema.

Data integration brings with it the danger of disclosing
information that data owners wish to keep confidential.
In declarative data integration, detection
of privacy violations is complex: although explicit access
to source information may be masked by the global schema, an attacker can
infer source facts via reasoning with schema and mapping information.

\begin{example} \label{ex:simple}
We consider an information integration setting for a hospital, which
internally stores the following data:

\begin{tabular}{r|l}
  Predicate & Meaning\\ \hline
 $\isopen(b,t)$    &    {building $b$ is open on date $t$} \\
  $\patbldg(p,b)$  &   {patient $p$ is present in building $b$} \\
 $\patspec(p,s)$   &   {patient $p$ was treated for specialty $s$} \\
 $\patdoc(p,d)$   &    {patient $p$ was treated by doctor $d$}  \\
 $\docbldg(d,b)$   &   {doctor $d$ is associated with building $b$} \\
 $\docspec(d,s)$  &    {doctor $d$ is associated with specialty $s$} \\
\end{tabular}\\

The hospital publishes the following data: $\openhours(b,t)$ giving
opening times $t$ for building $b$, $\visitinghours(p,t)$ giving
times $t$ when a given patient $p$ can be visited, and  $\doctorlist(d,s,b)$ 
listing the doctors $d$ with their specialty $s$ and their building $b$. Formally the
data being exposed is given by the following mappings:
\[ %
\begin{array}{rcl} %
  \isopen(b,t) & \rightarrow & \openhours(b,t)   \\
  \patbldg(p,b) \wedge \isopen(b,t) &\rightarrow & \visitinghours(p,t)\\
 \docspec(d,s) \wedge \docbldg(d,b) &\rightarrow& \doctorlist(d,s,b)  \\
\end{array}
\]
\end{example}

Prior work \cite{bckjournal} has studied disclosure in knowledge-based data integration, with an emphasis
on the role of semantic information on the \emph{global schema} -- in the form of ontological rules that
relate the global schema vocabulary.
The presence of an ontology can assist in privacy, since distinctions in
the source data may become indistinguishable
in the ontology.
More dangerous from the point of view of protecting information is \emph{semantic information about sources}.
For example, the sources in a data integration setting will generally overlap: that is, they will satisfy
\emph{referential integrity constraints}, saying that data items in one source link to items in another source.
Such constraints should be assumed as public knowledge, and with that knowledge the attacker may be able to infer 
information that was intended to be secret.

\begin{example} \label{ex:withconstraints}
Continuing Example \ref{ex:simple},
suppose that we know that each patient has a doctor specialized
in their condition, which can be formalized as:
$$\patdoc(p,d) \rightarrow \exists s ~ \patspec(p,s) \wedge \docspec(d,s)$$
And that we also know that when a patient is in a building, they must
have a doctor there:
$$\patbldg(p,b) \rightarrow \exists d ~\patdoc(p,d) \wedge \docbldg(d,b)$$

Due to the presence of these constraints, there can be a disclosure of the relationship of
patient to speciality $\patspec(p,s)$. Indeed, an attacker can see the
$\visitinghours$ for $p$, and from this, along with $\openhours$,  they  can sometimes infer the building $b$ where $p$
is treated (e.g. if $b$ has a unique set of open hours). From this they may be able to infer, using $\doctorlist$, the
specialty that $p$ has been treated for -- for example, if  all the doctors in $b$ share a
specialty.

\end{example}

In this work, we perform a detailed examination of the role of source constraints in disclosing
information in the context of data integration.
We focus on mappings from the sources given by universal Horn rules, where the global schema comes with no constraints.
Since our disclosure problem requires reasoning over all sources satisfying the constraints, we need a constraint
formalism that admits effective reasoning. We will look at  a variety of well-studied rule-based formalisms, with the simplest being
referential constraints, and the most complex being the \emph{frontier-guarded rules} \cite{frontierguarded}.
While decidability of our disclosure problems will follow from prior work \cite{bbcplics}, we will need new tools
to analyze the complexity of the problem. In Section \ref{sec:reduction},
we  give  reductions of disclosure problems to the \emph{query entailment problem}
that is heavily-studied  in knowledge representation. While a na\"ive application of the reduction allows us
only to conclude very pessimistic bounds, a more fine-grained analysis, combined with some recent results on CQ entailment,
will allow us to get much better bounds, in some cases ensuring tractability. 
In Section \ref{sec:lower}, we complement these results with lower bounds. 
Both the upper and lower bounds revolve around a complexity analysis
for reasoning with \emph{guarded existential rules and a restricted class of equality rules, where the rule
head compares a variable and a distinguished constant}. We believe this exploration of limited equality rules
can be productive for other reasoning problems.

Overall we get a complete picture  of the complexity
of disclosure in the presence of source constraints for many natural classes:  see Tables \ref{tab:summary}
in Section \ref{sec:conc} for a summary of our bounds.
Full proofs are available at the address \url{https://hal.inria.fr/hal-02145369}.


\section{Preliminaries} \label{sec:prelims}

We adopt standard notions from
function-free first-order
logic over a vocabulary of relational names.
An  \emph{instance} is a finite set of facts.
By a \emph{query} we always mean a \emph{conjunctive query} (CQ), which
is a first-order formula of the form $\exists \vec x ~ \bigwedge A_i$, where each $A_i$
is an atom.   
The \emph{arity} of a CQ is the number of its free variables, and CQs of arity 0 are \emph{Boolean}.

\paragraph{Data integration.}
Assume that the relational names in the vocabulary are split into two disjoint subsets:
\emph{source} and
\emph{global schema}. The \emph{arity} of such a schema is the maximal arity of its relational names.
We consider a set $\M$ of \emph{mapping rules} between source relations and a global schema relation $\T$ given. We focus on rules $\phi(\vec x, \vec y) \rightarrow \T(\vec x)$
where $\phi$ is a conjunctive query, there are no repeated variables in $\T(\vec x)$,
and where each global schema relation $\T$ is associated
with \emph{exactly one rule}.
Such rules are sometimes called ``GAV mappings''  in the database literature \cite{dataintegration}, and the
unique $\phi$ associated to a global relation $\T$ is referred to as the \emph{definition} of $\T$.
The rules are \emph{guarded} ($\M \in \guardedmap$) if for every rule, there exists an atom in the antecedent $\phi$ that contains
all the variables of $\phi$.
The rules are \emph{atomic} ($\M \in \atomicmap$) if each $\phi$ consists of a single atom, and 
they are \emph{projection maps} ($\M \in \projectionmap$) if each $\phi$ is a single atom with no repeated variables.
Given an instance $\D$ for the source relations, the \emph{image of $\D$ under mapping $\M$}, denoted $\M(\D)$, is
the instance for the global schema consisting of all facts $\{ \T(\vec c) \mid \D \models \exists \vec y ~ \phi(\vec c)\},$ where $\phi$ is the definition of $\T$.

\paragraph{Source constraints.}
We consider restrictions on the sources in the form of rules.
A \emph{tuple-generating dependency (TGD)} is a 
universally quantified sentence of the form
$\varphi(\mathbf{x}, \mathbf{z}) \rightarrow \exists \mathbf{y} \psi(\mathbf{x}, \mathbf{y})$, where
the \emph{body} $\varphi(\mathbf{x}, \mathbf{z})$ and the \emph{head} $\psi(\mathbf{x}, \mathbf{y})$  
are conjunctions of atoms such that each term is either a
constant or a variable in $\mathbf{x} \cup \mathbf{z}$ 
and $\mathbf{x} \cup \mathbf{y}$, respectively. Variables $\mathbf{x}$, common to the head and body, 
are called the \emph{frontier variables}. 
A  \emph{frontier-guarded TGD}  ($\fgtgd$) is a TGD in which there is an atom of the body that contains every frontier variable.
We focus on $\fgtgd$s because they
have been heavily studied in the database and knowledge representation community, and
it is known that many computational problems involving $\fgtgd$s are decidable \cite{frontierguarded}.
In particular this is true of the  \emph{query entailment problem}, which asks, given a finite collection of facts $\facts$,
a finite set $\Sigma$ of
sentences, and a CQ $Q$, whether $\facts \wedge \Sigma$ entails $Q$. We use $\qentail(\facts, \Sigma, Q)$ to denote
an instance of this problem and also say that ``$\facts$ entails $Q$ w.r.t. constraints $\Sigma$''.
A special case of $\fgtgd$s are \emph{Guarded TGDs} ($\gtgd$s), in which there is an atom containing all body variables.
These specialize further to \emph{linear TGDs}  ($\ltgd$s), whose   body consists of a single atom; and even further
to  \emph{inclusion dependencies} ($\incd$s), a linear TGD with a single atom in the head,
in which no variable occurs multiple times in the body, and no variable  occurs multiple times in  the head. 
Even $\incd$s occur quite commonly: for example, the source constraints of Example \ref{ex:withconstraints} can be rewritten
as $\incd$s.  The most specialized class we study are the \emph{unary $\incd$s}:($\uid$s), which are $\incd$s with at most one
frontier variable.

\paragraph{Queries and disclosure.}
The sensitive information in 
a data integration setting is given by a CQ $\secretquery$ over the
source schema, which we refer to as the \emph{policy}.
Intuitively, disclosure of sensitive information occurs
in a source instance $\D$ whenever the attacker can infer from the image $\M(\D)$
that $\secretquery$ holds of a tuple in $\D$.
Formally, we say an instance $\V$ for the global schema is \emph{realizable}, with respect to mappings $\M$ and source
constraints $\sourcecon$ if
there is some source instance $\D$ that satisfies $\sourcecon$ such that $\M(\D)=\V$. For a realizable $\V$, the set of
such $\D$ are the \emph{possible source instances for $\V$}.
A query result $\secretquery(\vec t)$ is \emph{disclosed} at $\V$ if $\secretquery(\vec t)$ holds on all possible source instances for $\V$.
A query $\secretquery$ \emph{admits a  disclosure} (for mappings $\M$ and source constraints $\sourcecon$) 
if there is some realizable instance $\V$ 
and binding $\vec t$ for the free variables of $\secretquery $ for which $\secretquery(\vec t)$ is disclosed.
In this terminology, the conclusion of Example  \ref{ex:withconstraints} was that  policy $\patspec(p,s)$ admits a disclosure
with respect to the constraints and mappings.
For a class of constraints $\C$, a class of mappings $\mappingClass$, a class of policies $\policyClass$, we write $\discloseClass(\C, \mappingClass)$ to denote the problem of determining whether a policy (a CQ, unless otherwise stated) admits a disclosure for a set of mappings in $\mappingClass$ and a set of source constraints in $\C$. Given $\sourcecon, \M$ and a CQ $\secretquery$, the corresponding instance of this problem is denoted by $\disclose(\C, \M, \secretquery)$.
 In this paper we will focus on disclosure for queries and constraints \emph{without constants}, although 
our techniques extend to the setting with  constants, as long as distinct constants
are not assumed to be unequal.


\section{Reducing Disclosure to Query Entailment} \label{sec:reduction}
Our first goal is to provide a reduction from $\discloseClass(\kw{TGD}, \mappingClass)$ to
a finite collection of standard query entailment problems.  
For simplicity
we will restrict to Boolean queries $\asecretquery$ in stating the results, but it is straightforward
to extend the reductions and results to the non-Boolean case.
We first recall a prior reduction of $\discloseClass(\kw{TGD}, \mappingClass)$ to a more
complex problem, the \emph{hybrid open and closed world query answering problem} 
\cite{hybrid1,hybrid2,hybrid3}, denoted $\hocwq$. $\hocwq$ takes as input a set of facts $\facts$, a collection of constraints $\Sigma$,
a Boolean query $Q$, and additionally a subset $\closedpreds$ of the vocabulary.  A \emph{possible world} for
such  $\hocwq(\facts, \Sigma, Q, \closedpreds)$  is any instance $\instance$ containing $\facts$, satisfying $\Sigma$, and such that
for each relation $C \in \closedpreds$, the $C$-facts in $\instance$ are the same as the $C$-facts in $\facts$.
$\hocwq(\facts, \Sigma, Q, \closedpreds)$ holds if $Q$ holds in every possible world. Note that
the query entailment problem is a special case of $\hocwq$, where $\closedpreds$ is empty.

Given a set of mapping rules $\M$ of the form $\phi(\vec y, \vec x) \rightarrow \T(\vec x)$, we let $\globalpreds(\M)$ be the set of global schema predicates, and
let $\viewcon(\M)$ be the mapping rules, considered as bi-directional constraints between global schema
predicates and sources.

We now recall one of the main results of \cite{bbcplics}:

\begin{theorem} \label{thm:critinst} There is an instance $\instance'$ 
 computable in linear time from $\sourcecon, \M, \asecretquery$, 
such that  $\disclose(\sourcecon, \M, \asecretquery)$ holds if and only if
$\hocwq(\instance', \sourcecon \cup \viewcon(\M), \asecretquery, \globalpreds(\M))$ holds.
\end{theorem}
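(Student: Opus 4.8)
The plan is to take $\instance'$ to be the \emph{critical view}. Let $\critelement$ be a single fresh element and let $\critinst$ be the \emph{critical source instance}, containing the fact $R(\critelement,\dots,\critelement)$ for every source relation $R$; set $\instance' := \M(\critinst)$. Since each global relation $\T$ has a single definition $\phi_\T$ which is a CQ, and $\critinst$ contains every source fact over $\critelement$, the body $\phi_\T$ is satisfied with all variables sent to $\critelement$, so $\instance'$ is exactly the set of global facts $\T(\critelement,\dots,\critelement)$; this is computable in linear time. Because $\sourcecon$ is constant-free, $\critinst \models \sourcecon$ (every TGD head can be witnessed by $\critelement$), so $\instance'$ is realizable, with $\critinst$ one of its possible sources.

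First I would unwind the possible-world semantics of $\hocwq(\instance', \sourcecon \cup \viewcon(\M), \asecretquery, \globalpreds(\M))$. A possible world is an instance $\instance_1$ extending $\instance'$, satisfying $\sourcecon \cup \viewcon(\M)$, and whose $\globalpreds(\M)$-facts coincide with those of $\instance'$; such a world is determined by its source part $\instance_2$. Reading $\viewcon(\M)$ bidirectionally, the forward rules $\phi_\T \rightarrow \T(\vec x)$ force $\M(\instance_2) \subseteq \instance'$, while the backward rules $\T(\vec x) \rightarrow \exists \vec y\, \phi_\T$ force $\instance' \subseteq \M(\instance_2)$; together $\M(\instance_2) = \instance'$, and $\sourcecon$ forces $\instance_2 \models \sourcecon$. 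Hence possible worlds correspond exactly to possible source instances of the view $\instance'$, and the $\hocwq$ instance holds iff $\asecretquery$ is disclosed at $\instance'$. It therefore remains to prove that $\asecretquery$ \emph{admits} a disclosure iff it is disclosed at the single view $\instance'$.

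One direction is immediate: since $\instance'$ is itself realizable, disclosure at $\instance'$ witnesses that $\asecretquery$ admits a disclosure. For the converse I would argue that the critical view is the \emph{most disclosing} realizable view. Suppose $\asecretquery$ is disclosed at some realizable view $\M(\instance_0)$. Given an arbitrary possible source $\instance_2$ of $\instance'$ (so $\instance_2 \models \sourcecon$ and $\M(\instance_2) = \instance'$), the goal is to exhibit a possible source $\instance_*$ of $\M(\instance_0)$ together with a homomorphism $\instance_* \to \instance_2$. Granting this, disclosure at $\M(\instance_0)$ gives $\instance_* \models \asecretquery$, and since CQs are preserved under homomorphisms, $\instance_2 \models \asecretquery$; as $\instance_2$ was arbitrary, $\asecretquery$ is disclosed at $\instance'$. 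To build $\instance_*$ I would use the collapse homomorphism $g$ sending every element to $\critelement$: it maps $\M(\instance_0)$ into $\instance' = \M(\instance_2)$, so after collapsing, each global fact of $\M(\instance_0)$ has a body-witness inside $\instance_2$. Copying one such witness per fact with fresh nulls yields a source producing $\M(\instance_0)$ that maps into $\instance_2$; closing it under $\sourcecon$ by chasing preserves the homomorphism into $\instance_2$, because $\instance_2 \models \sourcecon$ and the chase is universal.

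The main obstacle is to make $\instance_*$ a \emph{genuine} possible source of $\M(\instance_0)$: while the witness-copying-plus-chase construction guarantees $\M(\instance_0) \subseteq \M(\instance_*)$, a homomorphism into $\instance_2$ does not by itself rule out \emph{over-production}, i.e.\ spurious matches of some definition $\phi_{\T'}$ that place global facts outside $\M(\instance_0)$ into $\M(\instance_*)$, which would break the requirement $\M(\instance_*) = \M(\instance_0)$. Controlling this is where the structural hypotheses enter: the GAV shape of $\M$ (one definition per global relation), the absence of constants (so $g$ is a legitimate homomorphism and $\critinst$ absorbs every source), and the frontier-guardedness of $\sourcecon$ (so the chase stays well-behaved and the fresh nulls cannot create unintended guards). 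I expect this exact-image bookkeeping, rather than the homomorphism transfer itself, to be the delicate part of the argument, and it is precisely what forces the closed-world ($\hocwq$) semantics rather than ordinary query entailment.
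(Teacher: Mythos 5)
First, a point of reference: the paper does not actually prove this theorem --- it is recalled verbatim from \cite{bbcplics}, the paper's only addition being the remark that $\instance'$ can be taken to be the critical instance $\critinst^\targetschema$. So your attempt has to stand on its own. Much of it does: your $\instance' = \M(\critinst)$ is exactly $\critinst^\targetschema$; your unwinding of the $\hocwq$ possible worlds as precisely the possible source instances of the critical view is correct (forward rules of $\viewcon(\M)$ plus the closed-world condition on $\globalpreds(\M)$ give $\M(\instance_2)\subseteq\instance'$, backward rules give the converse); realizability of $\instance'$ and the easy direction are fine. You have also isolated the right remaining statement: for every possible source $\instance_2$ of $\instance'$ and every realizable view $\M(\instance_0)$ witnessing disclosure, one must produce a possible source $\instance_*$ of $\M(\instance_0)$ together with a homomorphism $\instance_*\to\instance_2$.

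The gap is that your construction of $\instance_*$ does not deliver this, and you say so yourself: witness-copying plus chasing gives $\M(\instance_0)\subseteq\M(\instance_*)$ and a homomorphism into $\instance_2$, but not the exact-image condition $\M(\instance_*)=\M(\instance_0)$. This is not bookkeeping that the structural hypotheses will absorb. Distinct witness copies share the elements of the active domain of $\M(\instance_0)$, so facts from different copies, and facts added by the chase, can jointly match some definition $\phi_{\T'}$ and produce global facts outside $\M(\instance_0)$; nothing in the GAV shape, constant-freeness, or guardedness of $\sourcecon$ rules this out. (Frontier-guardedness is a red herring in any case: the theorem is for arbitrary TGD source constraints, and even the frontier-guarded chase can be infinite --- which breaks your construction a second way, since the paper's possible sources are \emph{finite} instances.) The repair, and the reason the critical instance works, is a direct-product construction instead of copy-and-chase: take $\instance_*$ to be $\instance_0\times\instance_2$, renaming each pair $(a,\critelement)$ to $a$ and every other pair to a fresh null. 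Constant-free TGDs, being Horn, are preserved under direct products, so $\instance_*\models\sourcecon$; a CQ match in a product projects to matches in both factors and, conversely, matches in the factors pair up to a match in the product, so $\M(\instance_0\times\instance_2)$ consists exactly of the facts $\T(\vec a\otimes\vec b)$ with $\T(\vec a)\in\M(\instance_0)$ and $\T(\vec b)\in\M(\instance_2)=\instance'$. Since $\instance'$ forces $\vec b=(\critelement,\dots,\critelement)$, after renaming we get $\M(\instance_*)=\M(\instance_0)$ \emph{exactly}, so over-production is impossible by construction; the second projection is the desired homomorphism into $\instance_2$, and finiteness comes for free. With that one substitution your argument closes; without it, the hard direction remains unproved.
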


In fact, the arguments in \cite{bbcplics} show that $\instance'$ can
be taken to be a very simple instance, the \emph{critical instance}
over the global schema $\globalpreds(\M)$ denoted
$\critinst^\targetschema$ where $\critinst^\schema$, for $\schema$ a
set of predicates, denotes the instance that mentions only a single
element $\critelement$, and contains, for each relation $R$ in
$\schema$ of arity $n$, the fact $R(\critelement, \ldots
\critelement)$.

\begin{corollary}  \label{cor:fgtgdupper} $\discloseClass( \fgtgd, \cqmap)$
is in $\twoexp$.
\end{corollary}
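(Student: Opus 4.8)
The plan is to derive the $\twoexp$ upper bound for $\discloseClass(\fgtgd, \cqmap)$ by reducing, via Theorem~\ref{thm:critinst}, to an instance of $\hocwq$ and then showing that this $\hocwq$ instance can be solved in double-exponential time. By Theorem~\ref{thm:critinst}, the disclosure problem $\disclose(\sourcecon, \M, \asecretquery)$ is equivalent to $\hocwq(\instance', \sourcecon \cup \viewcon(\M), \asecretquery, \globalpreds(\M))$, where $\instance'$ is computable in linear time (indeed it can be taken to be the critical instance $\critinst^\targetschema$). So it suffices to bound the complexity of this particular $\hocwq$ instance when $\sourcecon$ is a set of $\fgtgd$s.

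The first step is to understand the constraint set $\sourcecon \cup \viewcon(\M)$. The source constraints $\sourcecon$ are $\fgtgd$s by hypothesis. The mapping constraints $\viewcon(\M)$, obtained from $\cqmap$ rules $\phi(\vec y, \vec x) \rightarrow \T(\vec x)$ read bi-directionally, must also be expressed as $\fgtgd$s (or a class subsumed by $\fgtgd$ reasoning): the forward direction $\phi \to \T(\vec x)$ is a frontier-guarded TGD since the frontier is just $\vec x$, which appears in $\T(\vec x)$, and the backward direction $\T(\vec x) \to \exists \vec y~\phi(\vec y,\vec x)$ has a single body atom $\T(\vec x)$ that trivially guards its frontier. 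Hence the whole constraint set falls within $\fgtgd$, for which query entailment is decidable by \cite{frontierguarded}.

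The second step is to eliminate the closed-predicate aspect of $\hocwq$. The closed predicates here are exactly the global schema predicates $\globalpreds(\M)$, and in the critical instance these hold only on tuples of the single element $\critelement$. The idea is that closing off the global predicates to their values in $\instance'$ can be simulated within open-world query entailment: since each global relation $\T$ is fully determined by its definition via $\viewcon(\M)$, and its facts in any possible world are fixed to the critical tuples, one can encode the ``no new global facts'' condition by additional rules or by a bounded case analysis over which global atoms hold (there are only exponentially many atoms over $\critelement$ and the global schema). This reduces the $\hocwq$ instance to at most exponentially many ordinary $\qentail$ instances over $\fgtgd$s.

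The final step is to invoke the known complexity of $\fgtgd$ query entailment. CQ entailment under frontier-guarded TGDs is decidable in $\twoexp$ (this is the standard upper bound from \cite{frontierguarded}), and since the combined instance sizes and constraint sets remain polynomial (the linear-time $\instance'$ plus the $\fgtgd$-rendering of $\M$ and $\sourcecon$), each of the polynomially-or-exponentially-many $\qentail$ subproblems is solved in $\twoexp$, and an exponential number of $\twoexp$ computations is still $\twoexp$. I expect the main obstacle to be the second step: carefully simulating the closed-world restriction on $\globalpreds(\M)$ within the open-world $\fgtgd$ entailment framework without breaking frontier-guardedness or blowing the complexity past $\twoexp$. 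Once the $\hocwq$-to-$\qentail$ passage is handled cleanly, the remaining bounds follow from plugging in the established $\fgtgd$ entailment complexity.
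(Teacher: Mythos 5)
Your first step (invoking Theorem~\ref{thm:critinst} to pass to $\hocwq(\critinst^\targetschema, \sourcecon \cup \viewcon(\M), \asecretquery, \globalpreds(\M))$) matches the paper, but your second step --- the one you yourself flag as the main obstacle --- has a genuine gap, and it is exactly where the paper's proof does its real work. You propose to simulate the closed-world condition on $\globalpreds(\M)$ by ``a bounded case analysis over which global atoms hold (there are only exponentially many atoms over $\critelement$ and the global schema).'' This misidentifies where the difficulty lies: every global atom over $\critelement$ is already present in $\critinst^\targetschema$ (by definition of the critical instance) and holds in every possible world, so there is nothing to case-split over there. The actual closed-world constraint is that a possible world must not contain any global fact over the \emph{fresh} elements introduced when chasing the existentials of $\sourcecon$ and of the backward rules of $\viewcon(\M)$; these elements are unbounded in number, so no finite case analysis over global atoms can capture the condition. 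Nor can ``additional rules'' within $\fgtgd$ do it: the condition forces that whenever a mapping body $\phi(\vec x,\vec y)$ is matched, the frontier tuple $\vec x$ must consist of $\critelement$ only --- a negative/equality constraint, and TGDs (being positive-existential) can neither forbid facts nor force equalities. Consequently your claimed reduction ``to at most exponentially many ordinary $\qentail$ instances over $\fgtgd$s'' is not established, and the final complexity count rests on it.

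The paper resolves precisely this point by a different device: in the context of $\critinst^\targetschema$, the forward rules $\phi(\vec x, \vec y) \rightarrow \T(\vec x)$ are rewritten as single-constant equality rules ($\singleconstantrule$s) $\phi(\vec x, \vec y) \rightarrow \bigwedge_i x_i = \critelement$, and the whole problem is placed inside the Guarded Negation Fragment, which subsumes $\fgtgd$s, tolerates these equality rules, and has $\twoexp$ query entailment \cite{gnfj}. (The alternative classical-entailment route, via the critical-instance rewriting of Theorem~\ref{thm:simplecritinstrewriting} with the $\iscrit$ predicate, annotated queries/constraints, and the instance $\hide_\M(\critinst^\targetschema)$, produces exponentially many rewritten constraints and queries; as the paper notes, a na\"ive application of that reduction only yields pessimistic bounds, which is why the corollary is proved through GNF instead.) To repair your argument you would need either the $\singleconstantrule$-plus-GNF step or the full rewriting machinery of Section~\ref{sec:reduction}; neither is present in, or implied by, your proposal.
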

\begin{proof} 
The non-classical aspect of $\hocwq$ comes into play with rules of
$\viewcon(\M)$ of form $\phi(\vec x, \vec y) \rightarrow \T(\vec
x)$. But in the context of $\critinst^\targetschema$, these can be
rewritten as \emph{single-constant equality rules}
($\singleconstantrule$s) $\phi(\vec x, \vec y) \rightarrow \bigwedge_i
x_i=\critelement$. Such rules remain in the Guarded Negation Fragment
of first-order logic, which also subsumes $\fgtgd$s, while having a
query entailment problem in $\twoexp$ \cite{gnfj}.
\end{proof}

We now want to conduct a finer-grained analysis, looking for cases that give lower complexity.
 To do this we will transform further into a classical query entailment problem. This will require a transformation
of our query $\secretquery$, a transformation of our source constraints and mappings into a new set of constraints,
and a transformation of the instance $\critinst^\targetschema$. The idea of the transformation is that we
remove the $\singleconstantrule$s that are implicit in the $\hocwq$ problem, replacing them with constraints and queries that reflect all the possible
impacts the rules might have on identifying two variables.

We first describe the transformation of the query and the constraints.
They will involve introducing a new unary predicate $\iscrit(x)$; informally
this states that $x$ is equal to $\critelement$.
Consider a CQ $Q=\exists \vec y~ \bigwedge A_i$.
An \emph{annotation} of  $Q$
is a subset  of $Q$'s  variables.
Given an annotation $\annot$ of $Q$, we let $Q_\annot$ be the query
obtained from $Q$ by performing the following operation for
each $v$ in  $\annot$: for all occurrences $j$ of $v$ except the
first one, replacing $v$ with a fresh variable $v_j$; and adding  conjuncts $\iscrit(v_j)$ as well as $\iscrit(v)$ to $Q_\annot$.
A \emph{critical-instance rewriting} of a CQ $Q$ is a CQ obtained by applying the above process to $Q$ for any annotation.
We write $Q_\annot \in \critinstrewrite(Q)$ to indicate that $Q_\annot$ is such a rewriting.

To transform the mapping rules and constraints to a new set of constraints
using $\iscrit(x)$, we lift the notion of critical-instance rewriting to TGDs in the obvious way:
a  critical-instance rewriting of a TGD $\sigma$ (either in $\sourcecon$ or $\viewcon(\M)$),
is the set of  TGDs formed by applying the above process to the body of $\sigma$.
We write
$\sigma_\annot \in \critinstrewrite(\Sigma)$ to indicate that $\sigma_\annot$ is a critical-instance rewriting for a $\sigma \in \Sigma$,
and similarly for mappings. For example, the second mapping rule in Example \ref{ex:simple} has several rewritings;
 one of them will change the rule body 
to   $\patbldg(p,b) \wedge \isopen(b', d) \wedge \iscrit(b) \wedge \iscrit(b')$.

Our transformed constraints will additionally use the set of constraints $\iscrit(\M)$, including all rules:
\[
\T(x_1 \ldots x_n) \rightarrow \iscrit(x_i)
\]
where $\T$ ranges over the global schema and $1 \leq i \leq n$.
Informally $\iscrit(\M)$ states that all elements in the mapping image
must be $\critelement$.  We also need to transform the instance, using
a source instance with ``witnesses for the target facts''.  Consider a
fact $\T(\critelement \ldots \critelement)$ in
$\critinst^\targetschema$ formed by applying a mapping rule
$\bigwedge_i A_i(\vec x_i, \vec y_i) \rightarrow \T(\vec x)$ in $\M$.
The \emph{set of witness tuples for $\T(\vec x)$} is the set $A_i(\vec c)$, where $\vec c$ contains
$\critelement$ in each position containing a variable $x_j$ and containing a constant $c_{y_{j}}$ in every
position containing a variable $y_j$.
That is the witness tuples are witnesses for the fact $\T(\critelement \ldots \critelement)$, where each
existential witness is chosen  fresh.
Let $\hide_\M(\critinst^\targetschema)$ be the instance
formed by taking the witness tuples for every fact $\T(\critelement \ldots \critelement)\in\critinst^\targetschema$.

We are now ready to state the reduction of the disclosure problem to query entailment:

\begin{theorem} \label{thm:simplecritinstrewriting} 
$\disclose(\sourcecon, \M, \asecretquery)$ holds exactly when there is a $\secretquery_\annot \in \critinstrewrite(\secretquery)$ such that
$\hide_\M(\critinst^\targetschema)$ entails $\secretquery_\annot$ w.r.t. constraints:
\[
\critinstrewrite(\sourcecon) \cup \critinstrewrite(\M) \cup  \iscrit(\M)
\]
\end{theorem}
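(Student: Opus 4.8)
The plan is to route the statement through Theorem~\ref{thm:critinst}, which already reduces $\disclose(\sourcecon,\M,\asecretquery)$ to $\hocwq(\critinst^\targetschema,\Sigma_0,\asecretquery,\globalpreds(\M))$ with $\Sigma_0 := \sourcecon\cup\viewcon(\M)$, and then to show this $\hocwq$ instance has the same truth value as the displayed query-entailment disjunction. As noted for Corollary~\ref{cor:fgtgdupper}, the only non-classical ingredient is that, over $\critinst^\targetschema$, the rules of $\viewcon(\M)$ behave as $\singleconstantrule$s forcing frontier elements to equal $\critelement$. My central device is a \emph{folding} map $h$ that collapses every element satisfying $\iscrit$ to $\critelement$: intuitively $h$ turns an $\iscrit$-annotated model of the transformed, purely tuple-generating problem into a genuine possible world of the equality-laden $\hocwq$ problem, and the two families of rewritings exist precisely so that $h$ respects both constraint satisfaction and query matches.

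First I would prove a \emph{folding lemma}: whenever $M$ satisfies $\hide_\M(\critinst^\targetschema)$ together with $\Sigma' := \critinstrewrite(\sourcecon)\cup\critinstrewrite(\M)\cup\iscrit(\M)$, the instance $h(M)$ is a possible world for the $\hocwq$ problem. I would check each obligation: the empty-annotation rewriting in $\critinstrewrite(\M)$ fires on the witness tuples to produce each fact $\T(\critelement,\dots,\critelement)$, so $h(M)\supseteq\critinst^\targetschema$ and $\iscrit(\critelement)$ is derivable; the rules $\iscrit(\M)$ force every global atom of $M$ to have all coordinates $\iscrit$, hence to fold onto a critical fact, so the global predicates stay closed to \emph{exactly} $\critinst^\targetschema$; the forward mapping direction of $\viewcon(\M)$ follows from $\critinstrewrite(\M)$, and the backward direction is witnessed by the very tuples of $\hide_\M(\critinst^\targetschema)$; finally, for a source dependency $\sigma\in\sourcecon$, a body match in $h(M)$ lifts to a match of the rewriting $\sigma_\annot$ in $M$, where $\annot$ collects the body variables sent to $\critelement$ and each split copy is assigned the particular $\iscrit$ witness of the atom in which it occurs. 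Symmetrically, I would record the converse ingredient: any possible world $\instance$, annotated by $\iscrit^\instance := \{\critelement\}$, is already a model of $\Sigma'$ and receives a homomorphism from $\hide_\M(\critinst^\targetschema)$ (using that $\instance$ satisfies the backward mapping constraint).

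Next I would pin down a single universal structure. Let $M^\ast$ be the chase of $\hide_\M(\critinst^\targetschema)$ under $\Sigma'$; since all constraints are TGDs this chase is a universal model, so $\hide_\M(\critinst^\targetschema)$ entails $\secretquery_\annot$ iff $M^\ast\models\secretquery_\annot$, and the outer existential over annotations may be evaluated on the fixed model $M^\ast$. Writing $\instance^\ast := h(M^\ast)$, the folding lemma makes $\instance^\ast$ a possible world, while chase universality together with the previous paragraph yields a source-fact-preserving homomorphism $\instance^\ast\to\instance$ into every possible world $\instance$: extend the homomorphism $\hide_\M(\critinst^\targetschema)\to\instance$ along the chase and observe that $\iscrit$ elements must land on $\critelement$, so it factors through $h$. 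As $\asecretquery$ is a conjunctive query over the open source schema, this shows $\hocwq(\critinst^\targetschema,\Sigma_0,\asecretquery,\globalpreds(\M))$ holds iff $\instance^\ast\models\asecretquery$.

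It remains to transfer the query across the fold, which is exactly what $\critinstrewrite(\secretquery)$ is designed for: I would prove $\instance^\ast\models\asecretquery$ iff $M^\ast\models\secretquery_\annot$ for some $\secretquery_\annot\in\critinstrewrite(\secretquery)$. Given a match of $\secretquery$ in $\instance^\ast$, setting $\annot$ to the variables mapped to $\critelement$ and choosing, occurrence by occurrence, the $\iscrit$ element of $M^\ast$ witnessing each atom yields a match of $\secretquery_\annot$; conversely the $\iscrit$ conjuncts of $\secretquery_\annot$ force all split copies of a variable onto $\critelement$ under $h$, so any match of $\secretquery_\annot$ folds to a match of $\secretquery$. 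Chaining the three equivalences with Theorem~\ref{thm:critinst} gives the statement. I expect the main obstacle to be the folding lemma and, above all, the universality step: reconciling the open-world chase with the closed global predicates — ensuring $h(M^\ast)$ carries exactly the critical global facts and maps canonically into every possible world — is where the argument is most delicate, whereas the query-rewriting equivalence, though it motivates the variable splitting, is comparatively routine once $h$ is fixed. The assumption that queries and constraints contain no constants (with distinct constants not forced unequal) is what guarantees that the equality behaviour modelled by $\iscrit$ never makes the chase fail, so that $M^\ast$, and hence $\instance^\ast$, always exist.
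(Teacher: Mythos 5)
Your proposal is correct, and it shares the paper's two central devices: the folding map $h$ that collapses $\iscrit$-elements onto $\critelement$, and the use of the annotations to lift matches of bodies and of $\secretquery$ across that fold (your ``folding lemma'' is, in substance, the verification the paper performs in its first direction). Where you genuinely diverge is in how the equivalence is closed. The paper argues the two implications asymmetrically: when every entailment fails, it folds a single countermodel (obtained via the cited result on UCQ entailment) into a possible world refuting the $\hocwq$ instance; when some entailment holds, it runs a step-by-step simulation of the chase under the rewritten constraints inside the \emph{visible chase} of \cite{bbcplics}, concluding via Proposition \ref{prop:visiblechase}. You instead build one canonical object, the chase $M^\ast$ of $\hide_\M(\critinst^\targetschema)$ under the rewritten constraints, and prove that $h(M^\ast)$ is a \emph{universal} possible world --- it is a possible world by your folding lemma, and it maps homomorphically into every possible world because any possible world, read with $\iscrit$ interpreted as $\{\critelement\}$, already models the rewritten constraints and receives $\hide_\M(\critinst^\targetschema)$, so chase universality plus factoring through $h$ applies. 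Both directions then drop out of this single canonical-model statement. What each route buys: the paper's simulation reuses existing machinery and keeps each direction short, but depends on Proposition \ref{prop:visiblechase} and the UCQ-countermodel result; your route dispenses with both, is more symmetric and self-contained, and in fact establishes something slightly stronger (universality of $h(M^\ast)$ among possible worlds), at the cost of the extra model-theoretic verifications you correctly identify as the delicate part. One caveat applies equally to both proofs: the chase, and hence $h(M^\ast)$, may be infinite, so the argument implicitly treats possible worlds under entailment semantics rather than as finite instances --- the paper's own first direction has exactly the same feature, so this is a shared convention rather than a gap in your argument.
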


Note that  Theorem \ref{thm:simplecritinstrewriting} does not give a polynomial time reduction:
both $\critinstrewrite(\sourcecon)$ and $\critinstrewrite(\M)$ can contain exponentially many rewritings, and further
there can be exponentially many rewritings in $\critinstrewrite(\secretquery)$.

However, the algorithm does  give us a better bound in the case of Guarded TGDs with bounded arity.

\begin{corollary} \label{cor:gtgdatomicupper} 
If we bound the  arity of schema relations, then
$\discloseClass(\gtgd, \guardedmap)$ is in $\exptime$.
\end{corollary}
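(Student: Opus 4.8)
The plan is to run the reduction of Theorem~\ref{thm:simplecritinstrewriting} and to bound the cost of the resulting entailment checks: $\disclose(\sourcecon, \M, \asecretquery)$ holds exactly when some $\secretquery_\annot \in \critinstrewrite(\secretquery)$ is entailed by $\hide_\M(\critinst^\targetschema)$ under $\critinstrewrite(\sourcecon) \cup \critinstrewrite(\M) \cup \iscrit(\M)$. First I would record the structure of the transformed problem. The signature grows only by the single unary predicate $\iscrit$, so the arity stays bounded and the signature stays polynomial. A critical-instance rewriting renames repeated occurrences of body variables and adds unary $\iscrit$ atoms; ordering the guard atom of a $\gtgd$ (resp.\ of a guarded mapping) first, each body variable keeps its first occurrence inside that atom, so the rewritten guard still contains every \emph{original} variable. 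Hence every rule of $\critinstrewrite(\sourcecon) \cup \critinstrewrite(\M)$ stays frontier-guarded, and the rules of $\iscrit(\M)$ are linear.

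These rules are in general not fully guarded: the fresh variables created for split occurrences need not all lie in a single atom. The observation that rescues the stronger bound is that every such stray variable carries an $\iscrit$ atom, and $\iscrit$ can only be produced by $\iscrit(\M)$, whose heads fire solely on $\critelement$ (all global-schema atoms appearing in the chase have $\critelement$ in every position). So in every model $\iscrit$ holds only of $\critelement$, and each stray variable is pinned to the constant $\critelement$. Consequently every body variable is either inside the guard atom or equal to $\critelement$, i.e.\ the bodies are \emph{guarded modulo the constant}, and a body match is confined to a single bag of the guarded tree decomposition of the chase together with the globally available element $\critelement$. This locality of body matching is exactly the property that separates the $\exptime$ behaviour of fully guarded TGDs from the $\twoexp$ behaviour of arbitrary frontier-guarded ones, and it survives the rewriting even though only frontier-guardedness does syntactically.

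With this in hand I would bound a single entailment check by a type-elimination (fixpoint) procedure over guarded types -- maximal consistent atom-sets over a guarded tuple together with $\critelement$. The number of such types is $2^{\mathrm{poly}(n)}$, governed by the bounded arity and the polynomial signature and \emph{independent of the number of rules}; the procedure runs in time polynomial in the number of types and polynomial in the number of rules, the rules entering only when testing, per type, which heads are triggered. This is where the non-polynomiality of the reduction is absorbed: although $\critinstrewrite(\sourcecon) \cup \critinstrewrite(\M)$ may contain $2^{O(n)}$ rules, a merely polynomial dependence on the rule count costs only a further single-exponential factor, so each check still runs in $2^{\mathrm{poly}(n)}$. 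As $\critinstrewrite(\secretquery)$ likewise has $2^{O(n)}$ members, running a single-exponential check on each of exponentially many rewritings keeps the total at $2^{\mathrm{poly}(n)}$, placing the whole problem in $\exptime$.

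The main obstacle is precisely this mismatch: the $\iscrit$-rewriting destroys syntactic guardedness, so one cannot invoke the bounded-arity $\exptime$ bound for guarded TGDs as a black box. The work lies in making rigorous that pinning the stray variables to $\critelement$ keeps body matching local -- so that the guarded-type elimination enabling $\exptime$ still applies -- and in verifying that this elimination is polynomial in the number of rules rather than, say, exponential in the number of states of a tree automaton, which would collapse the bound back to $\twoexp$. Getting this dependence right is exactly the finer-grained analysis the surrounding discussion promises.
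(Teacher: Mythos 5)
Your overall skeleton (run the reduction of Theorem~\ref{thm:simplecritinstrewriting}, then bound each of the entailment checks by a single-exponential procedure, iterating over the exponentially many $\secretquery_\annot$) matches the paper's, but the step that carries all the weight is wrong. You claim that ``in every model $\iscrit$ holds only of $\critelement$,'' justified by the assertion that all global-schema atoms appearing in the chase have $\critelement$ in every position. Neither holds. The chase of $\hide_\M(\critinst^\targetschema)$ under $\critinstrewrite(\sourcecon) \cup \critinstrewrite(\M) \cup \iscrit(\M)$ creates fresh nulls via the source constraints; whenever a (rewritten) mapping body matches with such a null at an exported position, a global-schema fact containing that null is derived, and $\iscrit(\M)$ then labels that null with $\iscrit$. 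This is the entire point of the predicate: $\iscrit(e)$ encodes ``$e$ would be merged with $\critelement$ in the visible chase,'' and the paper's correctness proof of Theorem~\ref{thm:simplecritinstrewriting} explicitly unifies \emph{all} (typically many) $\iscrit$-elements into $\critelement$. Consequently your ``guarded modulo the constant'' locality fails: a stray variable constrained only by $\iscrit$ can be matched to any of unboundedly many elements scattered through the model, not to one globally available constant, so a body match is not confined to a single bag plus $\critelement$, and the guarded-type elimination built on that locality is unsound as stated. A correct repair exists, but it is different: each stray variable occurs in exactly one non-$\iscrit$ atom, so that atom together with its $\iscrit$ side conditions can be folded into a derived atom over the variables it shares with the guard, via an auxiliary rule that is itself guarded. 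It is this single-occurrence structure, not any pinning to $\critelement$, that restores guardedness.

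The paper takes a more pedestrian route that sidesteps the issue: it first normalizes the input --- guarded mappings are replaced by projection maps at the cost of extra $\gtgd$s (Proposition~\ref{prop:reduceprojmap}, Corollary~\ref{cor:guardedreduce}), and each $\gtgd$ is split so that its body has at most two atoms --- so that $\critinstrewrite(\M)$ is trivial and each constraint has only boundedly many rewritings, each of bounded size and (essentially, after the regrouping above) guarded; it then invokes the known $\exptime$ bound for $\gtgd$ entailment in bounded arity as a black box. Note also that your worry about absorbing exponentially many rewritten constraints is moot in bounded arity: a guarded body has at most $\maxArity$ variables, hence at most $2^{\maxArity}$ annotations per rule, so $\critinstrewrite(\sourcecon)\cup\critinstrewrite(\M)$ is of polynomial size; only $\critinstrewrite(\secretquery)$ is exponential, and both you and the paper handle that factor in the same way.
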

\begin{proof}
First, by introducing additional intermediate relations and source
constraints, we can assume that $\M$ contains only projection
mappings.  Thus we can guarantee that $\critinstrewrite(\M)$ just
contains the rules in $\M$.  By introducing intermediate relations and
additional source constraints, we can also assume that each $\gtgd \in
\sourcecon$ has a body with at most two atoms.
Since the arity of relations is fixed, the size of such $1$- or $2$-atom bodies is fixed as well.  From this we see
that the number of constraints in any $\critinstrewrite(\sigma)$ is polynomial. 
The reduction in Theorem \ref{thm:simplecritinstrewriting} thus gives us exponentially many $\gtgd$ entailment problems of polynomial size.
Since entailment over Guarded TGDs with bounded arity is in $\exptime$ \cite{tamingjournal}, we can conclude.
\end{proof}

\subsection{Refinements of the Reduction to Identify Lower Complexity Cases}
In order to lower the complexity to $\exptime$ \emph{without} bounding
the arity, we refine the construction of the function $\critinstrewrite(\sigma)$ in the case where $\sigma$ is a linear TGD, 
providing a function $\critinstrewriteptime(\sigma)$ that constructs only \emph{polynomially many rewritten constraints}.

Let $\sigma=B(\vec{x}) \rightarrow \exists \vec{y} ~ H(\vec{z})$ be
a linear TGD with relation $B$ of arity $k$,  and suppose $\vec{x}$ contains $d$ distinct free variables $V=\{v_1 \ldots v_d\}$.
Let $P$ be the set of pairs $(e,f)$ with  $e < f \leq k$ such that the same variable
$v_i$ sits at positions $e$ and $f$ in $\vec{x}$.
We order $P$ as $(e_0,f_0) \ldots (e_h,f_h)$; for each $(e,f)$ that is not the initial pair $(e_0, f_0)$, we
 let $(e,f)^-$ be its predecessor in the linear order.

 We let $B_{e,f}$ denote new predicates of arity $k$ for each $(e,f) \in P$.
Let $\vec w$ be a set of $k$ distinct variables, and $\vec w^{i=j}$ be formed from $\vec w$ by replacing $w_j$ with $w_i$.
We begin the construction of $\critinstrewriteptime(\sigma)$ with the constraints:
$B(\vec w^{e_0=f_0}) \rightarrow B_{e_0, f_0}(\vec w^{e_0=f_0})$ and
$B(\vec w) \wedge \iscrit(w_{e_0}) \wedge \iscrit(w_{f_0}) \rightarrow B_{e_0, f_0}(\vec w)$.

For each $(e,f)$ with a predecessor $(e,f)^-=(e', f')$, we add to $\critinstrewriteptime(\sigma)$ the following constraints:
$B_{e',f'}(\vec w^{e=f}) \rightarrow B_{e, f}(\vec w^{e=f})$ and
$B_{e',f'}(\vec w) \wedge \iscrit(w_{e}) \wedge \iscrit(w_{f}) \rightarrow B_{e, f}(\vec w)$.

Letting $e_h,f_h$  the final pair in $P$, we add to $\critinstrewriteptime(\sigma)$ the constraint
$ B_{e_h,f_h}(\vec x') \rightarrow \exists \vec{y} ~ H(\vec{z}) $
where $\vec x'$ is obtained from $\vec x$ by replacing all but the first
occurrence of each variable $v$  by a fresh variable.

If $\sourcecon$ consists of $\ltgd$s, we 
let $\critinstrewriteptime(\sourcecon)$ be the result of applying this process to every $\sigma \in \sourcecon$. Similarly,
if $\M$ consists of atomic mappings (implying that the associated  rules  are $\ltgd$s), then
we let $\critinstrewriteptime(\M)$ the result of applying the process above to the  rule going from
source relation to global schema relation associated to  $m \in \M$.
Then we have:

\begin{theorem} \label{thm:fastcritinstrewriting}
When $\sourcecon$ consists of $\ltgd$s,
$\disclose(\sourcecon, \M, \asecretquery)$ holds exactly when there is a $\secretquery_\annot \in \critinstrewrite(\secretquery)$ such that
$\hide_\M(\critinst^\targetschema)$ entails $\secretquery_\annot$ w.r.t. to the constraints
\[
\critinstrewriteptime(\sourcecon) \cup \critinstrewriteptime(\M) \cup  \iscrit(\M)
\]
\end{theorem}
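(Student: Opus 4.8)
The plan is to deduce Theorem~\ref{thm:fastcritinstrewriting} from Theorem~\ref{thm:simplecritinstrewriting} by showing that, for a linear TGD $\sigma$, the polynomially-many constraints $\critinstrewriteptime(\sigma)$ entail exactly the same facts over the original vocabulary as the exponentially-many constraints $\critinstrewrite(\sigma)$. Because the auxiliary predicates $B_{e,f}$ introduced by $\critinstrewriteptime(\sigma)$ are fresh---they occur in no other constraint, in the instance $\hide_\M(\critinst^\targetschema)$, nor in any rewriting $\secretquery_\annot$---such an equivalence over the shared vocabulary is automatically preserved when all constraints are combined into one entailment problem, since the extension is conservative. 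As the query side $\critinstrewrite(\secretquery)$ is unchanged between the two theorems, it then suffices to establish this local equivalence for each $\sigma \in \sourcecon$, and for atomic mappings for each associated linear rule of $\M$; the mapping case is identical since atomic mappings are $\ltgd$s.

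First I would pin down the meaning of the intermediate predicates by induction along the order $(e_0,f_0) \prec \cdots \prec (e_h,f_h)$ of $P$. The two seed rules make $B_{e_0,f_0}(\vec t)$ derivable from $B(\vec t)$ exactly when $t_{e_0}=t_{f_0}$, or both $t_{e_0},t_{f_0}$ satisfy $\iscrit$; and each successor pair of rules propagates $B_{(e,f)^-}(\vec t)$ to $B_{e,f}(\vec t)$ under the same disjunctive test on positions $e,f$. Hence $B_{e_h,f_h}(\vec t)$ is derivable iff $B(\vec t)$ holds and \emph{every} coincidence pair $(e,f)\in P$ is ``resolved'', meaning $t_e=t_f$ or both $t_e,t_f$ are critical. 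The final rule $B_{e_h,f_h}(\vec x') \rightarrow \exists \vec y~H(\vec z)$, whose body has all positions distinct, then fires precisely on those $\vec t$ for which all of $P$ is resolved, binding each frontier variable to the value at its first occurrence.

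The crux is to match this ``all pairs resolved'' condition against the firing condition of $\critinstrewrite(\sigma)$. An annotation $\annot$ fires on $B(\vec t)$ iff every variable outside $\annot$ has all its occurrence positions equal in $\vec t$ and every variable inside $\annot$ has all its occurrence positions critical; so the full rewriting produces the head iff \emph{some} annotation fires, i.e. iff each body variable has either all occurrences equal or all occurrences critical. I would show this per-variable condition coincides with the per-pair condition above: one direction is immediate, and for the other, if a variable $v$ has some non-critical occurrence then pairwise resolution forces $v$'s occurrences pairwise equal, hence all equal, while otherwise all of $v$'s occurrences are critical. Since every annotation leaves the head untouched and binds frontier variables by first occurrence---exactly as the final $\critinstrewriteptime$ rule does---the two rewritings generate identical head-facts on identical inputs.

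The main obstacle is precisely this per-pair versus per-variable comparison, together with verifying that the frontier bindings agree across all annotations and the single final rule; once that is settled the rest is bookkeeping. Two minor points remain: when $P=\emptyset$ the body has no repeated variable, both rewritings collapse to $\{\sigma\}$, and the claim is trivial; and annotations that merely add $\iscrit$ to a single-occurrence variable are subsumed by the annotation omitting it, so they introduce no firing cases absent from the per-pair analysis. Combining the local equivalence with the conservativity of the fresh $B_{e,f}$ and with Theorem~\ref{thm:simplecritinstrewriting} then yields the stated characterisation.
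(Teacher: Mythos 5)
Your proposal is correct and takes essentially the same approach as the paper: the paper likewise derives the result from Theorem~\ref{thm:simplecritinstrewriting} by proving the two constraint sets equivalent for entailment, using the same key observation that $B_{e_h,f_h}(\vec c)$ is derivable exactly when every pair in $P$ is resolved by equality or by joint criticality, which coincides (with identical frontier bindings) with the firing condition of some annotation rewriting. The paper merely packages this as a two-way transfer of counterexample models --- closing a model of the annotation-based constraints under the $B_{e,f}$-rules in one direction, and verifying the annotation rules by induction along the pair ordering in the other --- which is a model-theoretic rendering of your derivability-plus-conservativity argument.
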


We can combine this result with recent work on fine-grained complexity  of $\gtgd$s to improve the doubly exponential upper bound
of Corollary \ref{cor:fgtgdupper} for linear TGD source constraints and atomic mappings:

\begin{theorem} \label{thm:linlinupper} 
$\discloseClass(\ltgd, \atomicmap)$
is in $\exptime$. If the arity of relations in the source schema is bounded, then the complexity drops to $\np$, while
if further the policy is atomic, the problem is in $\ptime$.
\end{theorem}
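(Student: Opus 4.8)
The plan is to invoke Theorem~\ref{thm:fastcritinstrewriting} and reduce the three cases to query entailment over \emph{guarded} TGDs, then control the complexity by exploiting the linear origin of the constraints. Since $\sourcecon$ consists of $\ltgd$s, Theorem~\ref{thm:fastcritinstrewriting} tells us that $\disclose(\sourcecon,\M,\asecretquery)$ holds iff some rewriting $\secretquery_\annot \in \critinstrewrite(\secretquery)$ is entailed by the fixed instance $\hide_\M(\critinst^\targetschema)$ under $\Sigma := \critinstrewriteptime(\sourcecon) \cup \critinstrewriteptime(\M) \cup \iscrit(\M)$. First I would normalise as in Corollary~\ref{cor:gtgdatomicupper}: by adding intermediate source predicates I can assume $\M$ consists of projection maps and each source $\ltgd$ has a body of bounded size, while keeping all the newly introduced constraints \emph{linear}, so that $\critinstrewriteptime$ still applies; the point of using $\critinstrewriteptime$ rather than $\critinstrewrite$ is that $\Sigma$ then has \emph{polynomial} size. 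Moreover every rule of $\Sigma$ is a guarded TGD: in each body the $B$-, $B_{e,f}$- or $\T$-atom contains all the variables, the extra $\iscrit$-conjuncts being unary and already covered. Thus the only super-polynomial ingredient is the number of query rewritings, at most $2^{|\secretquery|}$, each of polynomial size.

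For the $\exptime$ bound I would decide, for each $\secretquery_\annot$, the entailment $\hide_\M(\critinst^\targetschema) \wedge \Sigma \models \secretquery_\annot$ using the fine-grained complexity of $\gtgd$ entailment (building on \cite{tamingjournal}): because $\Sigma$ descends from $\ltgd$s, the relevant chase witnesses are \emph{path-shaped} rather than tree-shaped, so the witness/type space is single- rather than doubly-exponential in the arity, and each polynomially-sized instance is decidable in $\exptime$; disjoining over the $\le 2^{|\secretquery|}$ rewritings remains in $\exptime$. When the arity is bounded I would instead guess, nondeterministically, the annotation $\annot$ and hence $\secretquery_\annot$, and then decide the single bounded-arity entailment in $\np$, the linear origin of $\Sigma$ letting it inherit the $\np$ bound of bounded-arity $\ltgd$ CQ entailment, with the $\iscrit$-marking adding no branching; the two guesses compose into one $\np$ procedure. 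Finally, if the policy is atomic it has at most $\maxArity$ variables, hence at most $2^{\maxArity}=O(1)$ annotations; each $\secretquery_\annot$ is then a query of bounded size, for which bounded-arity entailment over $\Sigma$ is in $\ptime$, and enumerating the constantly many rewritings keeps the whole procedure in $\ptime$.

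The main obstacle is the complexity transfer in the middle step: the constraints produced by $\critinstrewriteptime$ are genuinely guarded, not linear, and for arbitrary guarded TGDs the known bounds are substantially higher ($\twoexp$ in combined complexity, $\exptime$ for bounded arity \cite{tamingjournal}), so the improvement to $\exptime$/$\np$/$\ptime$ must come from the linear origin of $\Sigma$. I would therefore have to show that these particular guarded constraints retain $\ltgd$-like behaviour. The key is that the $\iscrit$-atoms form a monotone \emph{unary} marking and the $B_{e,f}$-predicates only copy and restrict a single guard atom, so the chase stays path-shaped up to a polynomial ($\le \binom{\maxArity}{2}$) overhead from the $B_{e,f}$-chains; this is what lets the path-based analyses for $\ltgd$s carry over to yield the single-exponential witness bound and the $\np$/$\ptime$ membership. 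Proving this transfer cleanly, together with the bookkeeping that absorbs the $2^{|\secretquery|}$ rewritings into each target class (a disjunction for $\exptime$, a nondeterministic guess for $\np$, an enumeration for $\ptime$), is where the real work lies.
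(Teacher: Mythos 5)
Your scaffolding matches the paper's proof: reduce via Theorem~\ref{thm:fastcritinstrewriting}, observe that $\critinstrewriteptime(\sourcecon) \cup \critinstrewriteptime(\M)$ consists of guarded TGDs whose bodies are one guard atom plus unary $\iscrit$-atoms, and absorb the exponentially many rewritings $\secretquery_\annot$ by iteration (for $\exptime$), by nondeterministic guessing (for $\np$), and by constant enumeration (for $\ptime$ with atomic policies in bounded arity). The gap is exactly the step you yourself flag as ``where the real work lies'': establishing that entailment for this particular class of guarded TGDs drops to $\exptime$/$\np$/$\ptime$ instead of the generic guarded bounds of \cite{tamingjournal}. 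The paper does not prove this transfer from scratch; it invokes a known result --- the \emph{linearization technique} originating in \cite{gmp} and refined in \cite{antoinemichael} --- stating that query entailment for non-full $\incd$s together with guarded TGDs over a \emph{fixed side signature} (here, the single unary predicate $\iscrit$) is in $\exptime$ in general, in $\np$ for fixed arity, and in $\ptime$ for fixed arity with atomic queries. Without citing or reproving this, all three of your headline bounds are unsupported.

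Moreover, the intuition you offer in its place is materially incomplete. Saying that ``the chase stays path-shaped up to a polynomial overhead from the $B_{e,f}$-chains'' accounts only for the guard-copying predicates; the genuine difficulty is the interaction with the $\iscrit$ marking. An $\iscrit$ label can be derived on an element of an atom created arbitrarily earlier in the chase (via a rule of $\critinstrewriteptime(\M)$ followed by $\iscrit(\M)$), and this re-enables rule firings on that old atom, so the process is not a one-pass walk down paths. Taming it requires saturating the constraint set with derived \emph{full} rules of the form $R(\vec x) \wedge \bigwedge_{i \in P} \iscrit(x_i) \rightarrow \iscrit(x_j)$ and then folding the unary marking into expanded predicates $R^P$, which is precisely the content of linearization (see Appendix G of \cite{antoinemichaelarxiv}, and the paper's own appendix where this machinery is spelled out for multi-atom heads). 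Note also that in unbounded arity the linearized problem has an exponentially large instance, which the paper handles by a first-order query-rewriting and UCQ-evaluation argument; your one-line ``single-exponential witness space'' claim glosses over that accounting. In short: right skeleton, correct identification of the missing lemma, but the lemma itself --- which carries the entire technical content of the theorem --- is left unproved.
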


\begin{proof}
It is sufficient to get an $\exptime$ algorithm for the entailment problem produced by Theorem \ref{thm:fastcritinstrewriting}, since
then we can apply it to each $\secretquery_\annot$ in $\exptime$.
The constraints in $\critinstrewriteptime(\sourcecon) \cup \critinstrewriteptime(\M)$ are Guarded TGDs that are not necessarily
$\ltgd$s. 
But the  bodies of these guarded
TGDs consist of a guard predicate and atoms over a  fixed ``side signature'', 
namely the unary predicate $\iscrit$.
It is known that the query entailment
for $\incd$s  and guarded TGDs with a fixed side signature
is in {\Exptime}, with the complexity dropping to $\np$ (resp. $\ptime$) when the arity is fixed (resp.
fixed and the query is atomic) \cite{antoinemichael}.
\end{proof}

Can we do better than $\exptime$? We  can note that
when the constraints $\sigma \in \sourcecon$ are $\incd$s,  $\critinstrewrite(\sigma)$ consists only of $\sigma$;
similarly if a mapping $m \in \M$ is a projection, then $\critinstrewrite(m)$ consists only of $m$.
This gives us a good upper bound in one of the most basic cases:

\begin{corollary} \label{cor:ididupper} 
$\discloseClass(\incd, \projectionmap)$
is in $\pspace$. If further a bound is fixed on the arity of relations in the source schema, then the problem becomes $\np$, dropping
to $\ptime$ when the policy is atomic.
\end{corollary}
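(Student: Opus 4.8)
The plan is to exploit the observation recorded just before the statement: for $\incd$ source constraints and projection mappings, critical-instance rewriting leaves both the constraints and the mappings unchanged. Consequently the reduction of Theorem~\ref{thm:simplecritinstrewriting} turns disclosure into a family of CQ entailment problems over \emph{pure inclusion dependencies}, and the three bounds will follow from the known complexity of $\incd$ entailment ($\pspace$ in general, $\np$ under bounded arity, $\ptime$ under bounded arity with an atomic query), once the mismatch between the number of query rewritings and each complexity regime is handled.

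First I would dispatch the bounded-arity bounds by inheritance from Theorem~\ref{thm:linlinupper}. Every $\incd$ is an $\ltgd$ and every projection mapping is an atomic mapping, so every instance of $\discloseClass(\incd,\projectionmap)$ is also an instance of $\discloseClass(\ltgd,\atomicmap)$ with the same disclosure condition; hence the $\np$ bound for fixed arity and the $\ptime$ bound for fixed arity with atomic policy transfer verbatim. In the latter case the disjunction over $\critinstrewrite(\secretquery)$ does not hurt because an atomic policy over bounded-arity relations has boundedly many variables, hence only boundedly many annotations.

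The substantive part is the $\pspace$ bound for unbounded arity, improving on the $\exptime$ bound one would otherwise inherit. Here I would unfold the reduction of Theorem~\ref{thm:simplecritinstrewriting} and verify that every constraint it produces is an $\incd$: we have $\critinstrewrite(\sourcecon)=\sourcecon$ and $\critinstrewrite(\M)=\M$ by the observation above, the projection mappings $R(\vec x)\rightarrow\T(\vec x)$ are themselves $\incd$s, and each rule $\T(x_1\ldots x_n)\rightarrow\iscrit(x_i)$ of $\iscrit(\M)$ is an $\incd$ as well. Thus, for each annotation $\annot$, deciding whether $\hide_\M(\critinst^\targetschema)$ entails $\secretquery_\annot$ is a polynomial-size instance of CQ entailment over inclusion dependencies, a problem in $\pspace$ \cite{antoinemichael}.

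It then remains to absorb the exponentially many rewritings in $\critinstrewrite(\secretquery)$. Each annotation is merely a subset of the variables of $\secretquery$, hence described in polynomial space, so I would enumerate the annotations with a polynomial counter, run the $\pspace$ entailment test on each $\secretquery_\annot$, and accept if one succeeds, reusing the workspace across iterations; closure of $\pspace$ under this exponential disjunction gives the bound. I expect the main obstacle to be the verification in the preceding step: one must confirm that not only the source constraints but also the mapping constraints and the auxiliary $\iscrit$ rules remain genuine inclusion dependencies, so that the entailment instances land in the pure-$\incd$ regime ($\pspace$) rather than the guarded-TGD-with-side-signature regime ($\exptime$) that drives the weaker bounds of Corollary~\ref{cor:fgtgdupper} and Theorem~\ref{thm:linlinupper}. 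The remaining steps are a direct appeal to Theorem~\ref{thm:linlinupper} and a routine $\pspace$-closure argument.
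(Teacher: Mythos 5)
Your proposal is correct, and its substantive core coincides with the paper's proof: the paper likewise observes that $\critinstrewrite(\sigma)=\{\sigma\}$ for $\incd$s and $\critinstrewrite(m)=\{m\}$ for projection maps, so that Theorem~\ref{thm:simplecritinstrewriting} yields, per annotation, a polynomial-size CQ entailment problem over pure $\incd$s (note that the $\iscrit(\M)$ rules and the mapping rules are themselves $\incd$s, exactly as you verify), and it absorbs the exponentially many annotations by guessing one inside a $\pspace$ computation --- equivalent, via $\textsc{NPSpace}=\pspace$, to your deterministic enumeration with reused workspace. Where you diverge is the bounded-arity cases: the paper does not inherit them from Theorem~\ref{thm:linlinupper} but reads them off the same $\incd$ entailment instances, invoking \cite{johnsonklug} for all three regimes at once ($\pspace$ in general, $\np$ for bounded arity, $\ptime$ for bounded arity and atomic queries); your inheritance argument is valid since $\incd\subseteq\ltgd$ and $\projectionmap\subseteq\atomicmap$ and Theorem~\ref{thm:linlinupper} is established independently, but it routes these easy cases through the heavier linearization machinery behind that theorem, whereas the paper's direct appeal keeps the corollary self-contained and makes clear that nothing beyond classical $\incd$ reasoning is needed. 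One small attribution point: the $\pspace$ bound for CQ entailment under pure $\incd$s is due to \cite{johnsonklug}, not \cite{antoinemichael}, which the paper uses only for the guarded-TGD-with-side-signature regime of Theorem~\ref{thm:linlinupper}; the mathematical claim you rest on is nonetheless true.
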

\begin{proof}
Our algorithm will guess a $\secretquery_\annot$ in $\critinstrewrite(Q)$ and checks the entailment  of Theorem \ref{thm:simplecritinstrewriting}.
This gives  an  entailment problem for $\incd$s, known to be in $\pspace$ in general,  in $\np$ for bounded arity,
and in $\ptime$ for bounded arity and atomic queries \cite{johnsonklug}.
\end{proof}

\subsection{Obtaining Tractability}
Thus far we have seen cases where the complexity drops to $\pspace$ in the general case and $\np$ in the bounded
arity case, and $\ptime$ for atomic queries.   We now present a case where we obtain tractability for arbitrary
queries and arity. Recall that a $\uid$ is an $\incd$ where
at most  one variable  is exported. They are actually quite common, capturing
referential integrity when data is identified by a single attribute. We can show that restricting
to $\uid$s while having only projection maps leads to  tractability:
\begin{theorem} \label{thm:uidptime} 
$\discloseClass(\uid, \projectionmap)$
is in $\ptime$.
\end{theorem}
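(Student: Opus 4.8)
The plan is to instantiate the reduction of Theorem~\ref{thm:simplecritinstrewriting} and then build a polynomial-time algorithm for the resulting entailment problem, exploiting the special shape of the $\uid$ chase. First I would specialize the reduction. Since $\uid$s are $\incd$s we have $\critinstrewrite(\sigma)=\{\sigma\}$ for every $\sigma\in\sourcecon$, and since the maps are projections $\critinstrewrite(\M)=\M$; thus Theorem~\ref{thm:simplecritinstrewriting} says that a disclosure exists exactly when some $\secretquery_\annot\in\critinstrewrite(\secretquery)$ is entailed by $\hide_\M(\critinst^\targetschema)$ under $\sourcecon\cup\M\cup\iscrit(\M)$. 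Next I would eliminate the global predicates: composing each projection rule $R(\vec x)\rightarrow \T(\ldots)$ with the rules of $\iscrit(\M)$ leaves, for each source relation $R$ and each position $j$ of $R$ that is exposed by a map, a single ``exposing'' rule $R(\vec x)\rightarrow\iscrit(x_j)$, after which the global atoms play no further role and may be discarded. This turns the task into a purely source-side problem: decide, over $\hide_\M(\critinst^\targetschema)$ -- a small ``star'' of witness facts all sharing $\critelement$ on their exposed positions and carrying fresh distinct nulls elsewhere -- whether some $\secretquery_\annot$ is entailed under a set of $\uid$s together with the unary exposing rules for $\iscrit$.

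The core of the argument is a polynomial-time procedure for this entailment, for which I would use the regular, tree-like shape of the $\uid$ chase. Because a $\uid$ exports a \emph{single} variable, every chase fact is attached to its parent through exactly one shared element, so the chase is a forest of facts with single-element separators, and the subtree generated below an element depends only on its ``type'' -- the closure of its birth position under the $\uid$s. As there are only polynomially many such types, the infinite chase is faithfully summarized by a polynomial-size type graph whose nodes also record which positions are exposed, hence $\iscrit$. I would then show that any match of a CQ decomposes along these single-element cuts into a bounded central part landing in the finite ``core'' around $\critelement$ together with tree-shaped tentacles descending into the regular part; this induces a tree decomposition of the matched query with single-variable separators, which a bottom-up dynamic program over the type graph evaluates in polynomial time. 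This single-separator phenomenon is precisely what distinguishes $\uid$s from general $\incd$s, where child facts may share several elements with their parents and only the $\pspace$ bound of Corollary~\ref{cor:ididupper} is available.

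The main obstacle, and the last step, is to handle the exponentially many annotations in $\critinstrewrite(\secretquery)$ without enumerating them while keeping the dynamic program polynomial. Splitting the occurrences of an annotated variable relaxes its joins but forces each occurrence onto an $\iscrit$-element, so no single annotation dominates and a genuine search is required. I would fold this search into the dynamic program: when it matches a query variable it chooses \emph{locally} either to keep its occurrences identified, or -- provided the variable is sent to an $\iscrit$-element -- to match its occurrences independently among $\iscrit$-elements, exactly mirroring membership in $\annot$. Since these choices are independent across variables and the tentacle structure confines each variable's occurrences to a polynomially described family of $\iscrit$-elements, the augmented program still runs in polynomial time, and correctness follows from Theorem~\ref{thm:simplecritinstrewriting}. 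Verifying that this decomposition of matches is valid and compatible with the per-variable annotation choice -- so that high-``treewidth'' fragments of $\secretquery$ cannot force super-polynomial work -- is where the real care is needed.
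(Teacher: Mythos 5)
Your setup is fine as far as it goes: for $\uid$ constraints and projection maps the critical-instance rewritings are indeed trivial ($\critinstrewrite(\sigma)=\{\sigma\}$ and $\critinstrewrite(\M)=\M$), and composing the projection rules with $\iscrit(\M)$ into source-level rules $R(\vec x)\rightarrow\iscrit(x_j)$ so that the global atoms can be discarded is legitimate. But the two steps you yourself flag as delicate are precisely where the difficulty of the theorem lives, and neither is established in the form you give. First, a bottom-up dynamic program over a ``type graph'' of a tree-like chase does not by itself decide CQ entailment in polynomial time: homomorphism into a fixed target of bounded treewidth is NP-hard in general (matching into a triangle is 3-coloring), and your chase --- a star of full-arity witness facts glued along $\critelement$, with trees hanging off, and with annotations allowing the occurrences of a variable to scatter over distinct $\iscrit$-elements --- offers many opportunities for a cyclic query to fold onto itself. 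Saying that a match ``induces a tree decomposition of the matched query'' conflates the image of a match (which is tree-like) with the query itself (which need not be). The paper closes exactly this hole with dedicated query-simplification machinery: after reducing to a binary schema it establishes a \emph{unique adjoining label} property of the chase forest, shows that every connected \emph{non-forking} query (forking pairs having been merged away) admits only injective embeddings, concludes that an entailed query must have an acyclic CQ-graph, and only then runs a tree-arrangement dynamic program whose atomic steps call the $\ptime$ $\uid$ inference procedure of \cite{uidptimeimp}. Without an analogue of this injectivity/acyclicity argument your DP has no correctness proof.

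Second, folding the annotation choice into ``local'' per-variable decisions of the DP is unsubstantiated --- annotating a variable changes the connectivity of the query, hence the very decomposition your DP would recurse on --- and the paper's proof shows this battle need not be fought at all: it does not work with annotations. Instead it returns to Theorem~\ref{thm:critinst} and performs a static \emph{visible-position} analysis: a position of a source relation is visible if it can reach a global-schema position through the dependencies, and in every possible world each element sitting in a visible position must equal $\critelement$. Truncating every source relation to its invisible positions (Proposition~\ref{prop:reduceuid}) converts the whole problem into classical entailment of a single, deterministically rewritten query over the one-fact instance $\{\iscrit(\critelement)\}$ with only $\uid$s; in that encoding $\critelement$ is the \emph{only} element satisfying $\iscrit$, so splitting the occurrences of a variable (i.e., annotating it) can never change whether the query is entailed, and the exponential annotation search disappears rather than having to be tamed inside a dynamic program. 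To rescue your route you would have to prove both the acyclicity-type lemma and the soundness of the annotation folding; the truncation argument is the cleaner path and is the one the paper takes.
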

\begin{proof}
The first step is to  refine the reduction of Theorem~\ref{thm:simplecritinstrewriting} 
to get an entailment problem with only $\uid$s, over an instance consisting of a single
unary fact $\iscrit(\critelement)$. The main issue is  avoid the constraints 
in $\viewcon(\M)$, corresponding to the mapping rules. The intuition for this is that 
on 
$\critinst^\targetschema$, the  only impact of the backward and forward implications of $\viewcon(\M)$ 
is  to create new facts  among the source relations.  In
 these new facts only 
 $\critelement$, is propagated. 
Rather than creating $\singleconstantrule$s (implicitly what happens in the $\hocwq$ reduction) or generating classical constraints
where the impact of the equalities are ``baked in'' (as in the critical-instance rewritings of Theorems \ref{thm:simplecritinstrewriting}
and \ref{thm:fastcritinstrewriting}), we truncate
 the source relations to the positions where non-visible elements occur, while
generating $\uid$s on these truncated relations that simulate the impact of back-and-forth using $\viewcon(\M)$.

The second step is to show that query entailment with $\uid$s over the instance consisting only
of $\iscrit(\critelement)$ is in $\ptime$. This can be seen as an extension of the
$\ptime$  inference algorithm for $\uid$s \cite{uidptimeimp}. The idea behind this result is to analyze
the classical ``chase procedure'' for query entailment with TGDs \cite{faginetaldataex}.
In the case of $\uid$s over a unary fact, the shape of the chase model is very restricted; roughly speaking, 
it is a tree where only a single fact connects two values. Based on this, we can simplify the query
dramatically, making it into an acyclic query where any two variables co-occur in at most one predicate.
Once query simplification  is performed, we can reduce query entailment to polynomial many
entailment problems involving individual atoms in the query. This in turn can be solved using
the $\uid$ inference procedure of \cite{uidptimeimp}.
\end{proof}

\renewcommand*{\arraystretch}{1.1}

\begin{table*}[!h]
  \centering
\resizebox{\textwidth}{!}{ 
  \begin{tabular}{c|llll|llll}
\multicolumn{1}{c}{}    & \multicolumn{4}{c}{Unbounded arity} & \multicolumn{4}{c}{Bounded arity}  \\[-0.5em]
  \backslashbox{$\sourcecon$}{$\M$}   
  &      {$\projectionmap$}     & {$\atomicmap$} & {$\guardedmap$}   & {$\cqmap$}
  &      {$\projectionmap$}     & {$\atomicmap$} & {$\guardedmap$}   & {$\cqmap$}  \\[-0.3em] \cline{2-9}

  $\incd$    
  & {\Pspace}$_{L=\qentail}^{U={C\ref{cor:ididupper}}}$ 
  & \exptime$_{L=T\ref{thm:exptimelowergeneral}}$ 
  & \twoexptime$_{L=T\ref{thm:incdguardedlower}}$ 
  &  \twoexptime 
  & {\NP}$_{L=\qentail}$ 
  & \NP  
  & \exptime$_{L=T\ref{thm:incdguardedlower}}$  
  & \twoexptime$_{L=T\ref{thm:incdepcqmapboundedlower}}$
  \\
  $\ltgd$  
  & \exptime$_{L=T\ref{thm:exptimelowergeneral}}$  
  & \exptime$^{U=T\ref{thm:linlinupper}}$ 
  & \twoexptime 
  &  \twoexptime 
  & \NP  
  & {\NP}$^{U=T\ref{thm:linlinupper}}$  
  & \exptime 
  & \twoexptime
  \\
  $\gtgd$  
  & \twoexptime$_{L=T\ref{thm:incdguardedlower}}$ 
  & \twoexptime 
  & \twoexptime 
  &  \twoexptime 
  & \exptime$_{L=T\ref{thm:incdguardedlower}}$  
  & \exptime 
  &  \exptime$^{U=C\ref{cor:gtgdatomicupper}}$ 
  & \twoexptime
 \\
    $\fgtgd$ 
 & \twoexptime 
 & \twoexptime 
 & \twoexptime 
 & \twoexptime$^{U=C\ref{cor:fgtgdupper}}$
 & \twoexptime$_{L=\qentail}$
 & \twoexptime  
 & \twoexptime 
 &  \twoexptime$^{U=C\ref{cor:fgtgdupper}}$
 \\
\end{tabular}}
\caption{Complexity of disclosure:
  {\Pspace}$_{L=\qentail}^{U={C\ref{cor:ididupper}}}$ means the corresponding
  problem is $\Pspace$-complete, where the {U}pper bound is given by
  {C}orollary \ref{cor:ididupper} (U=C\ref{cor:ididupper}) and 
  the {L}ower bound is inherited from  entailment. We omit bounds
  inferred from inclusion ($\M$ or $\sourcecon$).}
\label{tab:summary}
\end{table*}

 \section{Lower Bounds} \label{sec:lower}
We now focus on providing lower bounds for $\discloseClass(\C, \mappingClass)$, showing in particular that the upper bounds
provided in Section \ref{sec:reduction} can not be substantially improved. For many
classes of constraints it is easy to see that  the complexity of
disclosure inherits the lower bounds for the classical entailment problem
for the class. From this we get a number of matching lower bounds; e.g. $\twoexp$ for $\gtgd$ constraints,
$\pspace$ for $\incd$ constraints.
But note that in some cases the  upper bounds we have provided for disclosure
in Section \ref{sec:reduction} are  higher than the complexity of entailment over the source constraints. For example,
for $\incd$s we have provided only a $\twoexp$  upper bound for guarded mappings (from Corollary
\ref{cor:fgtgdupper}),  and only an exponential bound for atomic mappings (from Theorem \ref{thm:linlinupper}).
This suggests that the form of the mappings influences the complexity as well, as we now show.

Most of our proofs for hardness above the entailment bound for source constraints
rely on the encoding of a Turing machine.  Source constraints are 
used to generate the underlying structures (tree of configurations, 
tape of a Turing machine) while mappings are used to ensure consistency (a universal configuration is accepting if and only if all its successor configurations are accepting, the content of the tape is consistently represented,...). To illustrate our approach, we sketch the proof of the following result.

\begin{theorem}
  \label{thm:incdguardedlower}
 $\discloseClass(\incd,\guardedmap)$
and $\discloseClass(\gtgd,\projectionmap)$
 are $\twoexp$-hard,
 and are $\exptime$-hard even in bounded arity.
\end{theorem}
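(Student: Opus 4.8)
The plan is to reduce from the word problem for alternating space-bounded Turing machines: from alternating \emph{exponential}-space machines (which characterise $\twoexp$) for $\twoexp$-hardness in the unbounded-arity case, and from alternating \emph{polynomial}-space machines (which characterise $\exptime$) for $\exptime$-hardness when the arity is bounded. Following the division of labour announced above, the source constraints generate the skeleton of the computation and the mappings enforce its local consistency. I first describe the reduction for $\discloseClass(\incd,\guardedmap)$ and then explain how to shift the guardedness into the source constraints to cover $\discloseClass(\gtgd,\projectionmap)$.

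First I would use $\incd$s to lay down, through the chase, the bare combinatorial skeleton. Because $\incd$s have existential heads that mint fresh elements, a few of them suffice to generate the branching of a configuration node into its successor configurations, the chain of tape cells inside each configuration, and a slot of $n$ address-bits attached to every cell. This produces, for each input, an (infinite) tree whose nodes are configurations of length $2^n$ with a bit-slot per cell. Crucially, $\incd$s are on their own too weak to \emph{verify} anything — entailment for them is only $\pspace$ — so they contribute structure but no checking: the correctness of the addressing, of the transitions, and of the tape contents is left open.

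The checking is delegated to the guarded mappings. By Theorem~\ref{thm:critinst} and the observation in the proof of Corollary~\ref{cor:fgtgdupper}, on the critical instance $\critinst^\targetschema$ each mapping rule $\phi(\vec x,\vec y)\rightarrow\T(\vec x)$ behaves as a single-constant-equality rule $\phi(\vec x,\vec y)\rightarrow\bigwedge_i x_i=\critelement$, i.e.\ as an equality-generating dependency whose \emph{guarded} body $\phi$ can inspect, within a single atom, all the cells that participate in one step (the content and address bits of a cell together with those of its neighbours, and the current head and state information). I would arrange the encoding so that a \emph{violation} of a legal transition, of the counter discipline, or of tape consistency between a configuration and a successor instantiates such a body; firing the induced equality then collapses the offending elements onto $\critelement$ and, by a short propagation, forces the policy $\secretquery$. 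Dually, acceptance is propagated up the tree (an existential node accepts if some child does, a universal node if all its children do), and $\secretquery$ is chosen so that it is \emph{disclosed} — forced on every possible source instance — exactly when the root configuration is accepting, hence exactly when the machine accepts. The width of the guard atom is proportional to $n$, which is precisely why the exponential-width tape, and thus $\twoexp$-hardness, needs unbounded arity; passing to an alternating \emph{polynomial}-space machine shrinks addresses to $O(\log n)$ bits and lets bounded-arity guards inspect the relevant cells, yielding $\exptime$-hardness in bounded arity. For $\discloseClass(\gtgd,\projectionmap)$ the mappings are mere projections and can no longer carry these guards, so I would instead fold each consistency check into a \emph{guarded source TGD} with the same large-arity guard atom in its body, using the projection maps only to expose the critical facts needed to trigger it; the native $\gtgd$ guard then plays the role previously played by the mapping.

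The main obstacle is this last alignment. The delicate point is to make the universal-over-possible-worlds semantics of disclosure track the alternation of the machine faithfully: the freedom the adversary has in choosing a possible source instance must match the branching at universal and existential nodes, and the encoding must be \emph{robust}, so that a malformed instance — one coding no genuine run — can neither satisfy $\secretquery$ everywhere (a spurious disclosure) nor escape it. Getting the equality-generating behaviour induced by the mappings to fire on exactly the configurations that encode an illegal step, and nowhere else, is where the bulk of the technical work lies.
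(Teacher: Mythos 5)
Your overall architecture coincides with the paper's: reduce from an alternating $\expspace$ machine (unbounded arity) and an alternating machine with polynomially many cells (bounded arity), let the $\incd$s chase out the tree of configurations with fresh nulls for cell contents, read the guarded mapping bodies as $\singleconstantrule$s of the form $\phi(\vec x)\rightarrow x_i=\critelement$ over the critical instance, propagate acceptance up the tree, and obtain the $\gtgd$/$\projectionmap$ case by folding the guarded bodies into source constraints so that only projections remain. However, there is a genuine flaw in the polarity of your central mechanism. You design the equality rules to fire on \emph{violations} -- an illegal transition or an inconsistent tape ``instantiates such a body'' and the resulting collapse ``forces the policy.'' This cannot work, because the forcing mechanism available here is positive and monotone: a mapping body is a CQ, so it can only detect the \emph{presence} of $\critelement$-markings, never their absence. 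Consider the possible world (equivalently, the visible chase $\visiblechase(\sourcecon,\M)$, which by Proposition~\ref{prop:visiblechase} is the canonical counterexample) in which the tape contents of all successor configurations are simply left as the fresh nulls created by the $\incd$s. No violation body matches there, no equality fires, the accepting leaf configurations are never represented, acceptance never propagates, and the policy fails -- \emph{even when the machine accepts}. So your reduction declares non-disclosure for accepting machines. The paper avoids this by using the $\singleconstantrule$s \emph{constructively} rather than as checks: the rule $\T^\alpha_{i,j,k\rightarrow w}$ has a body that reads the $\critelement$-markings of three consecutive cells of the \emph{parent} configuration and a head variable sitting at position $w$ of the \emph{child's} content vector, so firing it \emph{computes} the successor tape by unifying that null with $\critelement$. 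Inductively, every possible world is then forced (by the exact-image condition) to contain the fully marked run, and the visible chase satisfies the policy exactly when the machine accepts. The same constructive polarity is used for initialization and for the consistency of the redundant prev/current/next representation.

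A secondary problem is your bounded-arity adaptation: shrinking addresses to $O(\log n)$ bits does not help, since an atom carrying $\Theta(\log n)$ address positions still has unbounded arity as the input grows. The paper's bounded-arity encoding drops cell addresses altogether and instead introduces polynomially many \emph{predicates} $\data^1,\ldots,\data^k$ of fixed arity, one per tape cell, with the transition rules wired between $\data^{l-1}$, $\data^l$, $\data^{l+1}$; that is the step you would need to make the $\exptime$-hardness claim go through.
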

\begin{proof}
Recall that Theorem \ref{thm:critinst} relates disclosure to a $\hocwq$ problem on $\critinst^\targetschema$. Also recall from Section \ref{sec:reduction} the intuition that such a problem amounts to a classical entailment problem 
for a CQ  over a very simple
instance,  using the source dependencies and $\singleconstantrule$s: of the
form  $\phi(\vec{x}) \rightarrow x=\critelement$, where $\phi$  will be the body of a mapping. 
We will sketch how
 to simulate an alternating $\expspace$ Turing machine $\M$ using a $\qentail$ problem
using $\incd$s  and \emph{guarded} $\singleconstantrule$s. This can in turn be simulated using our $\hocwq$ problem.

We first build a tree of configurations using $\incd$s, such that each node has a type (existential or universal) and is the parent of two nodes (called $\alpha$-successor and $\beta$-successor) of the opposite type. This tree structure is represented, together with additional information, by atoms such as:
\[\children_\forall(c,c_\alpha,c_\beta,ac,ac_\alpha,ac_\beta,\vec{y_0},\vec{y_1},r).\]
Intuitively, this states that $c$ is a universal configuration, parent of $c_\alpha$ and $c_\beta$. $ac$ (resp. $ac_\alpha$, resp. $ac_\beta$) is the acceptance bit for $c$ (resp. $c_\alpha$, resp. $c_\beta$), which will be made equal to $\critelement$ if and only if the configuration represented by $c$ (resp. $c_\alpha$, resp. $c_\beta$) is accepting. $\vec{y_0},\vec{y_1}$ will be used to represent cell addresses, while $r$ is the identifier of the root of the configuration tree. The initial instance is such an atom, where the first position and the last position are the same constant, $\vec{y_0}$ is a vector of $n$ $0$'s, $\vec{y_1}$ is a vector of $n$ $1$'s, and all other arguments are distinct constants. 

We use $\singleconstantrule$s  to propagate acceptance information up in the tree. For instance, a universal configuration is accepting if both its successors are accepting. This is simulated by the following $\singleconstantrule$:
\[\children_\forall(c,c_\alpha,c_\beta,ac_c,\critelement,\critelement,\vec{y_0},\vec{y_1},r) \rightarrow ac_c = \critelement.\]

To simulate $\M$, we need access to an exponential number of cells for each configuration. We identify a cell by the configuration it belongs to and an address, which is a vector, generated by $\incd$s, of length $n$ whose arguments are either $0$ or $1$. 
The atom for representing a cell is thus
$\data(c_p,c,\vec{addr},\vec{v},\vec{v}_{prev},\vec{v}_{next})$,
where $c_p$ is the parent configuration of $c$, which is the configuration to which the represented cell 
belongs, $\vec{addr}$ is the address of the cell, $\vec{v}$ its content, $\vec{v}_{prev}$ the content of the previous 
cell, and $\vec{v}_{next}$ the content of the next cell. Note that this representation is redundant, and we need to use 
$\singleconstantrule$s  to ensure its consistency.

Note that $\vec{v}$ is a tuple of length the size of $(\letters\cup\{\flat\})\times(\states\cup\{\bot\})$. Each position corresponds to an element of that set, and the content of a represented cell is the element which corresponds to the unique position in which $\critelement$ appears. 

We now explain how to build the representation of the initial tape, and simulate the transition function. 
Both steps are done by unifying some nulls with $\critelement$. W.l.o.g., we assume that the initial tape 
contains a $l$ in the first cell, on which points the head of $\M$ in a state $s$, and that $(l,s)$ corresponds to the first bit of $\vec{v}$. We thus use a $\singleconstantrule$ to set this bit to 
$\critelement$ in the first cell of the first configuration. We then set (w.l.o.g.) the second bit of all the other cells of that 
configuration to $\critelement$ (assuming this represents $(\flat,\bot)$).

To simulate the transitions, we note that the content of a cell in a configuration depends only on the content of 
the same cell in the parent configuration, along with the content of parent's previous and next cells.
 We thus add a $\singleconstantrule$ that checks
 for the presence of $\critelement$ specifying the content of three consecutive cells in a configuration, and unify a null with $\critelement$ to specify the content of the corresponding cell of a child configuration.

The argument above uses $\incd$s and $\guardedmap$s, but we can simplify the mappings to $\projectionmap$ using
$\gtgd$s. \end{proof}

A simple variation of the construction used for $\pspace$-hardness of entailment
with $\incd$s \cite{casanova} shows that our upper bounds for $\incd$ source constraints and atomic maps
are tight. The case of $\ltgd$ source constraints and projection maps can be done via reduction
to that of  $\incd$ source constraints and atomic maps:

\begin{theorem} \label{thm:exptimelowergeneral}
$\discloseClass(\incd,\atomicmap)$ 
and $\discloseClass(\ltgd, \projectionmap)$
are both $\exptime$-hard. 
\end{theorem}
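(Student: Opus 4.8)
The plan is to treat the two problems separately. For $\discloseClass(\incd,\atomicmap)$ I would adapt the $\pspace$-hardness construction of \cite{casanova} for $\incd$ entailment, pushing it up to $\exptime$ by exploiting the equality rules that atomic mappings contribute in the disclosure setting; for $\discloseClass(\ltgd,\projectionmap)$ I would reduce from the first problem. Since $\exptime$ coincides with alternating polynomial space, both constructions ultimately simulate an alternating $\pspace$ Turing machine.

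For the first part, recall from Theorem~\ref{thm:critinst} that disclosure reduces to an $\hocwq$ instance on $\critinst^\targetschema$, which (as exploited in the proof of Theorem~\ref{thm:incdguardedlower}) behaves like a $\qentail$ problem whose constraints are the source $\incd$s together with $\singleconstantrule$s $\phi(\vec x)\rightarrow x_i=\critelement$ read off from the mappings; when the mappings are atomic, the body $\phi$ of each such rule is a \emph{single atom}. I would therefore replay the alternating-machine simulation of Theorem~\ref{thm:incdguardedlower} ``one exponential lower'': use $\incd$s to grow the alternating tree of configurations (each node existential or universal, with an $\alpha$- and a $\beta$-successor), and use the $\singleconstantrule$s to install the initial tape, enforce the transition relation, and propagate the acceptance bit upward, so that a universal node becomes $\critelement$-accepting exactly when both successors are and an existential node exactly when one of them is. The decisive point, and the place where the construction must be engineered carefully, is that a polynomial-space configuration fits inside a single atom of polynomial arity; hence every transition- and consistency-check inspects only that one atom, so all $\singleconstantrule$s have single-atom bodies and are realizable by \emph{atomic} mappings. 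This is precisely what separates the present case from Theorem~\ref{thm:incdguardedlower}, where exponentially large configurations spread cells across many atoms and thereby force the conjunctive, guarded bodies of $\guardedmap$. Taking the policy to assert that the root configuration is accepting, the instance discloses iff the machine accepts, and Theorem~\ref{thm:linlinupper} supplies the matching upper bound by inclusion. I expect this coordination to be the main obstacle: $\incd$s can only generate fresh nulls and can never test an equality, so the entire burden of ``checking'' an alternating step has to be carried by single-atom $\singleconstantrule$s, and it is the packing of a whole configuration into one atom that makes this feasible.

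For the second part I would reduce $\discloseClass(\incd,\atomicmap)$ to $\discloseClass(\ltgd,\projectionmap)$. The only mismatch is that an atomic mapping body $R(\vec t)\rightarrow\T(\vec x)$ may repeat variables or project some out, which a projection map forbids. For each such mapping I would introduce a fresh source relation $S_m$ of arity equal to the number of distinct exported variables, replace the mapping by the projection map $S_m(\vec x)\rightarrow\T(\vec x)$, and tie $S_m$ to $R$ with the two source rules $R(\vec t)\rightarrow S_m(\vec x)$ and $S_m(\vec x)\rightarrow\exists\vec z~R(\vec t)$. Both are $\ltgd$s — single-atom bodies, but carrying exactly the repetitions and projections of $\vec t$ that an $\incd$ rules out — and together they force $S_m$ to contain precisely the tuples the atomic mapping would have exported, so the global image, and hence realizability and disclosure of the (unchanged) source policy, is preserved; the original $\incd$s survive unchanged as $\ltgd$s. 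The only point needing care here is that the witness introduced by the backward rule does not accidentally satisfy the policy or an $\incd$, which is routine in our setting since distinct constants are never assumed unequal.
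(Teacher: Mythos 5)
Your proposal is correct and follows essentially the same route as the paper: for $\discloseClass(\incd,\atomicmap)$ the paper likewise adapts the \cite{casanova} construction to an alternating Turing machine confined to $|x|$ cells, packing each configuration into one polynomial-arity atom so that $\incd$s generate the tree of configurations while single-atom mapping bodies (the induced $\singleconstantrule$s) handle initialization and acceptance propagation, with the same policy and the same contrast to Theorem~\ref{thm:incdguardedlower} (the only cosmetic difference is that the paper lets the $\incd$s \emph{copy} the tape forward without checking applicability, relying on the Casanova argument that spuriously generated configurations never reach acceptance, whereas you place the check in the equality rules). For $\discloseClass(\ltgd,\projectionmap)$ the paper also reduces from the first problem by flattening each atomic map into a projection map over a fresh predicate tied to the original relation by linear rules, and your two-way pair of $\ltgd$s is exactly the (safe) form used in the paper's general Proposition~\ref{prop:reduceprojmap}, so the proposal matches the paper's argument.
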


The above results, coupled with  argument that the lower bounds for entailment are inherited by disclosure, show tightness of all upper bounds from Table \ref{tab:summary} in the unbounded arity case.
Another variation  of the encoding in Theorem \ref{thm:incdguardedlower}
 shows that with no restriction on the mappings 
one can not do better than the  $\twoexp$ upper bound
of Corollary \ref{cor:fgtgdupper} even for $\incd$ constraints in bounded arity, 
\begin{theorem} \label{thm:incdepcqmapboundedlower}
  $\discloseClass(\incd,\cqmap)$ 
   is $\twoexp$-hard  in bounded arity.
\end{theorem}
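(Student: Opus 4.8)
The plan is to adapt the alternating-machine encoding of Theorem~\ref{thm:incdguardedlower}, trading the unbounded arity used there for the extra expressive power of non-guarded CQ mappings. As in that proof, I first invoke Theorem~\ref{thm:critinst} to reduce disclosure to a $\hocwq$ problem on $\critinst^\targetschema$, which behaves like a classical $\qentail$ problem over $\incd$ source constraints together with $\singleconstantrule$s of the form $\phi(\vec x)\rightarrow x=\critelement$, where $\phi$ is the body of a mapping. The crucial difference is that for $\M\in\cqmap$ the body $\phi$ may be an arbitrary (non-guarded) conjunctive query, so the induced $\singleconstantrule$s are non-guarded. I will simulate an alternating $\expspace$ Turing machine, which captures $\twoexp$; the only obstacle is that in Theorem~\ref{thm:incdguardedlower} a cell address was stored as a length-$n$ vector inside high-arity atoms such as $\children_\forall$ and $\data$, and this is exactly what bounded arity forbids.

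To remove the dependence on arity, I keep the configuration tree generated by $\incd$s essentially as before, but represent addresses externally rather than inside a single tuple. Using $\incd$s I build a binary \emph{address tree} of polynomial depth $n$, branching at each level into a $0$-child and a $1$-child, so that its $2^n$ leaves enumerate all cell addresses; each tree edge is a bounded-arity fact recording a bit value and a level. A cell is then a bounded-arity fact linking a configuration to an address leaf, with its content encoded (as before) by the single position in a fixed-length content vector at which $\critelement$ occurs. Because an address is now a path of $n$ facts, an individual address can be \emph{named} by a CQ of length $O(n)$ spelling out its bits, and two addresses can be \emph{compared} by a CQ that joins their two paths position by position. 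This is precisely the step where non-guardedness is essential: such path-joining CQs have no single atom containing all their variables, so they are available for $\cqmap$ but not for $\guardedmap$ or $\projectionmap$, which is why the bound jumps from $\exptime$ (Theorem~\ref{thm:incdguardedlower}) to $\twoexp$.

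The $\singleconstantrule$s then carry out the simulation. Acceptance is propagated up the configuration tree exactly as in Theorem~\ref{thm:incdguardedlower} (a universal node becomes accepting once both successors are, an existential node once one successor is), and the initial tape is installed by a rule that sets the appropriate content bit of the first cell of the root configuration; disclosure then holds iff the root acceptance bit is forced equal to $\critelement$. The transition function is the delicate part: the content of a cell at address $\vec a$ in a child configuration is determined by the contents at addresses $\vec a-1,\vec a,\vec a+1$ in the parent. With bounded arity I cannot read these three addresses off one tuple, so I express the rule body as a non-guarded CQ over the address tree that walks the relevant paths and enforces, bit by bit, both the ``same address'' relation and the binary-increment relation between $\vec a$ and $\vec a\pm 1$, with the head unifying a content null with $\critelement$ as usual.

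The main obstacle, and the heart of the construction, is encoding this binary-increment relation on $2^n$ addresses by a polynomial-size non-guarded CQ while guaranteeing \emph{soundness}: the address-path joins must force a unique, correct correspondence of cells between parent and child, so that no spurious unification with $\critelement$ can make the machine appear to accept when it does not, while genuine accepting runs do propagate. Once this increment gadget is verified, the remainder mirrors Theorem~\ref{thm:incdguardedlower}, and since all relations used have bounded arity we conclude that $\discloseClass(\incd,\cqmap)$ is $\twoexp$-hard in bounded arity.
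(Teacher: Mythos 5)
Your proposal matches the paper's proof essentially step for step: the paper likewise replaces the high-arity address vectors of Theorem~\ref{thm:incdguardedlower} by per-configuration binary address trees of depth $n$ generated by $\incd$s (predicates $\addr_1,\dots,\addr_n$), encodes cell contents at the leaves via the position of $\critelement$ in a fixed-length vector, and uses non-guarded CQ mapping bodies that walk and join these paths to enforce same-address correspondence and the binary-increment relation (matching a common prefix followed by $10^j$ versus $01^j$). The only organizational difference is that the paper retains the redundant $\data^p/\data^c/\data^n$ representation so that transition mappings need only same-address joins, isolating the increment gadget in separate consistency mappings $\T_{prev~j}$, whereas you fold the increment directly into the transition rule bodies --- the same idea, factored slightly differently.
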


%
The  theorem above, again combined with results showing that the lower bounds for
entailment are inherited, suffice
to show tightness of all upper bounds from Table \ref{tab:summary}
in the case of bounded arity.

We can also show that our tractability result for $\uid$ constraints and projection maps
does not extend when either the maps or the constraints are broadened.
Informally, this is because with these extensions we can generate an instance
on which CQ querying  is $\np$-hard.



\section{Related Work} \label{sec:related}

Disclosure analysis has been approached from many angles.
We do not compare with the vast amount of work
that  analyzes probabilistic mechanisms for releasing information,  providing probabilistic guarantees on disclosure \cite{DBLP:conf/icalp/Dwork06}.
Our work focuses on the impact of reasoning on mapping-based mechanisms used in knowledge-based information integration, which are
deterministic; thus one would prefer, and can hope for, deterministic guarantees on disclosure.
We deal here with the \emph{analysis} of disclosure, while there is a complementary literature on  how to \emph{enforce} privacy
\cite{DBLP:journals/ijisec/BiskupW08,DBLP:journals/tkde/BonattiKS95,DBLP:conf/semweb/BonattiS13,DBLP:journals/tdp/StuderW14}.
   
The problem of whether information is disclosed on a particular instance (variation of $\hocwq$ introduced in Section \ref{sec:reduction})
has been studied in both the knowledge representation \cite{hybrid1,hybrid2,hybrid3,ahmetajclosed,amendolaclosed}
and database community
\cite{abiteboulduschka}.
The corresponding schema-level problem was defined in  \cite{bbcplics}, which
allows arbitrary constraints  relating the source and the global schema. However,   results are provided only for constraints in guarded logics, which
does not subsume the case of mappings given here. 
Our results clarify some issues in  prior work: \cite{bbcplics} claimed that
disclosure with $\incd$ source constraints and atomic maps is in $\pspace$, while our Theorem
\ref{thm:exptimelowergeneral}
 shows that the problem is $\exptime$-hard.
Our notion  of disclosure 
corresponds to the complement of \cite{bckjournal}'s ``data-independent compliance''. The formal
framework of \cite{bckjournal}  is orthogonal to ours. On the one hand, source constraints are absent; on the other hand a more powerful mapping
language is considered, with existentials in the head of rules, while
constraints on the global schema, given by ontological axioms, are now allowed.  
\cite{bckjournal} assume
that the attacker has an interface for posing queries against the global schema, with the queries being answered under entailment semantics. 
In general, the semantic information on the global schema makes disclosure harder, since the outputs of different mapping
rules may be indistinguishable by an attacker who only sees the results of reasoning.
In contrast, source constraints make disclosure of secrets easier, since they provide additional information to the attacker.

\section{Summary and Conclusion} \label{sec:conc}
We have isolated the complexity of information disclosure from a schema in the presence
of commonly-studied sets of source constraints. A summary of many
combinations of mappings $\M$ and source constraints $\sourcecon$ is given
in Table~\ref{tab:summary}: note that \emph{all problems are complete for the complexity classes
listed}. We have shown tractability in the case of $\uid$s and projection maps (omitted in
the tables), while showing that lifting the restriction leads to intractability. But we leave open  a
finer-grained analysis of complexity for frontier-one constraints with more general mappings.
Our results depend on a fine-grained analysis of reasoning with TGDs and $\singleconstantrule$s, a topic
we think is of independent interest.

\section* {Acknowledgements}

This work was partially funder by CNRS Momentum project ``Managing Data without Leak''.

\bibliographystyle{named}
\bibliography{bib}

\begin{thebibliography}{}

\bibitem[\protect\citeauthoryear{Abiteboul and
  Duschka}{1998}]{abiteboulduschka}
Serge Abiteboul and Olivier Duschka.
\newblock Complexity of answering queries using materialized views.
\newblock In {\em PODS}, pages 254--263, 1998.

\bibitem[\protect\citeauthoryear{Ahmetaj \bgroup \em et al.\egroup
  }{2016}]{ahmetajclosed}
Shqiponja Ahmetaj, Magdalena Ortiz, and Mantas \v{S}imkus.
\newblock Polynomial datalog rewritings for expressive description logics with
  closed predicates.
\newblock In {\em IJCAI}, pages 878--885, 2016.

\bibitem[\protect\citeauthoryear{Amarilli and Benedikt}{2018a}]{antoinemichael}
Antoine Amarilli and Michael Benedikt.
\newblock {When Can We Answer Queries Using Result-Bounded Data Interfaces?}
\newblock In {\em {PODS}}, pages 281--293, 2018.

\bibitem[\protect\citeauthoryear{Amarilli and
  Benedikt}{2018b}]{antoinemichaelarxiv}
Antoine Amarilli and Michael Benedikt.
\newblock {When Can We Answer Queries Using Result-Bounded Data Interfaces?}
\newblock In {\em {arxiv}}, 2018.
\newblock available at \url{https://arxiv.org/pdf/1706.07936.pdf}.

\bibitem[\protect\citeauthoryear{Amendola \bgroup \em et al.\egroup
  }{2018}]{amendolaclosed}
Giovanni Amendola, Nicola Leone, Marco Manna, and Pierfrancesco Veltri.
\newblock Enhancing existential rules by closed-world variables.
\newblock In {\em IJCAI}, pages 1676--1682, 2018.

\bibitem[\protect\citeauthoryear{Baget \bgroup \em et al.\egroup
  }{2011}]{frontierguarded}
Jean{-}Fran{\c{c}}ois Baget, Michel Lecl{\`{e}}re, Marie{-}Laure Mugnier, and
  Eric Salvat.
\newblock On rules with existential variables: Walking the decidability line.
\newblock {\em Artif. Intell.}, 175(9-10), 2011.

\bibitem[\protect\citeauthoryear{B\'{a}r\'{a}ny \bgroup \em et al.\egroup
  }{2015}]{gnfj}
Vince B\'{a}r\'{a}ny, Balder~Ten Cate, and Luc Segoufin.
\newblock Guarded negation.
\newblock {\em J. ACM}, 62(3):356--367, 2015.

\bibitem[\protect\citeauthoryear{Benedikt \bgroup \em et al.\egroup
  }{2016}]{bbcplics}
Michael Benedikt, Pierre Bourhis, Balder ten Cate, and Gabriele Puppis.
\newblock Querying visible and invisible information.
\newblock In {\em LICS}, pages 297--306, 2016.

\bibitem[\protect\citeauthoryear{Benedikt \bgroup \em et al.\egroup
  }{2018}]{bckjournal}
Michael Benedikt, Bernardo~Cuenca Grau, and Egor~V. Kostylev.
\newblock Logical foundations of information disclosure in ontology-based data
  integration.
\newblock {\em Artif. Intell.}, 262:52--95, 2018.

\bibitem[\protect\citeauthoryear{Bienvenu \bgroup \em et al.\egroup
  }{2018}]{kikotjacm}
Meghyn Bienvenu, Stanislav Kikot, Roman Kontchakov, Vladimir~V. Podolskii, and
  Michael Zakharyaschev.
\newblock Ontology-mediated queries: Combined complexity and succinctness of
  rewritings via circuit complexity.
\newblock {\em J. {ACM}}, 65(5), 2018.

\bibitem[\protect\citeauthoryear{Biskup and
  Weibert}{2008}]{DBLP:journals/ijisec/BiskupW08}
Joachim Biskup and Torben Weibert.
\newblock {Keeping Secrets in Incomplete Databases}.
\newblock {\em Int. J. Inf. Sec.}, 7(3):199--217, 2008.

\bibitem[\protect\citeauthoryear{Bonatti and
  Sauro}{2013}]{DBLP:conf/semweb/BonattiS13}
Piero~A. Bonatti and Luigi Sauro.
\newblock A confidentiality model for ontologies.
\newblock In {\em ISWC}, pages 17--32, 2013.

\bibitem[\protect\citeauthoryear{Bonatti \bgroup \em et al.\egroup
  }{1995}]{DBLP:journals/tkde/BonattiKS95}
Piero Bonatti, Sarit Kraus, and V.~S. Subrahmanian.
\newblock {Foundations of Secure Deductive Databases}.
\newblock {\em TKDE}, 7(3):406--422, 1995.

\bibitem[\protect\citeauthoryear{Cal\`{\i} \bgroup \em et al.\egroup
  }{2013}]{tamingjournal}
Andrea Cal\`{\i}, Georg Gottlob, and Michael Kifer.
\newblock \href{https://www.jair.org/media/3873/live-3873-7376-jair.ps}{Taming
  the infinite chase: Query answering under expressive relational constraints}.
\newblock {\em JAIR}, pages 70--80, 2013.

\bibitem[\protect\citeauthoryear{Casanova \bgroup \em et al.\egroup
  }{1984}]{casanova}
Marco Casanova, Ronald Fagin, and Christos Papadimitriou.
\newblock Inclusion dependencies and their interaction with functional
  dependencies.
\newblock {\em JCSS}, 28(1):29--59, 1984.

\bibitem[\protect\citeauthoryear{Cosmadakis \bgroup \em et al.\egroup
  }{1990}]{uidptimeimp}
Stavros~S. Cosmadakis, Paris~C. Kanellakis, and Moshe~Y. Vardi.
\newblock Polynomial-time implication problems for unary inclusion
  dependencies.
\newblock {\em J. ACM}, 37(1):15--46, 1990.

\bibitem[\protect\citeauthoryear{Dwork}{2006}]{DBLP:conf/icalp/Dwork06}
Cynthia Dwork.
\newblock Differential privacy.
\newblock In {\em ICALP}, pages 1--12, 2006.

\bibitem[\protect\citeauthoryear{Fagin \bgroup \em et al.\egroup
  }{2005}]{faginetaldataex}
Ronald Fagin, Phokion~G. Kolaitis, Renee~J. Miller, and Lucian Popa.
\newblock {Data Exchange: Semantics and Query Answering}.
\newblock {\em Theoretical Computer Science}, 336(1):89--124, 2005.

\bibitem[\protect\citeauthoryear{Franconi \bgroup \em et al.\egroup
  }{2011}]{hybrid3}
Enrico Franconi, Yasmin Ib{\'{a}}{\~{n}}ez{-}Garc{\'{\i}}a, and Inan{\c{c}}
  Seylan.
\newblock Query answering with {DBoxes} is hard.
\newblock {\em ENTCS}, 278:71--84, 2011.

\bibitem[\protect\citeauthoryear{Gottlob \bgroup \em et al.\egroup
  }{2014}]{gmp}
Georg Gottlob, Marco Manna, and Andreas Pieris.
\newblock Polynomial combined rewritings for existential rules.
\newblock In {\em KR}, 2014.

\bibitem[\protect\citeauthoryear{Johnson and Klug}{1984}]{johnsonklug}
David~S. Johnson and Anthony~C. Klug.
\newblock {Testing Containment of Conjunctive Queries under Functional and
  Inclusion Dependencies}.
\newblock {\em JCSS}, 28(1), 1984.

\bibitem[\protect\citeauthoryear{Kikot \bgroup \em et al.\egroup
  }{2011}]{uidrewrite}
Stanislav Kikot, Roman Kontchakov, and Michael Zakharyaschev.
\newblock Polynomial conjunctive query rewriting under unary inclusion
  dependencies.
\newblock In {\em RR}, 2011.

\bibitem[\protect\citeauthoryear{Lenzerini}{2002}]{dataintegration}
Maurizio Lenzerini.
\newblock Data integration: {A} theoretical perspective.
\newblock In {\em PODS}, pages 233--246, 2002.

\bibitem[\protect\citeauthoryear{Lutz \bgroup \em et al.\egroup
  }{2013}]{hybrid1}
Carsten Lutz, Inan{\c{c}} Seylan, and Frank Wolter.
\newblock Ontology-based data access with closed predicates is inherently
  intractable (sometimes).
\newblock In {\em {IJCAI}}, pages 1024--1030, 2013.

\bibitem[\protect\citeauthoryear{Lutz \bgroup \em et al.\egroup
  }{2015}]{hybrid2}
Carsten Lutz, Inan{\c{c}} Seylan, and Frank Wolter.
\newblock Ontology-mediated queries with closed predicates.
\newblock In {\em IJCAI}, pages 3120--3126, 2015.

\bibitem[\protect\citeauthoryear{Papadimitriou}{1994}]{papadimitriou}
Christos~H. Papadimitriou.
\newblock {\em Computational complexity}.
\newblock Addison-Wesley, 1994.

\bibitem[\protect\citeauthoryear{Sagiv and Yannakakis}{1980}]{sagivy}
Yehoshua Sagiv and Mihalis Yannakakis.
\newblock Equivalences among relational expressions with the union and
  difference operators.
\newblock {\em J. ACM}, 27(4), 1980.

\bibitem[\protect\citeauthoryear{Studer and
  Werner}{2014}]{DBLP:journals/tdp/StuderW14}
Thomas Studer and Johannes Werner.
\newblock {Censors for Boolean Description Logic}.
\newblock {\em Trans.\ on Data Privacy}, 7(3):223--252, 2014.

\end{thebibliography}
\pagebreak
\appendix
\section{Detailed Proofs from Section \ref{sec:reduction}: Upper Bounds for Disclosure}
\subsection{Proof of Theorem \protect{\ref{thm:simplecritinstrewriting}: Correctness of the Basic Reduction
from Disclosure to Classical Entailment}
}

Recall the statement of Theorem \ref{thm:simplecritinstrewriting}, which
applies the algorithms $\critinstrewrite(\sourcecon)$ to TGDs
and $\critinstrewrite(\M)$ to mappings.

\medskip

$\disclose(\sourcecon, \M, \asecretquery)$ holds exactly when there is a $\secretquery_\annot \in \critinstrewrite(\secretquery)$ such that
$\hide_\M(\critinst^\targetschema)$ entails $\secretquery_\annot$ w.r.t. constraints:
\[
\critinstrewrite(\sourcecon) \cup \critinstrewrite(\M) \cup  \iscrit(\M)
\]
holds.

\medskip

By Theorem \ref{thm:critinst} we know that $\disclose(\sourcecon, \M, \asecretquery)$
is equivalent to  $\hocwq(\critinst^\targetschema, \sourcecon \cup \viewcon(\M), \asecretquery, \globalpreds(\M))$.

This will immediately allow us to prove one direction of the equivalence.
Suppose each of our entailments fails. From this, we see using \cite{sagivy} that
$\hide_\M(\critinst^\targetschema)$ does not entail the disjunction of $\secretquery_\annot$. 
Thus we have an instance $\instance$ extending $\hide_\M(\critinst^\targetschema)$  with facts
that may include the $\iscrit$ predicate, where $\instance$  satisfies all the rewritten constraints and no rewritten query
$\secretquery_\annot$. Note that since $\instance$ satisfies  the constraints of $\critinstrewrite(\M)$ as well as 
$\iscrit(\M)$, we know that the element $\critelement$, if it occurs in $\hide_\M(\critinst^\targetschema)$,  will be labeled with $\iscrit$.

Form an instance $\instance'$  by unifying all elements $e$ in $\instance$ satisfying $\iscrit$ into a single
element $\critelement$, making $\critelement$ inherit any fact that such an $e$ participates in. That is, we choose
$\instance'$ so that if $h$ is the mapping taking any element satisfying $\iscrit$ to $\critelement$ and fixing every
other element, then $h$ is a homomorphism from $\instance$ onto $\instance'$.
We can easily verify that $\instance'$  satisfies the original source constraints $\sourcecon$. For  each homomorphism
$\lambda'$ of the body of $\sigma' \in \sourcecon$ into  $\instance'$, there is a homomorphism $\lambda$ of some
$\sigma \in \critinstrewrite(\sigma')$ into $\instance$. We know $\sigma$ is satisfied in $\instance$, and taking
the $h$-image of the tuples that witness this gives us the required witnesses for $\sigma'$ in $\instance'$.
Now let $\instance'_0$ be the  restriction of $\instance'$ to the source relations. We argue that
the mapping image of $\instance'_0$  under $\M$ is exactly $\critinst^\targetschema$.
To see that the image of $\instance'_0$ must include all the facts  in $\critinst^\targetschema$, 
note
that $\instance$ includes all facts of $\hide_\M(\critinst^\targetschema)$,
which contains witnesses for each such fact. Thus the $h$-image, namely $\instance'$, contains witnesses for each
such fact as well.
Conversely, suppose the image of $\instance'_0$ includes a  fact $F(\vec d)$; we will argue that $F(\vec d)$ is
in $\critinst^\targetschema$.
 Since $\instance$ satisfied  $\iscrit(\M)$,  any such fact in $\instance$ must have all $d_i$ satisfying $\iscrit$.
Thus in $\instance'_0$ each such fact
 must be of the form $F(\critelement \ldots \critelement)$. Thus the
$\M$-image of $\instance'_0$ is exactly the same  $\critinst^\targetschema$.

Finally, we claim that $\instance'$ satisfies $\neg \asecretquery$. If it satisfies $\asecretquery$, then
$\instance$ would satisfy $\asecretquery_\annot$ for some annotation $\annot$, a contradiction.
Putting this all together, we see that $\instance'$ contradicts  $\hocwq(\critinst, \sourcecon \cup \viewcon(\M), \asecretquery, \globalpreds(\M))$.

Before turning to the other direction, we will explain some other results that will be necessary.
The first is the \emph{chase procedure} for checking  entailment of a query $Q$ from
a set of constraints $\Sigma$ and a set of facts $\instance$.  This proceeds by building a sequence of instances
$\instance= \instance_0 \ldots \instance_i \ldots$ where each $\instance_{i+1}$ is formed from $\instance_i$ by
``firing a rule'' $\sigma \in \Sigma$ $\instance_i$.  Firing $\sigma$ in $\instance_i$
means finding a homomorphism $\lambda$ from the body of $\sigma$ into $\instance_i$, and adding facts to extend $\lambda$
to the head, using fresh values for all existentially quantified variables. Such a homomorphism $\lambda$
is called a \emph{trigger} for the rule firing. The chase of
$\instance$ under $\Sigma$, denoted $\chase_\Sigma(\instance)$,  is any instance formed as the union of such a sequence having the additional property
that every rule that could fire in some $\instance_i$ fires in some later $\instance_j$.
The significance of the chase
for query entailment is the following result \cite{faginetaldataex}:
\begin{theorem} \label{thm:chase}
For an instance $\instance$, set of TGDs $\Sigma$, and UCQ $Q$, 
we have $\qentail(\instance, \Sigma, Q)$ if and only if some chase model for $\instance$ under $\Sigma$
satisfies $Q$.
\end{theorem}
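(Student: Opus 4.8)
The plan is to prove the theorem by establishing that any chase model is a \emph{universal model} of $\instance \wedge \Sigma$, i.e.\ a model into which every other model embeds homomorphically. Once universality is in hand, both directions follow from the standard fact that a UCQ is preserved under homomorphisms: if $h \colon A \to B$ is a homomorphism and $A \models Q$, then $B \models Q$. Concretely, I would isolate two properties of a fixed chase model $C = \chase_\Sigma(\instance)$: (i) $C$ is itself a model of $\instance \wedge \Sigma$, and (ii) for every model $M$ of $\instance \wedge \Sigma$ there is a homomorphism $C \to M$ that is the identity on the values of $\instance$.

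For property (i), I would note that $\instance \subseteq C$ by construction, so $C$ satisfies the facts of $\instance$. To see that $C \models \Sigma$, I would invoke the fairness condition built into the definition of the chase: every trigger $\lambda$ for a rule $\sigma \in \Sigma$ appearing in some $\instance_i$ is required to fire in some later $\instance_j$, adding facts realizing $\head{\sigma}$. Hence no trigger survives unsatisfied in the union $C$, so every TGD of $\Sigma$ holds in $C$.

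For property (ii), I would build $h \colon C \to M$ by induction along the chase sequence $\instance = \instance_0, \instance_1, \dots$. The base case uses that $M \models \instance$, so the inclusion of the values of $\instance$ gives a homomorphism $h_0 \colon \instance_0 \to M$. For the inductive step, suppose $\instance_{i+1}$ is obtained by firing $\sigma$ via a trigger $\lambda \colon \body{\sigma} \to \instance_i$. Then $h_i \circ \lambda$ is a homomorphism from $\body{\sigma}$ into $M$; since $M \models \sigma$, there is a witnessing extension in $M$ realizing $\head{\sigma}$, and I extend $h_i$ to $h_{i+1}$ by mapping the fresh existential witnesses introduced in $\instance_{i+1}$ to those witnesses in $M$. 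Taking the union of the $h_i$ yields $h \colon C \to M$; when the chase is infinite this is a directed limit of partial homomorphisms, which is again a homomorphism because every fact of $C$ already appears at some finite stage.

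Finally I would assemble the equivalence. For the forward direction, assume $\qentail(\instance, \Sigma, Q)$; since $C$ is a model of $\instance \wedge \Sigma$ by (i), it satisfies $Q$, so some chase model satisfies $Q$. For the converse, suppose some chase model $C$ satisfies $Q$; given an arbitrary model $M$ of $\instance \wedge \Sigma$, property (ii) yields a homomorphism $C \to M$, and preservation of UCQs under homomorphisms gives $M \models Q$; as $M$ was arbitrary, $\qentail(\instance, \Sigma, Q)$ holds. The main obstacle is property (ii): the inductive construction of $h$ must correctly thread the choice of existential witnesses in $M$ through each rule firing, and the passage to the limit for a possibly infinite chase must be handled so that the union of the partial maps remains a well-defined homomorphism.
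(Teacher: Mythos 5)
Your proof is correct. Note, though, that the paper does not prove Theorem \ref{thm:chase} at all: it imports the statement as a known result from the data-exchange literature \cite{faginetaldataex}, and the argument you give --- fairness makes the chase model a model of $\instance \wedge \Sigma$, induction along the chase sequence with witnesses chosen in an arbitrary model $M$ makes it universal, and preservation of UCQs under homomorphisms closes the equivalence --- is precisely the standard proof of that cited result, so your proposal simply supplies the proof the paper leaves implicit. The only step worth stating explicitly (you do so for the limit of the $h_i$, and it is equally needed when verifying $C \models \Sigma$) is that any homomorphism of a finite rule body into the union $C$ already lands in some finite stage $\instance_i$, which is what lets fairness apply.
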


We will also need  a variation of the chase for the problem
  $\hocwq(\critinst^\targetschema, \sourcecon \cup \viewcon(\M), \asecretquery, \globalpreds(\M))$, 
taken from \cite{bbcplics}. The \emph{visible chase} is a sequence of source instances
$\instance_0 \ldots \instance_n \ldots$ that begins with $\instance_0 =\hide_\M(\critinst^\targetschema)$.
$\instance_{i+1}$ is formed from $\instance_i$ by ``chasing and merging''.
The chase step applies the usual chase procedure
described above to  $\instance_i$ with constraints $\sourcecon$, creating new facts that possibly contain fresh values.
In a merge step, we take a mapping $m \in \M$ and  a homomorphism $\lambda$ of the body
of $m$ into $\instance_i$, and for each free variable $x$ of $m$, we 
replace $\lambda(x)$ by $\critelement$ in all facts in which it appears.  We say that this is a 
\emph{merge step with $m, \lambda$ on $\instance_i$}.
Since the process is monotone, it must
reach a fixpoint, which we refer to as the \emph{visible chase} of $\critinst^\targetschema$, denoted
 $\visiblechase(\sourcecon, \M)$.
\begin{proposition} \label{prop:visiblechase} \cite{bbcplics}
 $\hocwq(\critinst^\targetschema, \sourcecon \cup \viewcon(\M), \asecretquery, \globalpreds(\M))$ holds
exactly when $\visiblechase(\sourcecon, \M)$ satisfies $\asecretquery$.
\end{proposition}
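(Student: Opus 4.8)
The plan is to establish that $\visiblechase(\sourcecon, \M)$ is a \emph{universal} possible source instance for the $\hocwq$ problem: that it is itself a source instance satisfying $\sourcecon$ whose $\M$-image is exactly $\critinst^\targetschema$, and that it admits a homomorphism into every such source instance. Since $\asecretquery$ is a CQ over the source schema and CQ satisfaction is preserved under homomorphisms, both directions of the biconditional then drop out. As a preliminary step I would reformulate the possible worlds. In $\hocwq(\critinst^\targetschema, \sourcecon \cup \viewcon(\M), \asecretquery, \globalpreds(\M))$ the global facts are frozen to those of $\critinst^\targetschema$, so the forward direction of each rule $\phi(\vec x, \vec y) \to \T(\vec x)$ of $\viewcon(\M)$ forces every source match of $\phi$ to send $\vec x$ to $\critelement$ (the only global tuples present are the all-$\critelement$ ones), while its backward direction forces each frozen fact $\T(\critelement, \dots, \critelement)$ to have a source witness. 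Restricting to the source relations thus identifies the possible worlds with the \emph{possible source instances for} $\critinst^\targetschema$: the $\D$ with $\D \models \sourcecon$ and $\M(\D) = \critinst^\targetschema$.

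The core of the argument is universality, which I would prove by induction along the visible-chase sequence $\instance_0, \instance_1, \dots$, fixing an arbitrary possible source instance $\D$ and building homomorphisms $h_i \colon \instance_i \to \D$ that fix $\critelement$. For the base case, $\instance_0 = \hide_\M(\critinst^\targetschema)$ is a set of witness tuples for the frozen target facts, and since $\M(\D) = \critinst^\targetschema$ each such fact already has a witness in $\D$, yielding $h_0$. A chase step with $\sourcecon$ extends the homomorphism by the usual universal-property argument, using $\D \models \sourcecon$. The merge step is the essential case: when the step replaces $\lambda(x)$ by $\critelement$ for each free variable $x$ of a mapping $m$ with body match $\lambda$, composing $\lambda$ with $h_i$ gives a body match of $m$ in $\D$, so $\M(\D)$ contains the fact $\T(h_i(\lambda(\vec x)))$; as $\M(\D) = \critinst^\targetschema$ contains only all-$\critelement$ tuples we get $h_i(\lambda(x)) = \critelement$ for each such $x$, which is exactly what lets $h_i$ descend to a homomorphism $h_{i+1}$ after the identification. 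Taking the union yields a homomorphism $\visiblechase(\sourcecon, \M) \to \D$.

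It then remains to check that $\visiblechase(\sourcecon, \M)$ is itself a possible source instance: at the fixpoint no chase step applies, so it satisfies $\sourcecon$; the witness tuples of $\instance_0$ keep every frozen target fact in its $\M$-image; and the fixpoint condition on merge steps ensures that no source match of a mapping body can yield a global fact outside $\critinst^\targetschema$, so the image is exactly $\critinst^\targetschema$. Combining the two parts gives the biconditional: if $\visiblechase(\sourcecon, \M) \models \asecretquery$ then the homomorphisms propagate the match to every possible source instance, hence to every possible world; and if $\hocwq$ holds then the visible chase, whose extension by $\critinst^\targetschema$ is a possible world, must satisfy the source query $\asecretquery$.

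The main obstacle I anticipate is that the chase with $\sourcecon$ need not terminate, so $\visiblechase(\sourcecon, \M)$ may be infinite, and then its extension by $\critinst^\targetschema$ is not a \emph{finite} possible world as the definition demands. I would resolve this either by carrying out the whole argument against the possibly-infinite universal model and invoking finite controllability of CQ entailment for (frontier-)guarded TGDs to transfer the conclusion back to finite possible worlds, or more directly by observing that any satisfying match of $\asecretquery$ in the infinite visible chase already lives in a finite prefix, which after one final round of forced merges embeds into a genuine finite possible source instance.
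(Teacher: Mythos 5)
The paper never proves this proposition: it is imported wholesale from \cite{bbcplics}, so there is no in-paper argument to compare against, and your attempt has to be judged on its own merits. Its core is right, and it is the argument one would expect: identify the possible worlds of the $\hocwq$ problem with the source instances $\D$ satisfying $\sourcecon$ and $\M(\D)=\critinst^\targetschema$ (using the two directions of $\viewcon(\M)$ against the frozen global facts), then prove universality of $\visiblechase(\sourcecon,\M)$ by induction along the chase-and-merge sequence. Your treatment of the merge step is exactly the crux: $h_i\circ\lambda$ is a body match in $\D$, so $\M(\D)=\critinst^\targetschema$ forces $h_i(\lambda(x))=\critelement$, which is what lets $h_i$ factor through the identification. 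The verification that the fixpoint itself is a possible source instance is also correct.

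The weak point is your fallback plan for the infinite case, and you should commit to option (a) because option (b) cannot work. The direction endangered by non-termination is: if $\visiblechase(\sourcecon,\M)$ does \emph{not} satisfy $\asecretquery$, exhibit a \emph{finite} possible world falsifying $\asecretquery$. Your option (b) manipulates a satisfying match of $\asecretquery$ inside the infinite chase, but in this direction there is no such match to manipulate, and a finite prefix of the chase is useless as a counterexample since it need not satisfy $\sourcecon$; the direction (b) actually addresses (a match in the visible chase pushing forward to all possible worlds) never needed finiteness in the first place, as universality handles it. Moreover, the reliance on guardedness in option (a) is not a removable convenience: under this paper's finite-instance semantics the proposition is false for unrestricted TGD source constraints. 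For instance, with constraints $A(x)\rightarrow\exists y\, R(x,y)$, $R(x,y)\rightarrow\exists z\, R(y,z)$, and transitivity $R(x,y)\wedge R(y,z)\rightarrow R(x,z)$, mapping $A(x)\rightarrow T(x)$, and policy $\exists x\, R(x,x)$, the visible chase is an infinite loop-free transitive path, yet every finite possible world contains a cycle and hence a self-loop, so $\hocwq$ holds while the visible chase falsifies the policy. Since transitivity is not frontier-guarded, this does not contradict the proposition in its intended ($\fgtgd$, \`a la \cite{bbcplics}) scope, but it shows that any correct proof must invoke finite controllability (e.g., via the finite model property of the guarded negation fragment, covering $\fgtgd$s plus the single-constant equality rules) exactly where your option (a) does, and that no guardedness-free argument like your option (b) can close the gap.
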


We now prove the other direction, assuming that $\hocwq(\critinst^\targetschema, \sourcecon \cup \viewcon(\M), \asecretquery, \globalpreds(\M))$ fails,
but one of the entailments holds. By Theorem \ref{thm:chase}, this means that
some chase of $\hide_\M(\critinst^\targetschema)$ under the constraints $\critinstrewrite(\sourcecon) \cup \critinstrewrite(\M) \cup  \iscrit(\M)$
satisfies $\secretquery_\annot$ for some annotation $\annot$.
Let $\instance'_0 \ldots \instance'_n \ldots $ denote such a chase sequence for
$\hide_\M(\critinst^\targetschema)$ under $\critinstrewrite(\sourcecon) \cup \critinstrewrite(\M) \cup  \iscrit(\M)$.
We form another sequence $\instance_0 \ldots \instance_n \ldots$, with $\instance_0=\instance'_0$,
maintaining the invariant that there is a homomorphism $h_i$ from $\instance'_i$ to $\instance_i$ mapping every element satisfying 
$\iscrit$ to $\critinst^\targetschema$. The inductive step is performed as follows:
\begin{itemize}
\item For every  chase step with a rule $\sigma'$ of $\critinstrewrite(\sourcecon)$ applied in $\instance'_i$, having trigger
$\lambda'$, we know that $\sigma'=\critinstrewrite(\sigma)$ for some $\sigma \in \sourcecon$.
We can apply 
the  corresponding rule $\sigma$ in $\instance_i$, with a  trigger $\lambda$ that maps a variable $x$
to the $h_i$-image of $\lambda'(x)$. 
Thus  $\lambda$ composed with $h_i$ is $\lambda$.
\item For every chase step in $\instance'_i$ with a rule of $\sigma' \in \critinstrewrite(m)$ for $m \in \M$ and a trigger $\lambda$, 
we apply a merge step in $\instance_i$ with $m$ and $\lambda$.
\end{itemize}
Since some $\instance'_n$  satisfies $\secretquery_\annot$, one of the $\instance_n$ must satisfy $\secretquery_\annot$.
Since $\instance_n$ contains the image of $\instance'_n$ under the homomorphism $h_n$, and $h_n$ maps $\secretquery_\annot$
to $\secretquery$, we see that $\instance_n$ must satisfy $\secretquery$.
But $\instance_n$ is a subinstance of the visible chase for our $\hocwq$ problem. Thus the assumption that
 $\hocwq(\critinst^\targetschema, \sourcecon \cup \viewcon(\M), \asecretquery, \globalpreds(\M))$ fails and
Proposition \ref{prop:visiblechase} imply that $\secretquery$ cannot hold in $\instance_n$, a contradiction.

\subsection{Simplifying Mappings}
\label{ss:simplifymappings}
In this section, we will see that we can simplify mapping to
be projection maps at the cost of moving to  a richer
 class of source constraints.

Given a problem $\disclose(\sourcecon, \M, \asecretquery)$ we consider
$\sourcecon'$ and $\M'$ built in the following way: $\sourcecon'$ is
composed of $\sourcecon$ plus for each mapping $\phi(\vec x, \vec
y)\rightarrow T(\vec x)$ we create a predicate $R_\phi(\vec x,\vec y)$
and we add to $\sourcecon'$ the two constraints $\phi(\vec x, \vec
y)\rightarrow R_\phi(\vec x,\vec y)$ and $R_\phi(\vec x,\ \vec y)
\rightarrow \phi(\vec x, \vec y) $. $\M'$ is composed of mappings
$R_\phi(\vec x,\vec y)\rightarrow T_{\phi}(\vec x)$.

\begin{proposition}
\label{prop:reduceprojmap}
  We have $\disclose(\sourcecon, \M, \asecretquery)$ if and only if
$\disclose(\sourcecon', \M', \asecretquery)$.
\end{proposition}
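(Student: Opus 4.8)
The plan is to establish a tight correspondence between the possible source instances of the two problems and to observe that this correspondence preserves everything the definition of disclosure refers to: realizability of a view, membership in the set of possible source instances for that view, and satisfaction of the policy $\asecretquery$. Since the fresh predicates $R_\phi$ and the renamed global predicates $T_\phi$ do not occur in $\asecretquery$ (a CQ over the original source schema), the argument reduces to bookkeeping about how source instances over the two schemas relate.

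First I would record the semantic effect of the two new constraints $\phi(\vec x,\vec y)\rightarrow R_\phi(\vec x,\vec y)$ and $R_\phi(\vec x,\vec y)\rightarrow\phi(\vec x,\vec y)$: on any instance $\D^+$ satisfying $\sourcecon'$ and for any tuple $\vec c$, one has $R_\phi(\vec c)\in\D^+$ if and only if $\D^+\models\phi(\vec c)$. The forward constraint supplies the implication $\phi(\vec c)\Rightarrow R_\phi(\vec c)$ (every $\phi$-match is forced into $R_\phi$), while the backward constraint supplies $R_\phi(\vec c)\Rightarrow\phi(\vec c)$ (every $R_\phi$-fact is witnessed by a genuine $\phi$-match); both are needed, since dropping either would let $R_\phi$ over- or under-approximate $\phi$ and break the image correspondence below. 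Because $\phi$ mentions only original source predicates, this equivalence is determined entirely by $\D := \D^+$ restricted to the original source schema.

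Next I would set up the bijection on source instances. Given $\D$ over the original schema, let $\D^+$ be its \emph{canonical extension}, obtained by adding exactly the facts $R_\phi(\vec c)$ for every $\phi$ and every $\vec c$ with $\D\models\phi(\vec c)$; conversely, from $\D^+$ over the extended schema take $\D := \D^+$ restricted to the original predicates. These operations are mutually inverse on the relevant instances: $\D^+\models\sourcecon'$ exactly when $\D\models\sourcecon$ (the rules of $\sourcecon$ are untouched by the fresh predicates and hold iff they held in $\D$, while the two bidirectional constraints hold by construction of $\D^+$), and the equivalence above shows that the $R_\phi$-facts of \emph{any} model of $\sourcecon'$ coincide with those produced by the canonical extension, so $(\D^+|_{\mathrm{orig}})^+ = \D^+$. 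Under this bijection the mapping images agree up to the renaming $T\mapsto T_\phi$: indeed $T(\vec c)\in\M(\D)$ iff $\D\models\exists\vec y\,\phi(\vec c,\vec y)$ iff $\D^+\models\exists\vec y\,R_\phi(\vec c,\vec y)$ iff $T_\phi(\vec c)\in\M'(\D^+)$. Writing $\V'$ for the view obtained from $\V$ by replacing each $T$ by $T_\phi$, it follows that $\V$ is realizable for $(\M,\sourcecon)$ iff $\V'$ is realizable for $(\M',\sourcecon')$, and that $\D\leftrightarrow\D^+$ restricts to a bijection between the possible source instances for $\V$ and those for $\V'$.

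Finally, since $\asecretquery$ is over the original source schema, $\D^+\models\asecretquery(\vec t)$ iff $\D\models\asecretquery(\vec t)$, so the bijection on possible source instances preserves satisfaction of the policy. Hence $\asecretquery(\vec t)$ is disclosed at $\V$ exactly when it is disclosed at $\V'$, and a disclosure exists for $(\sourcecon,\M,\asecretquery)$ exactly when one exists for $(\sourcecon',\M',\asecretquery)$, which is the claim. The only genuinely delicate point is making the correspondence a bijection \emph{at the level of possible source instances} rather than a pair of one-way translations: this is precisely what forces the use of both implications in the bidirectional constraint, and it is what allows the universal quantifier in ``$\asecretquery(\vec t)$ holds on all possible sources'' to transfer in both directions.
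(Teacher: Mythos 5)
Your proof is correct, but it takes a genuinely different route from the paper's. The paper does not argue at the level of the semantic definition of disclosure at all: it invokes the visible-chase characterization (Proposition~\ref{prop:visiblechase}), and then proves the equivalence by constructing, in each direction, a visible-chase sequence for one problem that tracks a given visible-chase sequence for the other, matching up rule firings and merge steps by induction and using the projection $\Pi$ that discards the $R_\phi$-facts (with some care about first firing all the $R_\phi(\vec x,\vec y)\rightarrow\phi(\vec x,\vec y)$ rules, and about passing to the limit of the sequences). Your argument instead works directly with realizable views and possible source instances: the two bidirectional constraints pin down the $R_\phi$-facts of \emph{any} model of $\sourcecon'$ as exactly the $\phi$-matches of its restriction, which makes canonical extension and restriction mutually inverse, and everything else (image correspondence up to the renaming $T\mapsto T_\phi$, realizability, preservation of $\asecretquery$) follows by inspection. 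What your approach buys is self-containedness and precision: it needs no chase machinery, no appeal to the imported Proposition~\ref{prop:visiblechase}, and it avoids the somewhat informal sequence-alignment bookkeeping in the paper's induction (``for all $i$ big enough there exists $j$\ldots''). What the paper's approach buys is uniformity with the rest of the appendix, where the visible chase is the working object; in particular, the chase-level correspondence it establishes is the form in which the proposition is later reused (e.g.\ in the two-atom reduction inside the proof of Corollary~\ref{cor:gtgdatomicupper}, which reasons about what the visible chase of the transformed problem looks like). Both proofs are valid for the statement as given.
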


\begin{proof}

  To prove the proposition, it is sufficient to prove that
 $\asecretquery$ holds on $\visiblechase(\sourcecon, \M)$ if and only
 if $\asecretquery$ holds on $\visiblechase(\sourcecon', \M')$ (see
 Proposition~\ref{prop:visiblechase}).  Let $\Pi(\D)$ be the instance
 obtained by removing all the facts $R_{\phi}(\vec x, \vec y)$ in
 $\D$.

  We recall that the visible chase works iteratively, at each step a
  database $\D_{i+1}$ is created from $\D_i$ by chasing all facts then
  merging some values with $\critelement$. For the sake of simplicity
  we suppose that each step is composed of either one rule firing or
  one merging.

  \begin{itemize}
  \item We start by proving that $\disclose(\sourcecon, \M,
    \asecretquery)$ implies $\disclose(\sourcecon', \M',
    \asecretquery)$.
  
  Let $\D_0,\dots$ be a sequence corresponding to
  $\visiblechase(\sourcecon,\M)$. We build a sequence $\D'_0,\dots$
  corresponding to $\visiblechase(\sourcecon',\M')$. We are trying to
  build $\D'_0, \dots $ such that there exists for all $i$ there
  exists $j$ such that $\D_i = \Pi(\D'_{j})$, and $h(x)=\critelement$
  implies $x=\critelement$.

We prove by induction:
\begin{itemize}
\item 
$\D_0$ is composed of witnesses of $\M$ and $\D'_0$ of witnesses of
  $\M'$. We build $\D'_1, \dots, \D'_j$ such that each $\D'_i$ is
  obtained by firing the $i$-th rule $R_\phi(\vec x,\vec y)\rightarrow
  \phi(\vec x,\vec y)$.

\item 
Let us suppose that $\D_i= \Pi(\D'_j)$ and $\D_{i+1}$ is
obtained by firing a rule $\sigma$; $\sigma$ could have been fired on
$\D'_j$ and thus we can build $\D'_{j+1}$  such that
$\D_{i+1}=\Pi(\D'_{j+1})$.
\item When $\D_{i+1}$ is obtained by merging values then it means that
  we have $\phi(\vec x,\vec y)$ holding in $\D_i$ and thus $\phi(\vec
  x,\vec y)$ holding in $\Pi(D'_j)$ therefore we could use the rule
  $\phi(\vec x,\vec y)\rightarrow R_\phi(\vec x,\vec y)$ followed by
  an unification on $R_\phi$. Therefore we can build
  $\D'_{j+1}=\D'_j\cup\{R_\phi(\vec x,\vec y)\}$ and $\D'_{j+2}$ such
  that $\D_{i+1}=\Pi(\D_{j+2})$.
\end{itemize}

\item For the direction $\disclose(\sourcecon', \M', \asecretquery)$
  implies $\disclose(\sourcecon, \M, \asecretquery)$ we start by
  noticing that, without loss of generality, we can suppose that the
  sequence $\D'_0,\dots$ of $\visiblechase(\sourcecon',\M')$ starts by
  firing each rule $R_\phi(\vec x,\vec y)\rightarrow \phi(\vec x,\vec
  y)$ (it is always possible to generate more facts) and then we
  create $\D_0, \dots$ such that for all $i$ big enough there exists
  $j$ such that $\D_j=h(\Pi(\D'_i))$

  \begin{itemize}
  \item Once all rules $R_\phi(\vec x,\vec y)\rightarrow \phi(\vec
    x,\vec y)$ have been fired, we see that we obtain an instance
    isomorphic to $\D_0$. 
    \item When $\D_j=h(\Pi(\D'_i))$ and $\D'_{i+1}$ is obtained
      through a merge step, it means that we had $\D'_i \models
      R_{\phi}(\vec x, \vec y)$ but we easily see by induction that
      this means that we had $\D_j \models h(\phi(\vec x, \vec y))$
      and thus that we can also perform the merge step on $\D_j$
    \item When $\D'_{i+1}$ is obtained through a rule, it is either a
      rule in $\sourcecon$ that we can reproduce in $\D_j$ or it is a
      rule $\phi(\vec x,\vec y)\rightarrow R_\phi(\vec x,\vec y)$. In
      this latter case, we don't have anything to do as $R_\phi(\vec
      x,\vec y)$ will be discarded by $\Pi$.
  \end{itemize}
  Now, we also see that $j$ will grow as $i$ grows since except for
  rules $\phi(\vec x,\vec y)\rightarrow R_\phi(\vec x,\vec y)$, our
  $j$ increases. Therefore at the limit we have that
  $\visiblechase(\sourcecon',\M')\models \asecretquery$ implies
  $\visiblechase(\sourcecon,\M)\models \asecretquery$.
  \end{itemize}

\end{proof}

\begin{corollary}\label{cor:guardedreduce}
$\discloseClass(\gtgd, \guardedmap)$ reduces to
  $\discloseClass(\gtgd, \projectionmap)$.
\end{corollary}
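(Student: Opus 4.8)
The plan is to instantiate the generic construction of Proposition~\ref{prop:reduceprojmap} and then merely check that, starting from $\sourcecon \in \gtgd$ and $\M \in \guardedmap$, the output pair $(\sourcecon', \M')$ lands in $(\gtgd, \projectionmap)$. Since Proposition~\ref{prop:reduceprojmap} already establishes $\disclose(\sourcecon, \M, \asecretquery)$ iff $\disclose(\sourcecon', \M', \asecretquery)$, verifying these two syntactic membership claims immediately yields the reduction.

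First I would verify that $\M' \in \projectionmap$. Each rule of $\M'$ has the form $R_\phi(\vec x, \vec y) \rightarrow T_\phi(\vec x)$, whose body is the single fresh atom $R_\phi(\vec x, \vec y)$. Since the tuple $(\vec x, \vec y)$ lists the distinct variables of $\phi$, this atom has no repeated variables, and $T_\phi(\vec x)$ inherits from the original mapping the absence of repeated variables in its head. Hence every rule of $\M'$ is a projection map, as required.

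Next I would check that $\sourcecon' \in \gtgd$, examining the three kinds of constraints that comprise $\sourcecon'$. The constraints carried over unchanged from $\sourcecon$ are $\gtgd$s by hypothesis. The backward rules $R_\phi(\vec x, \vec y) \rightarrow \phi(\vec x, \vec y)$ have a single-atom body $R_\phi(\vec x, \vec y)$, which trivially contains all body variables and so is a guard (these are in fact linear TGDs). The crux---and the one step I expect to carry the real content---is the forward rule $\phi(\vec x, \vec y) \rightarrow R_\phi(\vec x, \vec y)$: here every variable of $\phi$ is a frontier variable, so to be a $\gtgd$ we need an atom of $\phi$ containing all of them. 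This is exactly what $\M \in \guardedmap$ guarantees, namely that the body $\phi$ of each mapping rule has an atom containing all of $\phi$'s variables; that atom serves as the guard. This is precisely the point where the guardedness hypothesis on the maps is indispensable: for an unguarded $\M$, the body $\phi$ need not have such a guard atom and the forward rule could fall outside $\gtgd$.

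Combining these observations, the construction sends each instance of $\discloseClass(\gtgd, \guardedmap)$ to an instance of $\discloseClass(\gtgd, \projectionmap)$, and by Proposition~\ref{prop:reduceprojmap} this map preserves the answer, completing the reduction.
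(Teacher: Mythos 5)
Your proof is correct and matches the paper's intent exactly: the paper states Corollary~\ref{cor:guardedreduce} as an immediate consequence of Proposition~\ref{prop:reduceprojmap}, with the implicit content being precisely the syntactic checks you carry out (the $\M'$ rules are projection maps, the backward rules are linear hence guarded, and the forward rules $\phi(\vec x,\vec y)\rightarrow R_\phi(\vec x,\vec y)$ are $\gtgd$s exactly because $\M \in \guardedmap$ supplies the guard atom). Nothing is missing, and your identification of where guardedness of the mappings is indispensable is the right observation.
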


\subsection{More Details for the Proof of Corollary~\ref{cor:gtgdatomicupper}}

We recall the statement of Corollary~\ref{cor:gtgdatomicupper}: 

\medskip

If we
fix the maximal arity of relations in the schema, then
$\discloseClass(\gtgd, \guardedmap)$ is in $\exptime$.

\medskip

We now fill in the details of the proof sketch in the body.

\paragraph{Reducing to $\projectionmap$.}
Using Corollary~\ref{cor:guardedreduce}, we can reduce the problem to
$\discloseClass(\gtgd, \projectionmap)$. We now  show
that this latter problem is in $\exptime$.

\paragraph{Reducing to two atoms in the body of TGDs.}
Given a set of $\gtgd$s  $\sourcecon$ and a set of maps
$\M\in\incd$ we now reduce $\disclose(\sourcecon,\M,\asecretquery)$ to
$\disclose(\sourcecon',\M,\asecretquery)$ where each $\gtgd$ in
$\sourcecon'$ holds at most two conjuncts in the rule body.

$\sourcecon'$ is composed by applying the following process for each
$\gtgd$ $\phi(\vec x)\rightarrow \exists \vec y ~ R(\vec t)\in
\sourcecon$. The constraint $\phi(\vec x)\rightarrow \exists \vec y ~
R(\vec t)$ is guarded, therefore we can select a guarding conjunct
$G_\phi(\vec x)$ such that $\phi(\vec x) = G_\phi(\vec x) \land
Q_1(\vec x) \land \dots \land Q_k(\vec x)$. When $k\leq 1$ we simply
add $\phi(\vec x)\rightarrow \exists \vec y ~ R(\vec t)$ to
$\sourcecon'$. When $k>1$, we rewrite this constraint by introducing
$k$ predicates $R_1, \dots, R_{k}$, while producing the following
constraints $G_\phi(\vec x) \land Q_1(\vec x) \rightarrow R_1(\vec x)$
and for $1\leq i\leq k-1$: $R_i(\vec x) \land Q_{i+1}(\vec x)
\rightarrow R_{i+1}(\vec x)$. Finally we also add $R_k(\vec
x)\rightarrow \exists \vec y ~ R(\vec t)$. It is easy to see that this
new problem is equivalent because each constraint in $\sourcecon$ is
implied by its corresponding constraints in $\sourcecon'$ and if we
look at the result of the visible chase, the only fact derived from a
$R_k(\vec x)$ are facts $R(\vec y)$ such that $\phi(\vec x)$.

\paragraph{Rewriting in $\ptime$.}

Now that maps are $\projectionmap$s and each $\gtgd$ has at most two
atoms in their body, we can apply the rewriting presented in
Theorem~\ref{thm:simplecritinstrewriting}. Notice that each $\gtgd$ will
be rewritten to a bounded number of $\gtgd$s,  and the rewriting of 
the maps will be trivial. Since query entailment with $\gtgd$s is
$\Exptime$ when the arity is bounded we can conclude the proof.

\subsection{Proof of Theorem \protect{\ref{thm:fastcritinstrewriting}}: More Efficient Reduction to Entailment for $\ltgd$ Source Constraints and Atomic Mappings}
Recall the statement of Theorem \ref{thm:fastcritinstrewriting}, which concerns
the application of the rewriting algorithms $\critinstrewriteptime(\sourcecon)$ for $\ltgd$ source constraints
$\sourcecon$, 
and the algorithm $\critinstrewriteptime(\M)$ for atomic mappings $\M$:

\medskip

$\disclose(\sourcecon, \M, \asecretquery)$ holds exactly when there is a $Q_\annot \in \critinstrewrite(Q)$ such that
$\hide_\M(\critinst^\targetschema)$ entails $Q_\annot$ w.r.t. to the constraints
\[
\critinstrewriteptime(\sourcecon) \cup \critinstrewriteptime(\M) \cup  \iscrit(\M)
\]

\medskip

Let $\simplerewritingconstraints=\critinstrewriteptime(\sourcecon) \cup \critinstrewriteptime(\M) \cup  \iscrit(\M)$
and $\fastrewritingconstraints$ be the constraints posed in Theorem \ref{thm:fastcritinstrewriting}.
By Theorem  \ref{thm:simplecritinstrewriting}, it is enough to show that query entailment involving
$\fastrewritingconstraints$ is equivalent to entailment involving $\simplerewritingconstraints$.

In one direction, suppose that $I$ is a counterexample to entailment involving
$\simplerewritingconstraints$.
We fire the rules generating atoms $B_{e,f}$ to get instance $I'$.
We claim that the constraints of $\fastrewritingconstraints$ hold.
Clearly, the rules generating atoms $B_{e,f}$ hold.
Further, by construction, for any $e,f$ if $B_{e,f}$ holds exactly when
there is an annotation
We now consider the rule $B_{e_h, f_h}(\vec x) \rightarrow \exists \vec z ~ H(\vec z)$.
Considering a $\vec c$  such that $B_{e_h, f_h}(\vec c)$ holds, we want to claim
that there is an annotation $\annot$ such that $B_\annot(\vec c)$ holds. 

Recall that each $e_i, f_i$ is associated with some
variable $v$ that occurs as both $x_{e_i}$ and  $x_{f_i}$ in
$B(\vec x)$. If $B_{e_i, f_i}(\vec c)$ holds, we know that
either $c_{e_i}=c_{f_i}$ or $\iscrit(c_{e_i}) \wedge \iscrit(c_{f_i})$ holds.
If the latter happens, then we add the variable $v$ to our annotation.
We can then verify that $B_\annot(\vec c)$ holds.

Since we are assuming that the corresponding constraint of $\simplerewritingconstraints$ holds
in $I$, we can conclude that $I', \vec c \models \exists \vec z ~ H(\vec z)$.
From this we see that $I'$ is a counterexample to the entailment involving
$\fastrewritingconstraints$.

In the other direction, let $I'$ be a counterexample to the entailment for 
the constraints in $\fastrewritingconstraints$.
We claim that the constraints of $\simplerewritingconstraints$ hold of $I'$.
For constraints corresponding to source constraints
with  no repeated variables in the body, this is easy to verify,
so we concentrate on constraints deriving from source constraints
that do have repeated variables in the body.

Each of these constraints is of the form $B_\annot(\vec x) \rightarrow \exists \vec z ~ H(\vec z)$
for some annotation $\annot$. Fix a $\vec c$ such that $B_\annot(\vec c)$ holds.
We claim that $B_{e,f}(\vec c)$ holds for all $(e,f) \in P$. We prove this by induction on
the position of $(e,f)$ in the ordering of pairs in $P$.
Each  $(e,f)$ corresponds to some variable $v$ that is repeated.
If $v$ is in $\annot$, then $B_\annot(\vec c)$
implies  that $\iscrit(c_e) \wedge  \iscrit(c_f)$ hold. Using the corresponding
rule and the induction hypothesis we conclude that $B_{e,f}(\vec c)$ holds.
If $v$ is not in $\annot$ then $B_\annot(\vec c)$ implies that
$c_e=c_f$. Using the other rule generating $B_{e,f}$ in $\fastrewritingconstraints$,
as well as the induction hypothesis, we conclude that $B_{e,f}(\vec c)$ holds.
This completes the inductive proof that $B_{e,f}(\vec c)$ holds.
Now using the corresponding constraint of $\fastrewritingconstraints$ we
conclude that  $I', \vec c \models \exists \vec z ~ H(\vec z)$.
Since the constraints of $\simplerewritingconstraints$ hold, $I'$ is also
a counterexample to the entailment involving $\simplerewritingconstraints$.

\subsection{More details in proof of  Theorem \protect{\ref{thm:linlinupper}}: upper bounds
for $\ltgd$ source constraints and atomic maps} \label{app:linlinupper}
Recall the statement of Theorem \ref{thm:linlinupper}

\medskip

The problem $\discloseClass(\ltgd, \atomicmap)$
is in $\exptime$. If the arity of relations in the source schema is bounded, then the complexity drops to $\np$.
If further the query is atomic, the problem is in $\ptime$.

\medskip

We now give more details on the proof.
As mentioned in the body, is sufficient to get an $\exptime$ algorithm for the entailment problem produced by Theorem \ref{thm:fastcritinstrewriting}, since
then we can apply it to each $\secretquery_\annot$ in $\exptime$.
The constraints in $\critinstrewriteptime(\sourcecon) \cup \critinstrewriteptime(\M)$ are Guarded TGDs that are not necessarily
$\ltgd$s.
But the  bodies of these guarded
TGDs consist of a guard predicate and atoms over a  fixed ``side signature'',
namely the unary predicate $\iscrit$. 
We can apply now the \emph{linearization  technique}, originating
in  \cite{gmp} and refined in \cite{antoinemichael}. Given a
side signature $\sigside$ this is  an algorithm
that converts an entailment problem involving ta  set of  non-full $\incd$s
and Guarded TGDs using $\sigside$, producing an equivalent entailment problem involving the same query,
but only $\ltgd$s.
Further:
\begin{itemize}
\item The algorithm runs in $\exptime$ in general, and in $\ptime$ when the arity of the relations in the input is fixed
\item The algorithm does not increase the arity of the signature, and thus the size
of each output  $\ltgd$ is polynomially-bounded in the input.
\end{itemize}
See also Appendix G of \cite{antoinemichaelarxiv} for a longer exposition
of the linearization technique.
Thus for general arity, we can use this algorithm  to get an entailment problem with the same
query, a data set exponentially bounded in the input data $I'$
and a  set of  $\ltgd$s, each polynomially-sized in the inputs. By applying a standard
first-order query-rewriting algorithm to the query, we reduce this problem to evaluation
of a union of conjunctive queries get a UCQ  $Q'$ on $I'$. The size of
each conjunct in $Q'$ is  polynomially-bounded in the inputs, and so each conjunct $C$ can
be evaluated in time $|I'|^{|C'|}$, giving an $\exptime$ algorithm in total.

For fixed arity, we apply the same algorithm to get an entailment problem using $\incd$s of bounded
arity, which is known \cite{johnsonklug} to be solvable in $\np$. Further, when the query is atomic,
entailment with $\incd$s is in $\ptime$.

\subsection{Proof of Theorem \ref{thm:uidptime}: Disclosure for $\uid$ Source Constraints and $\projectionmap$  is {\PTime}}

We prove that when the source constraints are $\uid$s and the mappings
are projections, disclosure analysis is in $\ptime$. By Theorem
\ref{thm:critinst}, it suffices to show that the problem
$\hocwq(\critinst^\targetschema, \sourcecon \cup \viewcon(\M), \secretquery, \globalpreds(\M))$  is $\ptime$. We will thus first reduce this problem
a problem $\qentail(\instance,\Sigma,p)$ where $\Sigma$ is composed of
$\uid$ constraints and $\instance$ is composed of a single unary fact $\iscrit(\critelement)$.


\paragraph{Reachable predicates.}
We define the entailment graph over a set of $\incd$ constraints
$\Sigma$. In this graph, nodes correspond to predicates and there is
an edge $P\rightarrow R$ for each constraint $P(\vec{x})\rightarrow
R(\vec{y})$. Given an initial set of facts $\instance$, one can
compute the set $\reach(\Sigma,\instance)$ of entailed predicates. This
set is defined as the set of predicates reachable in the entailment
graph starting from the predicates appearing in $\instance$.

\paragraph{Visible position graph.}
In studying tuple-generating dependencies, one often associates a set
of dependencies with a graph whose edges represent the flow of data
from one relation to another via the dependencies.  See, for example
the position graph used in defining the class of weakly acyclic sets
of TGDs \cite{faginetaldataex}.

We develop another such graph, the \emph{visible position graph}
associated with a set of source constraints and mappings.  The nodes
are the pairs $(P,i)$ where $P$ is a predicate, $1 \leq i \leq ar(P)$
and there is an edge $(P,i)\rightarrow (R,j)$ when we have an $\incd$
(either a source constraint or a mapping rule) $P(\vec{x})\rightarrow
\exists \vec y~ R(\vec{t})$ with $x_i=t_j$.  We refer to a node in
this graph as a \emph{position}.  A position of a relation in the
source schema is said to be \emph{visible} if there is a path from
$(P,i)$ to a node $(R,j)$ such that $R$ belongs to the global
schema. Another other position is said to be \emph{invisible}. We see
that when a position $(P,i)$ is visible then for any fact $P(\vec{c})$
that holds in a possible world for $\hocwq(\critinst^\targetschema, \sourcecon \cup \viewcon(\M),
\secretquery, \globalpreds(\M))$ we must have $c_i=\critelement$.

Note that if we have $P(\vec{x})\rightarrow \exists \vec y~
R(\vec{t})$, $x_i$ is exported to $t_j$, and position $j$ of $R$ is
visible, then position $i$ of $P$ is visible as well.

\paragraph{Reduction to entailment.} Let $\Sigma=\sourcecon \cup \viewcon(\M)$ and 
$\instance_0=\critinst^\targetschema$.
We will reduce the problem $\hocwq(\instance_0, \Sigma, \secretquery, \globalpreds(\M))$ to the
problem $\qentail(\instance'_0,\tilde{\Sigma},\tilde{\secretquery})$,
where
$\tilde{\Sigma}=\Sigma_{reach}\cup\Sigma_1\cup\Sigma_{\critelement}$
is a set of $\uid$s, and $\tilde{\secretquery}$ is a CQ. Our reduction proceeds
as follows:
\begin{itemize}
\item We transform the  schema for sources creating a predicate $\tilde{P}$ for
  each source predicate $P$, where the arity of $\tilde{P}$ is the arity of $P$
  minus the number of positions $(P,i)$ that are visible.
\item $\instance'_0=\{\iscrit(\critelement)\}$.
\item $\Sigma_{reach}$ is built as the set of constraints
  $\iscrit(w)\rightarrow \exists \vec{x} ~ P(\vec{x})$ where $\vec{x}$ are fresh distinct
  variables and $P\in \reach(\hide_\M(\critinst^\targetschema),\Sigma)$.
\item $\Sigma_1$ is formed from the set of constraints $P(\vec{x})\rightarrow \exists \vec y ~ R(\vec{t})\in \Sigma$ 
such that there is an exported variable lying in an  invisible position of $P(\vec{x})$. For each such constraint,
$\Sigma_{1}$ contains the constraint
  $\tilde{P}(\vec{x}^*)\rightarrow \exists \vec{y^*} ~ \tilde{R}(\vec{t}^*)$ 
where $\vec{x}^*$
  denotes the projection of $\vec{x}$
  to the invisible positions of $P$, and similarly for $\vec{y^*}$ and $\vec{t}^*$.
\item $\Sigma_{\critelement}$ is formed from  constraints $P(\vec{x})\rightarrow \exists \vec y ~ R(\vec{t})\in \Sigma$
such that $P \in  \reach(\hide_\M(\critinst^\targetschema),\Sigma)$
and there is an exported variable $x$ lying in a visible position of $P(\vec{x})$, exported
to an invisible position of $R$. For each such constraint $\Sigma_{\critelement}$  includes
the constraint
  $\iscrit(x)\rightarrow \exists \vec {y}^* ~  \tilde{R}(\vec{t}^*)$ where
  $\vec{y}^*$ denotes the projection of $\vec{y}$
  to the invisible positions of $P$ and similarly for $\vec{t}^*$.
\item the query $\tilde{\secretquery}$ is built from $\secretquery$ by
  first replacing each conjunct $P(\vec{x})$ with its corresponding
  predicate $\tilde{P}(\tilde{\vec{x}})$, projecting out the visible
  positions. After this, for every variable $x$ that occurred in
  $\secretquery$ within both a visible and an invisible position, $x$ is replaced by $v$, while
  we add a conjunct $\iscrit(v)$.
\end{itemize}

\paragraph{Correctness of the reduction.}
The correctness of the reduction is captured in the following result:

\begin{proposition} \label{prop:reduceuid}
  For any source constraints $\sourcecon$ consisting of $\incd$s and
  $\M$ consisting of projection mappings, there is a disclosure over a
  schema $\sch$ with constraints $\sourcecon$ mappings $\M$ and secret
  query $\secretquery$ if and only if $\qentail(\instance'_0,
  \tilde{\Sigma},\tilde{\secretquery})$ holds.
\end{proposition}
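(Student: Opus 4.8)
The plan is to strip the statement down to a comparison of two chase models. Writing $V=\visiblechase(\sourcecon,\M)$ and $\tilde V=\chase_{\tilde\Sigma}(\instance'_0)$, Theorem~\ref{thm:critinst} together with Proposition~\ref{prop:visiblechase} tells us that disclosure holds iff $V\models\secretquery$, while Theorem~\ref{thm:chase} tells us that $\qentail(\instance'_0,\tilde\Sigma,\tilde\secretquery)$ holds iff $\tilde V\models\tilde\secretquery$. So the whole proposition reduces to showing $V\models\secretquery \iff \tilde V\models\tilde\secretquery$, and I would spend the argument relating $V$ to $\tilde V$.

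First I would isolate the structural fact already flagged in the construction: in $V$ every visible position carries $\critelement$. The argument tracks a merge step: if $P(\vec c)$ holds in $V$ with $(P,i)$ visible, then by definition of the visible position graph the value $c_i$ is propagated by the source-$\incd$ chase to a body position of some mapping $m\in\M$ lying on a path to the global schema; the merge step for $m$ rewrites that value—hence every occurrence of $c_i$, including the one at position $i$ of $P(\vec c)$—to $\critelement$, and passing to the fixpoint gives the claim. Next, define the truncation $\tau$ sending each source fact $P(\vec c)$ to $\tilde P(\vec c^*)$ (the projection to invisible positions), and put $\tau(V)=\{\iscrit(\critelement)\}\cup\{\tau(F):F\in V\}$. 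Using the lemma I would verify the purely local query-transformation claim $V\models\secretquery \iff \tau(V)\models\tilde\secretquery$: a homomorphism $\secretquery\to V$ must send every variable occurring in a visible position to $\critelement$, which is exactly what the $\iscrit$-conjuncts added to $\tilde\secretquery$ demand and why variables occurring only in visible positions may be erased; conversely any homomorphism $\tilde\secretquery\to\tau(V)$ lifts to $\secretquery\to V$ since the lemma pins the erased visible coordinates of the witnessing facts to $\critelement$.

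The technical core is to show that $\tau(V)$ and $\tilde V$ are homomorphically equivalent, hence satisfy the same CQs; combined with the previous step this closes the equivalence. I would prove homomorphisms in both directions by induction on the two chase constructions. For $\tilde V\to\tau(V)$: the $\Sigma_{reach}$-facts are all-null tuples over predicates in $\reach(\hide_\M(\critinst^\targetschema),\Sigma)$, and since reachability guarantees each such predicate is eventually populated in $V$, each maps into the corresponding truncated $V$-fact, while a $\Sigma_1$- or $\Sigma_{\critelement}$-step is matched by the source-$\incd$ chase step it was derived from, routing every $\iscrit$-labelled element to $\critelement$. For $\tau(V)\to\tilde V$: the seed facts and source-chase steps of the visible chase are mirrored by $\Sigma_{reach}$ and $\Sigma_1$, whereas a merge step—which only ever rewrites a value to $\critelement$—is mirrored using the single fact $\iscrit(\critelement)\in\instance'_0$ together with the $\Sigma_{\critelement}$ rules, which are designed precisely to inject $\critelement$ into the invisible positions that a merge forces.

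The step I expect to be the main obstacle is this last homomorphic-equivalence argument, namely reconciling the merge steps of the visible chase, which equate values with $\critelement$ and can thereby collapse the truncated instance, against the purely generative UID chase $\tilde V$, which never equates existing elements. The two delicate points are (i) showing that the net effect of a merge on the invisible positions is captured exactly by $\Sigma_{\critelement}$ and $\iscrit$-labelling, and (ii) checking that the eager seeding by $\Sigma_{reach}$ produces no fact unsupported by $V$—which is exactly where the restriction to reachable predicates is used.
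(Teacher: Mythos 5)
Your proposal is correct in substance for the case that actually matters (see the last paragraph), but it takes a genuinely different route from the paper. The paper never compares the two canonical models directly: it proves both directions by transferring \emph{counterexamples}. From a counterexample to $\qentail(\instance'_0,\tilde\Sigma,\tilde\secretquery)$ (taken, via Theorem~\ref{thm:chase}, to be the chase of $\instance'_0$) it builds a possible world for the $\hocwq$ problem by \emph{filling every visible position with $\critelement$}, and then verifies constraint by constraint that this expansion satisfies $\sourcecon$, has mapping image exactly $\critinst^\targetschema$, and violates $\secretquery$; conversely, from an arbitrary possible world violating $\secretquery$ it truncates to invisible positions and checks $\Sigma_{reach}$, $\Sigma_1$, $\Sigma_{\critelement}$ and $\neg\tilde\secretquery$ directly. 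You instead pin both sides to their canonical models ($V=\visiblechase(\sourcecon,\M)$ via Proposition~\ref{prop:visiblechase}, $\tilde V$ via Theorem~\ref{thm:chase}) and reduce everything to homomorphic equivalence of $\tau(V)$ and $\tilde V$. Your query-transfer step and your structural fact about visible positions coincide with ingredients of the paper's proof; what the paper's route buys is that it never has to construct a homomorphism into (or out of) the merge-laden visible chase, which is exactly the obstacle you flag.

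On that obstacle: your induction ``on the visible chase construction'' cannot be run stage by stage, because merge steps retroactively rewrite facts created earlier, so a homomorphism built at stage $i$ can be invalidated at stage $i+1$. The repair is to first pass to the \emph{final} $V$ (where, by your structural fact, every visible position already holds $\critelement$, and non-$\critelement$ values occupy only invisible positions and are never identified with one another), and then induct on the \emph{provenance} of facts of $V$: root witness facts of $\hide_\M(\critinst^\targetschema)$ are matched by $\Sigma_{reach}$-facts, a child created by a dependency whose export is invisible-to-invisible is matched by firing the corresponding $\Sigma_1$ rule on the parent's image, and a child whose export is visible-to-invisible is matched by a $\Sigma_{\critelement}$-fact, since the exported value in the final $V$ is already $\critelement$. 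With this reorganization merges never need to be ``replayed,'' and your homomorphic-equivalence plan goes through.

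There is, however, one genuine gap --- shared equally by the paper's own proof --- in the direction $\tau(V)\to\tilde V$ (equivalently, the paper's verification that the $\critelement$-expansion of the chase satisfies $\sourcecon$): it requires that no single dependency has \emph{both} an invisible-to-invisible export and a visible-to-invisible export. For general $\incd$s this mixed case can occur and the statement itself then fails. Concretely, take the projection mapping $P(x,y)\rightarrow T(x)$, the $\incd$ $P(x,y)\rightarrow R(x,y)$, and policy $\exists x\, y~ P(x,y)\wedge R(x,y)$: here $(P,1)$ is visible while $(P,2),(R,1),(R,2)$ are invisible, the policy is disclosed (the visible chase contains $P(\critelement,c)$ and $R(\critelement,c)$), yet the chase $\tilde V$ contains only $\tilde R$-facts of the forms $(\ast,\ast)$, $(\ast,y)$ from $\Sigma_1$, and $(\critelement,\ast)$ from $\Sigma_{\critelement}$ with $\ast$ fresh --- never $(\critelement,y)$ with $y$ the $\tilde P$-value --- so $\tilde\secretquery$ is not entailed and no homomorphism $\tau(V)\to\tilde V$ exists. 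This is a defect of the construction (and of the proposition as stated for $\incd$s), not of your strategy: for $\uid$s, where each dependency exports at most one variable, the mixed case cannot arise, both your argument and the paper's go through, and $\uid$s are all that Theorem~\ref{thm:uidptime} needs.
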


\begin{proof}
We start with the argument for the left to right direction.  We let
$\instance'$ be a counterexample to the entailment
$\qentail(\instance'_0, \tilde{\Sigma},\tilde{\secretquery})$.  By
Theorem \ref{thm:chase}, we can assume that $\instance'$ is formed by
applying the chase procedure to $\instance'_0$.  In particular, each
fact in $\instance'$ can be assumed to use a predicate in
$\reach(\hide_\M(\critinst^\targetschema), \Sigma)$. 

We show that there is an instance $\instance$ that is a counterexample to
\[ \hocwq(\critinst^\targetschema, \sourcecon \cup \viewcon(\M), \secretquery, \globalpreds(\M))
\]
and thus (by Theorem \ref{thm:critinst}) we cannot have a disclosure.
We
form $\instance$ by filling out each visible position with $\critelement$.
We claim that $\instance$ satisfies each source constraint $\sigma= P(\vec{x})\rightarrow \exists \vec{y} ~ R(\vec{t})$.
Suppose that $P(\vec c)$ holds in $\instance$. Then $\tilde{P}(\vec c')$ holds
in $\instance'$, where $\vec c'$ projects $\vec c$
on to the invisible positions.
\begin{itemize}
\item 
First, suppose there is a variable $x$ in an invisible position of
$P(\vec x)$ exported to an invisible position in $R(\vec{t})$.  Then
since $\instance'$ satisfies $\Sigma_1$, we know that for some $\vec
d$, $\tilde{R}(\vec d)$ holds in $\instance'$, By the definition of
$\instance$, we have that $R(\vec d^*)$ holds, where $\vec d^*$ fills
out each visible position with $\critelement$.  We can see that
$R(\vec d^*)$ is the required witness for $P(\vec c)$.

\item
Next, suppose
there is a variable $x$ in a visible position $j$ of $P(\vec x)$
exported to an invisible position in $R(\vec{t})$.  Then we must have
$c_j=\critelement$. Since $P$ is in $\reach(\hide_\M(\critinst^\targetschema),
\Sigma)$ and $\instance'$ satisfies $\Sigma_{\critelement}$, we have
$\tilde{R}(\vec e)$ holding in $\instance'$ for some $\vec e$, and
hence $R(\vec f)$ holding in $\instance$ for some tuple where
$\critelement$ fills all the visible positions. Thus $\sigma$ holds in
this case as well.

\item Finally, note that a variable at an invisible position cannot be
  exported to a visible position. Therefore the only remaining case is
  the case where no variable has been exported. Since $P$ is
  reachable, then $R$ is also reachable therefore there is a
  constraints $\iscrit(x)\rightarrow \exists \vec{y}^* R(\vec{y}^*)\in
  \Sigma_{\reach}$ and thus $\tilde{R}(\vec d^*)$ holds in $\instance'$
\end{itemize}

We next claim that the image of $\instance$ under $\M$ agrees with $\critinst^\targetschema$.
\begin{itemize}
  \item
For every global schema predicate $G$, $G(\critelement \ldots
\critelement)$ occurs in the the image of $\instance$ under $\M$.
This follows easily from the fact that $\instance'$ contains
$\instance'_0$.

\item
If $G(\vec c)$ holds in the $\M$-image, then because each visible position
was filled out with $\critelement$, we must have each $c_i=\critelement$.
Thus the result follows.
\end{itemize}

Note that from the preceding claims, we know that $\instance$ is a
possible world for 
$\hocwq(\critinst^\targetschema, \sourcecon \cup \viewcon(\M), \secretquery,
\globalpreds(\M))$.  Finally, we claim that $\instance$ does not
satisfy $\secretquery$.

\begin{itemize}
\item
Suppose $\instance \models \secretquery$ with
homomorphism $h$ as a witness.  Since $\instance$ is a possible world
for $\hocwq(\instance_0, \Sigma, \secretquery, \globalpreds(\M))$, for
any variable $v$ occurring in a visible position,
$h(v)=\critelement$. Let $h'$ be formed from the restriction of $h$ to
variables that occur in $\tilde{\secretquery}$, by mapping the
additional variable $v$ to $\critelement$. Note that in $\instance'$,
$\iscrit(\critelement)$ holds.  For this, we see that $h'$ is a
homomorphism witnessing that $\instance' \models
\tilde{\secretquery}$.  This is a contradiction to the fact that
$\instance'$ is a counterexample to the entailment.
\end{itemize}

We now have argued that $\instance$ is a counterexample to $\hocwq(\instance_0, \Sigma, \secretquery, \globalpreds(\M))$, which completes the proof of the left to right direction.

For the other direction, suppose that $\instance$ is a counterexample
to $\hocwq(\instance_0, \Sigma, \secretquery, \globalpreds(\M))$. Note
that for any fact $R(\vec c)$ over the source relations in
$\instance$, for any visible position $i$ of $R$, we must have
$c_i=\critelement$.  Form $\instance'$ by projecting each fact in
$\instance$ to the invisible positions of the relation. We will argue
that $\instance'$ is a counterexample to the entailment produced by
the reduction.
\begin{itemize}
  \item
$\instance$ should contain $\iscrit(\critelement)$ therefore
    $\instance'$ extends $\instance'_0$.

  \item The fact that $\instance$ was a solution to
    $\hocwq(\instance_0, \Sigma, \secretquery, \globalpreds(\M))$ also
    guarantees that for all reachable predicates $P$ we have
    $\instance\models \exists \vec{x} ~ {P}(\vec{x})$ and thus
    $\instance'\models \exists \vec{x}^* ~ \tilde{P}(\vec{x}^*)$ and thus
    all constraints in $\Sigma_{\reach}$ are satisfied.

  \item Let us show that the constraints in $\Sigma_1$ are satisfied:
    fix a constraint $\sigma' \in \Sigma_1 = \tilde{P}(\vec x^*)
    \rightarrow \exists \vec y' ~ \tilde{R}(\tilde{\vec{t}})$, derived
    from source constraint $\sigma= P(\vec{x})\rightarrow \exists
    \vec{y} ~ R(\vec{t})$.  Fix a fact $F' =\tilde{P}(\vec c^*)$ in
    $\instance'$. By definition of $\instance'$, $\vec c^*$ extends to
    a $\vec c$ satisfying $P$ in $\instance$. Thus, since $\instance
    \models \Sigma$, there is a fact $G=R(\vec d)$ that holds in
    $\instance$ with $d_i=c_j$ whenever $t_i=x_j$.  We can project to
    the invisible positions to get a fact $G'=\tilde{R}(d_{j_1} \ldots
    d_{j_n})$ in $\instance'$.  We claim that $G'$ is a witness for
    the satisfaction of $\sigma'$ with respect to $F'$.  Consider any
    variable $x$ exported from $F'$ to position $j'$ of $G'$ where $x$
    is mapped to value $c$ in $\vec c^*$.  Then in $\sigma$, $x$ was
    exported to the corresponding invisible position $j$ in
    $R(\vec{y})$, and from this we see that $d_j=c$ as required.
  \item 
    Now consider a constraint $\sigma' \in \Sigma_{\critelement} =
    \iscrit(x)\rightarrow \exists \vec y^* ~
    \tilde{R}(\tilde{\vec{t}^*})$. Since $\iscrit(x)$ holds only for
    $x=\critelement$ in $\instance$, we only have to verify that
    $\tilde{R}(\vec{e}^*)$ holds for some $\vec e$ such that
    $e_\ell=\critelement$ (where $\ell$ is the position of $x$ in
    $\vec y^*$).  Let us suppose that $\sigma'$ was derived from
    source constraint $\sigma= P(\vec{x})\rightarrow \exists \vec{y} ~
    R(\vec{t})$ where $j$ is the position of the exported in $\vec{y}$
    and $i$ is the position of the exported variable in $\vec{t}$. By
    the definition of $\Sigma_{\critelement}$, we know that $P$ is a
    reachable predicate, and hence $P(\vec d)$ must hold for some
    $\vec d$ in $\instance$ and since $d_j$ is visible we have
    $d_j=\critelement$. Because $\instance\models \sigma$ we have
    $\tilde{R}(\vec e)$ holds in $\instance$ for some $\vec e$ such
    that $e_i=\critelement$ and thus $R(\vec e^*)$ is the required
    witness for $\sigma'$.

\item 
Finally, we argue that $\instance'$ does not satisfy  $\tilde{\secretquery}$.
Suppose by way of contradiction that $\instance'$ satisfies $\tilde{\secretquery}$ via
homomorphism $h'$. Note that the  variables of $\secretquery$ that
do not occur in $\tilde{\secretquery}$ are those that occur
only in visible positions within an atom of $\secretquery$.
We extend $h'$ to a mapping $h$ from the  variables of $\secretquery$  to $\instance$ by mapping each
such  variable $x$ to $\critelement$. We argue that
$h$ is a homomorphism of $\secretquery$ to $\instance$.
Consider  an atom $R(\vec t, \vec t')$ of $\secretquery$, where $\vec t$ correspond
to the invisible positions.  Suppose first
that the corresponding atom of $\tilde{\secretquery}$ is of the form
  $\tilde{R}(\vec t^*)$ where $\vec t^*$ is obtained from $\vec t$ by
replacing  any variable shared with a visible position by $v$.
We know that $\tilde{R}(h(t^*_1) \ldots h(t^*_j))$ holds in $\instance'$
because $h$ is a homomorphism. Thus $R(h(t^*_1), \ldots h(t^*_j), \vec e)$ holds
in $\instance$ for some $\vec e$. By the properties of visible positions and the fact
that $\instance$ is a possible world for 
$\hocwq(\instance_0, \Sigma, \secretquery, \globalpreds(\M))$, we see
that each $e_i=\critelement$. Thus $h$ not only preserves the  atom $\tilde{R}(\vec t^*)$, but
it also preserves the additional atom $\iscrit(v)$, since $\iscrit(\critelement)$ holds
in $\instance'$.  thus $h$ is a homomorphism, contradicting the fact that $\instance$ is a counterexample to 
$\hocwq(\instance_0, \Sigma, \secretquery, \globalpreds(\M))$. 
\end{itemize}

Since $\instance'$ extends $\instance_0$, satisfies the constraints $\tilde{\Sigma}$, and
does not satisfy the query $\tilde{\secretquery}$, it is a counterexample to the entailment, completing
this direction of the argument.
\end{proof}

\paragraph{Overview of {\PTime} algorithm for entailment with $\uid$s over a single fact.}
At this point we have restricted to a CQ entailment problem for a set
of $\uid$s and a single fact. It was claimed in \cite{uidrewrite} that
there is a polynomial time query rewriting for $\uid$s, and from this
it would easily follow that our entailment problem is in $\ptime$
(query evaluation is $\ptime$ when these is a single fact). However
later work (footnote on page 38 of \cite{kikotjacm}) refers to flaws
in this argument, and says that polynomial rewritability is open.  We
therefore give a direct proof that such an entailment problems are in
$\ptime$. This will proceed via several steps:
\begin{itemize}
\item A reduction to the case of  ``binary schemas'': those where
the arity of each predicate is at most $2$.
\item Query simplification, which will reduce the query to a connected
acyclic query.
\item Reduction to atomic entailment.
\end{itemize}

\paragraph{Reduction to binary schemas.} We begin by using verbatim
an idea of \cite{uidrewrite}, reducing to the same problem but when
the input schema is binary. We do this via a standard reduction of
general arity reasoning to binary reasoning, introducing predicates
$R_i(t, v)$ for every relation $R$ of arity $n\geq 1$ and each $1 \leq
i \leq n$; informally these state that $v$ is the value in position
$i$ of $n$-tuple $t$. We also introduce a predicate $R_{\exists}(t)$
for each predicate $R$; informally this states that there is some
tuple $t$ in the predicate $R$. We translate each $\uid$ $B(\vec x)
\rightarrow \exists \vec y ~ H(\vec t)$ exporting a variable $x_i$
from position $i$ to position $j$ to a $\uid$ $B_i(t, x_i) \rightarrow
\exists t' ~ H_j(t',x_i)$. For each $\uid$, $H(\vec x) \rightarrow
B(\vec y)$ that is not exporting a variable, we create a rule
$H_{\exists}(t)\rightarrow \exists t' ~ B_{\exists}(t')$. We also
create rules $R_i(t,x)\rightarrow R_{\exists}(t)$ and
$R_{\exists}(t)\rightarrow \exists x~ R_i(t,x)$ for each predicate $R$
and $1\leq i\leq n$ where $n$ is the arity of $R$.  Finally the query
$\secretquery$ is transformed into $\secretquery'$ where each conjunct
$R(x_1,\dots,x_n)$ is transformed into the conjunction
$R_1(t,x_1)\land \dots \land R_n(t,x_n)\land R_{\exists}(t)$, for a
fresh variable $t$. Finally the database over the binary schema is built in the
following way: for each fact $R(\vec v)$ of the initial database, we
create a fresh value $t$ and we add the conjunct $R_i(t,x)$ for $1\leq
i\leq n$ where $n$ is the arity of $R$ and we also add
$R_\exists(t)$. Further details can be found in
\cite{uidrewrite}. Note that, in the resulting problem, each
frontier-0 rule produced has a body with an atom over a unary
predicate.
\begin{proposition} \label{prop:reducebinary} The transformation above
preserves query entailment.
\end{proposition}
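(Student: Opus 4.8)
The plan is to prove the equivalence through the chase characterization of query entailment (Theorem~\ref{thm:chase}): on each side the query is entailed exactly when it holds in a chase model. Writing $\instance^b$ and $\Sigma^b$ for the transformed instance and constraints and $\secretquery'$ for the transformed query, it therefore suffices to exhibit a tight correspondence between $\chase_\Sigma(\instance)$ and $\chase_{\Sigma^b}(\instance^b)$ and to check that this correspondence matches $\secretquery$ against $\secretquery'$. First I would make the encoding precise: a general-arity structure $M$ is encoded as a binary structure $\widehat{M}$ by picking, for every fact $R(\vec v)$ of $M$, a fresh \emph{tuple identifier} $t$ and adding $R_i(t,v_i)$ for each position $i$ together with $R_\exists(t)$ -- this is exactly how $\instance^b$ and $\secretquery'$ are built. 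Conversely, any binary structure satisfying the bookkeeping rules $R_i(t,x)\rightarrow R_\exists(t)$ and $R_\exists(t)\rightarrow\exists x~R_i(t,x)$, in which each position of each identifier is single-valued, is \emph{decoded} into a general-arity structure by reading off, for each $t$ with $R_\exists(t)$, the unique value $v_i$ with $R_i(t,v_i)$ and emitting $R(v_1,\dots,v_n)$.

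The heart of the argument is to show that the decoding of $\chase_{\Sigma^b}(\instance^b)$ is isomorphic, over the original schema, to $\chase_\Sigma(\instance)$. I would establish this by aligning the two chase sequences stage by stage. A single firing of an original exporting $\uid$ $B(\vec x)\rightarrow\exists\vec y~H(\vec t)$ on a fact $B(\vec c)$ produces $H(\vec d)$ with $d_j=c_i$ in the exported position and fresh values elsewhere; in the binary chase this is simulated by firing $B_i(t,c_i)\rightarrow\exists t'~H_j(t',c_i)$ to create the identifier $t'$ and the atom $H_j(t',c_i)$, then firing $H_j(t',c_i)\rightarrow H_\exists(t')$, and finally firing $H_\exists(t')\rightarrow\exists x~H_k(t',x)$ for every remaining position $k$ to introduce the fresh witnesses. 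The non-exporting (frontier-$0$) $\uid$s are handled symmetrically via $H_\exists(t)\rightarrow\exists t'~B_\exists(t')$ followed by the position-completion rules. Because the restricted chase fires a completion rule only when the position is empty, each identifier ends up with exactly one value per position, so the decoding is well defined and reproduces precisely the corresponding original chase fact; this yields a bijection between decoded tuples and original chase facts that respects element identities, i.e.\ an isomorphism (up to renaming of elements and of tuple identifiers). Since the query is finite it is satisfied at a finite stage, so the alignment applies even when the chase is infinite.

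Finally I would transfer query satisfaction across the correspondence. A query atom $R(x_1,\dots,x_n)$ is transformed into $R_1(t,x_1)\wedge\cdots\wedge R_n(t,x_n)\wedge R_\exists(t)$ with $t$ fresh, so a homomorphism of $\secretquery$ into a structure $M$ is in bijection with a homomorphism of $\secretquery'$ into $\widehat{M}$: each fresh variable $t$ is forced to the identifier of the matched tuple, and the $R_\exists(t)$ conjunct is then automatically satisfied. Applying this to the decoded chase gives $\secretquery\in\chase_\Sigma(\instance)$ iff $\secretquery'\in\chase_{\Sigma^b}(\instance^b)$, and combining with Theorem~\ref{thm:chase} on both sides yields $\qentail(\instance,\Sigma,\secretquery)\Leftrightarrow\qentail(\instance^b,\Sigma^b,\secretquery')$.

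The main obstacle I anticipate is the bookkeeping around tuple identifiers. One must verify that the binary chase never produces a \emph{malformed} tuple -- an identifier carrying two distinct values in one position, or one missing a position -- which would make the decoding ambiguous and break the bijection, and that the interleaving of the exporting rules with the $R_\exists$/completion rules can always be scheduled so as to stay synchronized with the original chase firings. Checking that these auxiliary rules reconstruct each exported tuple exactly, with fresh witnesses in the non-exported positions matching the existential witnesses of the original $\uid$, is the delicate point; the query transfer and the application of Theorem~\ref{thm:chase} are then routine.
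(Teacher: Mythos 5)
Your proof is correct, but note that the paper itself offers no proof of this proposition at all: it imports the encoding ``verbatim'' from \cite{uidrewrite} and states Proposition \ref{prop:reducebinary} with only the remark that further details can be found there. So there is no in-paper argument to compare against; what you have written is the argument the authors left implicit, and it is the right one. Your two-sided chase simulation is sound: exporting rules of the binary system only ever create \emph{fresh} tuple identifiers (the identifier variable is existentially quantified), so the only rules that can add a position atom to an existing identifier are the completion rules $R_\exists(t)\rightarrow\exists x~R_i(t,x)$, and the restricted-chase firing condition (which the paper explicitly adopts just after this proposition) blocks these exactly when the position is already filled. This gives the single-valuedness of identifiers that makes your decoding well defined, and it also makes the firing conditions match up on both sides: the original \uid{} $B(\vec x)\rightarrow\exists\vec y~H(\vec t)$ is blocked in the restricted chase exactly when some $H$-fact already carries the exported value in position $j$, which under your inductive isomorphism is exactly when $H_j(\cdot,c_i)$ already exists in the binary chase. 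The query transfer is then routine, as you say; one small simplification is that you do not need a \emph{bijection} between homomorphisms of $\secretquery$ and of $\secretquery'$ --- existence in both directions suffices, and that follows directly from single-valuedness of identifiers (for the decoding direction) and from picking, for each matched fact, any identifier witnessing it (for the encoding direction). The only substantive caveat is the one you flagged yourself: the stage-by-stage synchronization should be phrased as constructing, from a fair chase sequence on one side, a fair chase sequence on the other whose limit decodes/encodes isomorphically; since chase models are universal, satisfaction of a CQ is invariant across chase sequences, so working with the particular aligned pair loses no generality.
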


\paragraph{Special form of the chase: annotated chase forest.} In the case of $\uid$s the chase
process applied to our single-fact instance $\instance_0$ produces an
in instance $\chase_\Sigma(\instance_0)$ that will be
infinite. However, it has a special shape that we can exploit. For
the remainder of this section, by $\chase_\Sigma(\instance_0)$ we consider an instance
formed from a \emph{restricted chase sequence}, in which a witness to
a TGD $\phi(\vec x) \rightarrow \exists \vec y ~ H(\vec t)$ is added
to instance $\instance_i$ for binding $\vec c$ to $\vec x$ only if
$\instance, \vec c \models \phi(\vec x) \wedge \neg \exists \vec y ~
H(\vec t)$.  It is known \cite{faginetaldataex} that in Theorem
\ref{thm:chase} it suffices to consider such instances.  The
\emph{annotated chase} is a node- and edge-labelled forest formed from
$\chase_\Sigma(\instance_0)$ as follows:
\begin{compactitem}
\item  the nodes are the values  of $\chase(\instance)$
\item the node label of a value $v$ is  the collection of unary predicates holding at $v$
\item an edge labeled by fact $F$ mentioning $v_1$ and  $v_2$  connects a value $v_1$ to a value $v_2$ if
$F$ holds in $\chase(\instance)$ and $v_2$ is generated in the chase step 
that produces $F$.
\end{compactitem}
We can see that this graph is a forest where 
he roots are $\critelement$ (the value where
$\iscrit(\critelement)$ holds) as well as some other trees rooted to
reachable facts generated from frontier-$0$ dependencies and thus
rooted at elements $t$ where $R_{\exists}(t)$ holds for some $R$.
Further, since the chase is restricted, we can see that this graph has
the \emph{unique adjoining label property}: for each $v_1$, for each
predicate $P$, there cannot be two nodes $v_2, v'_2$ adjacent to $v_1$
such that the edge $e$ from $v_1$ to $v_2$ and $e'$ from $v_1$ to
$v_2$ both are labelled with the same predicate and have $v_1$ in the
same position. Furthermore, the restricted chase also ensures that the
forest is composed of at most one tree per predicate since all the
roots that are produced needs to be different.

\paragraph{First query simplification: eliminating forking pairs.}
Given a CQ $Q$,
 a pair of distinct atoms $A_1$ and $A_2$ sharing the same predicate and a
variable at the same position (i.e. $q_1=R(x,z)$ and $q_2=R(x,y)$ or
$q_1=R(z,x)$ and $q_2=R(y,x)$) is a \emph{forking pair} of $Q$. 
We say that a query $Q$ is \emph{non-forking} when there are no forking pairs.

\begin{proposition} \label{prop:nonforking}
If a CQ $Q$ has a forking pair
$A_1=R(x,z)$ and $A_2=R(x,y)$ and $Q'$ is the query $Q$ where the variable
$z$ is  replaced with $y$, then
$\qentail(\instance_0,\Sigma,Q)=\qentail(\instance_0,\Sigma,Q')$
\end{proposition}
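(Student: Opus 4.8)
The plan is to reduce the statement to a property of the chase model, using Theorem~\ref{thm:chase}: since $\qentail(\instance_0,\Sigma,Q)$ holds exactly when $\chase_\Sigma(\instance_0)\models Q$ (and similarly for $Q'$), it suffices to show that the chase satisfies $Q$ if and only if it satisfies $Q'$. One inclusion is immediate and purely syntactic. Because $Q'$ is obtained from $Q$ by identifying the variable $z$ with $y$, the map that fixes every variable and sends $z\mapsto y$ is a homomorphism $Q\to Q'$. Hence any match of $Q'$ in the chase composes with it to give a match of $Q$, so $\chase_\Sigma(\instance_0)\models Q'$ implies $\chase_\Sigma(\instance_0)\models Q$, and therefore $\qentail(\instance_0,\Sigma,Q')\Rightarrow\qentail(\instance_0,\Sigma,Q)$.

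For the converse --- the substantive direction --- I would exploit the unique adjoining label property of the annotated chase forest. Suppose $\pi$ is a homomorphism witnessing $\chase_\Sigma(\instance_0)\models Q$. Then the two atoms of the forking pair are mapped to facts $R(\pi(x),\pi(z))$ and $R(\pi(x),\pi(y))$ of the chase, both placing the value $\pi(x)$ in the first position. Each of these facts labels an edge incident to the node $\pi(x)$ in the forest, carrying label $R$ with $\pi(x)$ in the same position. The unique adjoining label property forbids two distinct neighbours of $\pi(x)$ reached by $R$-edges with $\pi(x)$ in a common position, and so forces $\pi(z)=\pi(y)$. Since $\pi$ identifies $z$ and $y$, its restriction to the variables of $Q'$ is a homomorphism of $Q'$ into the chase, yielding $\chase_\Sigma(\instance_0)\models Q'$ and hence $\qentail(\instance_0,\Sigma,Q)\Rightarrow\qentail(\instance_0,\Sigma,Q')$.

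The main obstacle I anticipate is justifying cleanly that the two witnessing facts correspond to edges to which the unique adjoining label property applies, rather than to a single edge or an artifact of a root. Concretely, I would dispatch the degenerate cases first: if $\pi(z)=\pi(y)$ the claim is trivial, and the property is stated for every value of the forest and for all incident edges (``adjacent'' nodes, covering both the parent edge and the child edges), so it applies regardless of whether $\pi(z)$ or $\pi(y)$ is a parent or a child of $\pi(x)$, and regardless of whether $\pi(x)$ is the root $\critelement$ or an internal node. The only real content is thus the bookkeeping confirming that ``same predicate $R$, same position of $\pi(x)$'' in the forking pair translates exactly into the hypothesis of the unique adjoining label property; once that correspondence is in place, the identification $\pi(z)=\pi(y)$ is immediate. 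I would also note that the symmetric case of a forking pair at the second position, $R(z,x)$ and $R(y,x)$, is handled identically.
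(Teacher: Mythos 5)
Your proof is correct and follows essentially the same route as the paper's: the easy direction by composing a match of $Q'$ with the quotient map identifying $z$ with $y$, and the converse by observing that a match $\pi$ of $Q$ with $\pi(z)\neq\pi(y)$ would place two distinct $R$-labelled edges at $\pi(x)$ with $\pi(x)$ in the same position, contradicting the unique adjoining label property of the annotated chase forest. Your treatment is in fact more careful than the paper's two-line argument, spelling out the degenerate cases and the symmetric forking position.
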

\begin{proof}
Let $\instance=\chase_\Sigma(\instance_0, \Sigma)$. If $\secretquery'$ holds
in $\instance$, then clearly the same holds of $\secretquery$. Conversely
suppose $\secretquery$ holds in $\instance$ via homomorphism $h$, and suppose $h(y) \neq h(z)$.
 This gives us a violation of the unique adjoining label property.
\end{proof}

Applying the proposition  above, we can assume that $Q$ is non-forking.
Without loss of generality, we can also assume that $Q$ is connected
(otherwise we can test the entailment of each connected part).



\paragraph{Second simplification: reducing to acyclic queries} The \emph{CQ-graph.} of a CQ $Q$ is the node-
and edge-labelled graph whose nodes are the variables of $Q$ and whose
edges are labelled with atoms of $Q$ such that:
\begin{itemize}
\item an edge between variables labelled with $x$ and $y$ is labelled
with the binary atoms containing both $x$ and $y$;
\item a node $x$ is labelled with the set of unary predicates in $Q$
  containing $x$.
\end{itemize}

The CQ-graph said embedded in some annotated chase forest $T$ if there
is a homomorphism $h:\A\rightarrow T$ preserving edges, i.e. if there
is an edge $x$ to $y$ labeled with $R(a,b)$ then $T$ should contain an
$\A(x)$ to $\A(y)$ labeled with $R(\A(a),\A(b))$ and nodes, i.e. if
there is a predicate $P(x)$ on the node $x$ then there should be
$P(\A(x))$ in $T$. The homomorphism $h$ is called an embedding of $Q$
in $T$.

It is immediate from the completeness of the chase procedure
that for any annotated chase forest $T$ for $\chase_\Sigma(\instance)$, a query is entailed if and only
if its CQ-graph is embedded in $T$.
Our reduction to the case of a CQ with acyclic CQ-graph will depend heavily on the following
observation:

\begin{proposition} \label{prop:injective}
Any embedding of a connected and non-forking CQ $Q$ into an annotated chase
forest for $\chase_\Sigma(\instance_0)$ must be injective.
\end{proposition}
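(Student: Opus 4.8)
The plan is to argue by contradiction. I would suppose that some embedding $h$ of $Q$ into an annotated chase forest $T$ for $\chase_\Sigma(\instance_0)$ identifies two distinct variables, say $h(x)=h(y)$ with $x\neq y$ (the case where $Q$ has a single variable being trivial). Since $Q$ is connected and has at least two variables, its CQ-graph is connected, so I can fix a \emph{simple} path $x=z_0,z_1,\ldots,z_k=y$ through the CQ-graph, with each consecutive pair $z_j,z_{j+1}$ joined by a binary atom. Because $h$ preserves edge labels, and every binary fact of the chase is an edge of the forest connecting two \emph{distinct} values, the images $h(z_0),\ldots,h(z_k)$ form a closed walk inside a single tree of $T$ (closed because $h(z_0)=h(x)=h(y)=h(z_k)$), each step traversing a genuine tree edge.

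The core of the argument is a combinatorial observation about closed walks in trees, which I would make precise using the distance function $d_j := d_T(h(z_0), h(z_j))$ to the fixed root $h(z_0)$. Each step of the walk moves along a tree edge and hence changes $d_j$ by exactly $\pm 1$; moreover $d_0=d_k=0$, while $d_1=1$ since a tree edge has distinct endpoints. As the $d_j$ are nonnegative, begin and end at $0$, and attain a positive value, their maximum is reached at some \emph{interior} index $i$ with $1\le i\le k-1$, where necessarily $d_{i-1}=d_{i+1}=d_i-1$. In a tree a node has a unique neighbour closer to the root, so $h(z_{i-1})$ and $h(z_{i+1})$ are both that neighbour of $h(z_i)$; in particular $h(z_{i-1})=h(z_{i+1})$, and the walk performs an immediate backtrack across the single edge joining $h(z_i)$ to its parent.

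Finally I would translate this backtrack back to the query level to exhibit a forking pair. The CQ-edges $(z_{i-1},z_i)$ and $(z_i,z_{i+1})$ both map under $h$ to the \emph{same} tree edge, namely the one between $h(z_i)$ and its parent, which carries a single fact $F$ (each non-root value is created by a unique chase step, so its parent edge is unique). Hence the two corresponding atoms $A_1$ and $A_2$ both share the predicate of $F$. As this edge has two distinct endpoints, $h(z_i)$ occupies one fixed position of $F$, so $z_i$ sits at that same position in both $A_1$ and $A_2$; and since the path is simple, $z_{i-1}\neq z_{i+1}$, so $A_1$ and $A_2$ are distinct atoms. Thus $A_1,A_2$ form a forking pair of $Q$, contradicting that $Q$ is non-forking. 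I expect the main obstacle to be making the ``immediate backtrack'' step rigorous — establishing that a nontrivial closed walk in a forest must backtrack, and choosing the backtrack point (the distance peak) so that it cleanly yields two \emph{distinct} query atoms sharing a variable at the \emph{same position}, rather than merely producing a repeated value.
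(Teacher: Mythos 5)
Your proof is correct, and it ends at the same local contradiction as the paper's proof --- two distinct query atoms forced onto a single fact of the chase forest, i.e.\ a forking pair --- but it reaches that point by a genuinely different global argument. The paper proves injectivity by induction on the length of the path joining $x$ and $y$ in the CQ-graph: adjacent variables cannot collide because forest edges have distinct endpoints; at distance $2$ a collision immediately yields a forking pair (this is the only place the non-forking hypothesis is used); and for longer paths $p_1,\ldots,p_k$ the induction hypothesis makes $h(p_1),\ldots,h(p_{k-1})$ pairwise distinct, hence a simple path in the tree of length $k-2$, so $h(p_k)$, being adjacent to $h(p_{k-1})$, stays at distance at least $k-3>0$ from $h(p_1)$. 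You instead argue by contradiction with an extremal argument: the image of a simple path between two collided variables is a closed walk in a tree, its distance-to-basepoint profile moves by $\pm 1$ per step and therefore has an interior peak, and at the peak the walk must backtrack across a single edge, which is where you extract the forking pair. The decompositions are thus different: the paper confines the forking-pair extraction to the distance-$2$ base case and handles everything else metrically, while you locate the ``fold point'' of an arbitrary collision directly. Your version is somewhat more self-contained, since it spells out the tree facts the paper leaves implicit --- in particular, the paper's inductive step that ``distinct images imply distance at least $k-2$'' silently uses that a simple path in a tree is a geodesic, and your peak/backtrack lemma together with the uniqueness of the edge between two forest nodes (one generating chase step per value) makes the corresponding reasoning explicit. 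What the paper's induction buys in exchange is brevity and, implicitly, the slightly stronger statement that embeddings of non-forking paths are isometric.
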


\begin{proof}
Let $Q$ be a connected and non-forking and let $h$ be an embedding. Let
us prove by induction on the size of the path between $x$ and $y$ that
$h(x)\neq h(y)$ when $x\neq y$.

Two neighboring nodes cannot be sent to the same value. For a path of
size $2$, if we have $z$ such that $x,z,y$ forms a path in the
CQ-graph of $Q$ then $h(x)$ has to be different than $h(y)$ otherwise
the label from $x$ to $z$ and from $z$ to $y$ would be the same and
there would be a forking pair in $Q$.

Let $x=p_1, p_2, \dots, p_k=y$ with $k \geq 4$ be a path in the CQ-graph
between $x$ and $y$. By induction the $h(p_i)$ for $i<k$ are all
distinct and thus the distance between $h(x)$ and $h(p_{k-1})$ is at
least $k-2$ hence $h(y)$ is at least at distance $k-3>0$ of $h(x)$.
\end{proof}

Our reduction to the acyclic case  follows immediately:

\begin{corollary} \label{cor:acyclic}
If a connected non-forking CQ $Q$ is entailed by $\Sigma$ over then  the CQ-graph of 
$Q$ is acyclic.
\end{corollary}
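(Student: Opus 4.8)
The plan is to combine the embedding characterization of entailment stated just before the corollary with the injectivity result of Proposition~\ref{prop:injective}, exploiting the fact that the annotated chase forest is, by construction, a forest and hence contains no cycle.

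First I would unwind the hypothesis. Since $Q$ is entailed by $\Sigma$ over $\instance_0$, by completeness of the chase there is an annotated chase forest $T$ for $\chase_\Sigma(\instance_0)$ in which the CQ-graph of $Q$ is embedded; let $h$ be such an embedding. Because $Q$ is assumed connected and non-forking, Proposition~\ref{prop:injective} applies directly and tells us that $h$ is injective on the variables of $Q$.

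Next I would argue by contradiction. Suppose the CQ-graph of $Q$ contained a cycle $p_1, p_2, \ldots, p_k, p_1$ with $k \geq 3$ and the $p_i$ pairwise distinct. By injectivity of $h$, the images $h(p_1), \ldots, h(p_k)$ are pairwise distinct vertices of $T$; and since $h$ preserves edges, each consecutive pair $h(p_i), h(p_{i+1})$ (indices taken cyclically) is joined by an edge of $T$. As the $h(p_i)$ are distinct, these edges join distinct pairs of vertices and therefore form a genuine cycle in $T$. But $T$ is a forest and so is acyclic, a contradiction. Hence the CQ-graph of $Q$ has no cycle, i.e.\ it is acyclic.

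The only point that requires care is that a cycle in the CQ-graph really maps to a cycle in $T$ rather than collapsing onto a shorter path or a repeated edge; this is exactly what the injectivity from Proposition~\ref{prop:injective} secures, so the substantive work has already been carried out there. What remains is the routine observation that an injective, edge-preserving image of a cycle cannot sit inside a forest, which makes this corollary an immediate consequence of the preceding proposition.
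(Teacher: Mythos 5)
Your proof is correct and follows the same route as the paper: use the chase-completeness characterization to get an embedding of the CQ-graph into the annotated chase forest, invoke Proposition~\ref{prop:injective} for injectivity, and conclude that the CQ-graph, being an injective edge-preserving preimage of a subgraph of a forest, is acyclic. The paper's own proof is just a terse two-line version of exactly this argument; your explicit cycle-to-cycle contradiction fills in the routine details but adds nothing structurally different.
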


\begin{proof}
$Q$ is entailed 
The image of the CQ-graph through the injective homomorphism
is a forest.
\end{proof}

\paragraph{Determining entailment for acyclic connected graphs.} We now give the final setp
in our algorithm, which deals with deciding entailment of a connected, non-forking query
$Q$, which by Corollary \ref{cor:acyclic} must have an acyclic CQ-graph. 
Given an acyclic connected undirected graph and any vertex $v$ of the graph,
we can direct it be a tree with $v$ as the root.
Thus for such a $Q$  having $n$ variables,
the \emph{tree
  arrangements} are the $n$ possible ways to root the CQ-graph
of the query $Q$.  We are particularly interested in arrangements of $Q$
where the directionality from parent to child reflects the entailment structure
relative to $\Sigma$ between atoms in the query.
A tree arrangement $\A$ of $Q$ is \emph{faithfully entailed} if
for every variable $y$ in $Q$ with parent $x$ in the tree, 
there is an atom $A$ containing $x$ and not containing $y$ such that 
$A \wedge \Sigma$ entails $\exists y ~ B_{x,y}$, where $B_{x,y}$ is the conjunction
of all atoms whose variables are contained in $\{x, y\}$; in the case that $y$ is the
root, we require $\Sigma$ alone to entail $\exists y ~ B_{x,y}$.

In a faithfully-entailed tree arrangement, the conjunction of atoms holding
at the root of the tree entails the existence of the whole tree. We can further
find a single atom that entails the whole tree.  A \emph{root-generating atom}
of a tree arrangement is  an atom $A$ (not necessarily
in $Q$) containing the root variable $r$, such
that $A \wedge \Sigma$  generates all atoms mentioning $r$.

\begin{proposition}
A faithfully entailed tree arrangement for $Q$ must have a root-generating atom.
\end{proposition}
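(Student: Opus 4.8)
The plan is to read the root-generating atom directly off the shape of the chase. By the base case of faithful entailment, $\Sigma$ alone entails the existence of the root's local atoms, so there is a value $v$ in $\chase_\Sigma(\instance_0)$ at which every atom of $Q$ mentioning only the root variable $r$ holds. Since the source constraints are $\uid$s and $\instance_0$ is the single fact $\iscrit(\critelement)$, this $v$ is a node of the annotated chase forest: it is either a forest root (carrying $\iscrit(v)$, i.e. $v=\critelement$, or carrying some $R_\exists(v)$) or it has a unique incoming edge. In every case there is a single fact $A$ -- the fact whose creation first introduced $v$, taking $A=\iscrit(\critelement)$ when $v=\critelement$ -- witnessing the birth of $v$. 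I take this $A$, viewed as an atom whose $v$-argument is the distinguished variable $r$ and whose remaining argument is fresh, as the candidate root-generating atom.

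The heart of the argument is a \emph{locality} property of the $\uid$ chase over a single fact: every fact of $\chase_\Sigma(\instance_0)$ mentioning $v$, together with the whole subtree of the forest hanging below $v$, already appears in $\chase_\Sigma(\{A\})$. I would prove this by following the restricted chase in birth order. Because each $\uid$ exports at most one variable, every rule that fires on a fact mentioning $v$ either propagates along that shared value (minting a fresh tuple-identifier while keeping $v$) or consumes only a unary label at $v$; in both cases its trigger lies among the facts already generated from $A$. Invoking the unique adjoining label property of the restricted chase, no labelled edge at $v$ is produced twice, so the cascade emanating from $A$ reproduces exactly the facts at and below $v$, with no contribution from the rest of the forest. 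This is the step I expect to be the main obstacle, since it requires a careful case analysis of how $\uid$ rules act on the two positions (tuple-identifier versus value) of the binary schema, and an induction showing that the birth fact of any forest node already determines its entire subtree.

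With locality in hand the proposition follows quickly. Faithful entailment yields, by a routine top-down induction, an embedding of the subtree of $Q$ rooted at $r$ into $\chase_\Sigma(\instance_0)$ sending $r$ to $v$: the root's atoms are supplied by the base case at $v$, and for each child $y$ the parent atom guaranteed by faithful entailment, together with $\Sigma$, generates $\exists y~ B_{x,y}$, placing the edge to $y$ and $y$'s local atoms in the forest below $v$. Every query atom mentioning $r$ is therefore realized by a fact mentioning $v$, and by locality all such facts lie in $\chase_\Sigma(\{A\})$. Hence $A \wedge \Sigma$ generates all atoms mentioning $r$, so $A$ is a root-generating atom, as required.
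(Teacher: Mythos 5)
Your proof is correct and takes essentially the same route as the paper's: the paper also takes the fact that first generates the chase value to which the root variable is mapped (there obtained via the injective homomorphism of its earlier proposition, including the unary-head and binary-head cases you distinguish) as the root-generating atom, and it asserts in one sentence exactly the locality property you set out to prove, namely that this birth fact generates every fact containing that value. The only differences are ones of detail — you rebuild the embedding directly from the definition of faithful entailment rather than citing the injectivity proposition, you explicitly cover the case where the value is $\critelement$ or a forest root created by a frontier-$0$ rule, and you sketch the birth-order induction behind the locality claim that the paper leaves implicit.
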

\begin{proof}
We know that $Q$ must hold in the chase of the initial fact under $\Sigma$, and by Proposition
\ref{prop:injective} we know that there is an injective homomorphism $h$ from
$Q$ to the chase. Consider the point in the chase process
where  value $h(r)$ is first generated. This occurs by firing some rule with an atom, where
the head has either  a binary atom $A(x,y)$ or a unary atom $B(x)$. We consider
the case where the atom is binary,  and where the generated atom is $A(h(r),s)$. In this
case the fact $A(h(r),s)$ must generate every fact containing  $r$. Thus we can take the atom $A(r, w)$, where
$w$ is a fresh variable, as a root-generating atom.
The case of unary atoms and the case where $r$ is in the second position of the fact is similar.
\end{proof}

Given a tree arrangement $T$ of $Q$ and variable $x$ of $Q$, $T_x$ denotes
the the restriction of $T$ to the variables that are descendants of $x$ in $T$.




The main idea of our $\ptime$ algorithm is that it suffices to descend through the tree arrangement,
checking some  entailments for each parent-child pair in isolation.

\begin{proposition} \label{prop:checkrealizeinductive}
  There is a $\ptime$ algorithm taking as input a variable $x$ in a CQ  $Q$, a tree arrangement of $Q$,
and  an atom $A$ containing $x$ such that the existential quantification of $A$ is entailed by
$\Sigma$,
and  determining whether  $T_x$ is faithfully entailed
and $A$ is a root-generating atom.
\end{proposition}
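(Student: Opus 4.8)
The plan is to give an explicit recursive procedure $\textsc{CheckRealize}(x,A)$ that decides the conjoined property for the subtree $T_x$ together with the candidate generating atom $A$, and to prove it correct by induction on the height of $T_x$, exploiting the forest shape of the $\uid$ chase over a single fact. In the base case, $x$ is a leaf of $T_x$, so $T_x$ is vacuously faithfully entailed and $A$ is root-generating exactly when $A \wedge \Sigma$ entails every unary atom of $Q$ on $x$; each such check is an atomic entailment, decidable in $\ptime$ by the $\uid$ inference procedure of \cite{uidptimeimp}. In the inductive case, where $x$ has children $y_1,\dots,y_k$ in $T_x$, I would process each child $y_i$ as follows: enumerate the polynomially many candidate generating atoms for $y_i$ (one atom $S(w,y_i)$, $S(y_i,w)$, or $P(y_i)$ per predicate and position, with $w$ fresh), keeping only those whose existential quantification is entailed by $\Sigma$; for a fixed candidate $A_{y_i}$, verify (i) the faithful-entailment condition for the edge $(x,y_i)$, namely $A \wedge \Sigma$ entails $\exists y_i\, B_{x,y_i}$ while generating a $y_i$-node whose introducing atom is $A_{y_i}$, and (ii) recursively that $\textsc{CheckRealize}(y_i,A_{y_i})$ succeeds. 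Child $y_i$ is accepted when some candidate passes both tests, and $x$ is accepted when all children are accepted and $A$ generates the unary atoms on $x$. Each of the conditions in (i) is either atomic or a bounded conjunction reducing to atomic entailments, all solvable in $\ptime$ via \cite{uidptimeimp}.

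For correctness I would argue both directions by induction. For ``accepts implies $T_x$ faithfully entailed and $A$ root-generating,'' I would assemble the per-edge and per-node witnesses into a single witness for the whole subtree: since in the annotated chase forest each generated child is attached to its parent by a single edge (the unique adjoining label property) and $\uid$s propagate only one value, the portions of the chase generated below distinct children are produced independently, so the local witnesses compose into one embedding of $T_x$ into the chase of $A$. For the converse, I would use Proposition \ref{prop:injective}: the embedding of the connected, non-forking $T_x$ into the chase is injective, so its image is a genuine subforest; the introducing atom of the image of each child $y_i$ is then one of the enumerated candidates $A_{y_i}$ and is itself root-generating for $T_{y_i}$, which the recursion will discover.

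The hard part will be the underlying decomposition lemma: that ``$A$ is root-generating for $T_x$ and $T_x$ is faithfully entailed'' is captured inductively by the isolated per-edge and per-node checks. This rests on the key locality property of the $\uid$ chase from a single fact, namely that the fragment of the chase reachable below any generated value depends only on the atom that introduced that value (its predicate and exported position) and not on the surrounding instance. I would derive this locality from the annotated-chase-forest construction together with the restricted/unique-adjoining structure, since it is exactly what licenses verifying a global entailment through independent local checks. The subtle obstacle is ruling out any ``cross-subtree'' interaction that could entail $Q$ in a manner not witnessed by the tree decomposition; this is precisely excluded by injectivity of the embedding (Proposition \ref{prop:injective}) and the forest shape of the chase guaranteed by Corollary \ref{cor:acyclic}.

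Finally, for the complexity bound I would observe that the recursion visits each of the $O(|Q|)$ variables exactly once, performing at each variable $O(\lvert\text{schema}\rvert)$ candidate atoms times a bounded number of atomic-entailment queries, each answered in $\ptime$ by \cite{uidptimeimp}; hence the whole procedure runs in $\ptime$, as required.
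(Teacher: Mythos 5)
Your proposal is correct and follows essentially the same route as the paper's proof: a top-down recursion over the tree arrangement in which the root atom and each parent--child edge are checked by local entailment queries answered with the $\ptime$ $\uid$ inference procedure of \cite{uidptimeimp}, with correctness resting on Proposition \ref{prop:injective} and the locality/forest structure of the restricted $\uid$ chase. The only real difference is that your per-child enumeration of candidate introducing atoms (and the accompanying ``introduced by $A_{y_i}$'' side condition) is redundant: since in the $\uid$ chase the unique fact containing both $h(x)$ and $h(y_i)$ is precisely the fact that introduced $h(y_i)$, the only candidate that can succeed is a renaming of the unique connecting atom $B$ itself, which is why the paper simply recurses on $(y,B)$ with no enumeration.
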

\begin{proof}
We first check whether  $A$ is a root-generating atom,
  using a $\ptime$ inference algorithm  for $\uid$s \cite{uidptimeimp}.
We then consider each child $y$ of $x$ in the tree arrangement.  We know that there
is exactly one conjunct $B$ containing $x$ and $y$.  We check
whether $A$ entails $\exists y ~ B$, and then call the algorithm
recursively for $y$ and $B$. If each recursive call succeeds, the
algorithm succeeds.
\end{proof}

From the prior proposition we get a $\ptime$ algorithm for the arrangement as a whole:
\begin{proposition} \label{prop:checkrealizetop}
  There is a $\ptime$ algorithm taking a tree arrangement of CQ  $Q$, 
and  an atom $A$ containing the root of the arrangement,
and  determines whether  the whole tree arrangement  can be faithfully entailed and
$A$ is a root-generating atom.
\end{proposition}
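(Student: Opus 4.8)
The plan is to reduce the top-level decision almost entirely to the inductive algorithm of Proposition~\ref{prop:checkrealizeinductive}, applied at the root $r$ of the given tree arrangement $T$. Recall that Proposition~\ref{prop:checkrealizeinductive} assumes as a precondition that the existential quantification of its input atom $A$ is entailed by $\Sigma$; the only genuinely new work at the top level is to discharge this precondition, which is precisely the clause of faithful entailment that applies at the root.

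Concretely, I would proceed as follows. First I would check whether $\Sigma$, over the initial single-fact instance $\instance_0 = \{\iscrit(\critelement)\}$, entails $\exists \vec w ~ A$, the existential quantification of the candidate root atom $A$. This is a single atomic entailment problem for $\uid$s, and can be decided in $\ptime$ using the inference procedure of \cite{uidptimeimp}. If this entailment fails, then the root $r$ cannot be realized in $\chase_\Sigma(\instance_0)$ through $A$, so no faithfully entailed arrangement with $A$ as its root-generating atom can exist, and the algorithm rejects. If it succeeds, the precondition of Proposition~\ref{prop:checkrealizeinductive} is met, so I would invoke that algorithm with inputs $x = r$, the tree arrangement $T$, and the atom $A$. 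Since $T_r$ is the whole tree, its output tells us precisely whether the full arrangement is faithfully entailed and $A$ is a root-generating atom.

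For correctness, observe that the root clause of faithful entailment requires $\Sigma$ alone to generate all atoms at $r$; this is equivalent to the conjunction of ``$A$ is root-generating'' (so that $A \wedge \Sigma$ generates every atom mentioning $r$) and ``the existential quantification of $A$ is entailed by $\Sigma$ over $\instance_0$'' (so that $A$ itself is realized from the initial fact). The first conjunct is exactly what Proposition~\ref{prop:checkrealizeinductive} verifies at the root, while the second is discharged by the atomic entailment check above. For all non-root variables, the parent--child conditions of faithful entailment are checked verbatim by the recursive descent of Proposition~\ref{prop:checkrealizeinductive}. Hence the combined procedure decides exactly the stated conjunction, and its running time is that of one atomic $\uid$ entailment check plus one call to the $\ptime$ algorithm of Proposition~\ref{prop:checkrealizeinductive}, so it is in $\ptime$.

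The main obstacle is conceptual rather than computational: one must recognize that the root is the single point where the purely local notion of ``root-generating atom'' is insufficient on its own and must be paired with the global requirement that $A$ be realizable from the single initial fact $\iscrit(\critelement)$. Identifying this extra atomic entailment as the exact content of the root clause of faithful entailment --- and verifying that it is the only respect in which the top-level case departs from the inductive one --- is the crux; once this is in place, the reduction to Proposition~\ref{prop:checkrealizeinductive} is immediate.
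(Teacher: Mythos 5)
Your proposal is correct and matches the paper's proof essentially verbatim: both first discharge the precondition by checking the atomic $\uid$ entailment $\Sigma \models \iscrit(c) \rightarrow \exists \vec{y}\, A$ over the single initial fact via the $\ptime$ procedure of the cited inference algorithm, and then invoke the algorithm of Proposition~\ref{prop:checkrealizeinductive} at the root. Your additional correctness discussion simply makes explicit what the paper leaves implicit.
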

\begin{proof}
We first need to
check that $A$ is entailed, which amounts to 
checking that $\Sigma\models \iscrit(c)\rightarrow \exists y ~ A$. As before
this
can be done  using \cite{uidptimeimp}.
We then utilize the algorithm of Proposition \ref{prop:checkrealizeinductive}.
\end{proof}
Note that Proposition \ref{prop:checkrealizetop} gives a polynomial time algorithm
for checking whether a tree arrangement can be faithfully entailed. We can apply
the algorithm of the proposition with
every possible unary and binary atom $A$ containing the root variable. In the binary case, we consider all atoms containing
the root variable  and an additional
fresh variable.

\paragraph{Putting it all together.}
Putting together our reduction to $\uid$-entailment (Proposition \ref{prop:reduceuid}), our
 schema simplification (Proposition \ref{prop:reducebinary})
the query simplifications (the reduction to connected CQs,
Proposition \ref{prop:nonforking}, and Corollary \ref{cor:acyclic}), and our $\ptime$ algorithm for simplified queries 
(Proposition \ref{prop:checkrealizetop})
we obtain the proof of Theorem \ref{thm:uidptime}.

\newpage
\section{Detailed Proofs from Section \ref{sec:lower}: Lower Bounds for Disclosure}
\subsection{Proof of the first part: Theorem \ref{thm:incdguardedlower}:
 $\twoexp$-hardness for $\incd$ and $\guardedmap$ without arity bound}

Recall the first part of Theorem \ref{thm:incdguardedlower}:
\medskip

 $\discloseClass(\incd,\guardedmap)$ is $\twoexp$-hard.

\medskip

Recall that Theorem \ref{thm:critinst} relates disclosure to a $\hocwq$ problem on a very simple instance.
Also recall from Section \ref{sec:reduction} the intuition that such a problem amounts to a classical entailment problem 
for a CQ  over a very simple
instance,  using the source dependencies and $\singleconstantrule$s: of the
form  $\phi(\vec{x}) \rightarrow x=\critelement$, where $\phi$  will be the body of a mapping. 
We show here how to simulate the run of an alternating $\expspace$ Turing machine $\TuringMachine$ 
without explicitly
using $\singleconstantrule$s, instead using  inclusion dependencies as source constraints
coupled with guarded mappings.
An alternating Turing machine $\TuringMachine$ is a $6$-tuple 
$(\states,\letters,\transitionFunction_\alpha,\transitionFunction_\beta,\initialState,\stateType)$ where:
\begin{itemize}
 \item $\states$ is the finite set of states
 \item $\letters$ is the finite tape alphabet
 \item $\transitionFunction_\alpha$ and $\transitionFunction_\beta$ are functions from $\states\times\letters$ to $\states\times\letters\times\{L,R\}$
 \item $\initialState \in \states$ is the initial state
 \item $\stateType$ is a function from $\states$ to $\{accept,reject,\forall,\exists\}$ that specifies the type of each state.
\end{itemize}
We assume that $\TuringMachine$ always alternates between existential and 
universal states, and that there is a unique final state, 
that can be reached only if the head is in the first cell and contains a 
specific symbol. All of these assumptions can be made without loss of 
generality. If $\TuringMachine$ is in a configuration where whose state $q$ 
is such that $\stateType(q) = accept$, the configuration is said 
to be accepting. 
If $\TuringMachine$ is in a configuration where whose state $q$ is such 
that $\stateType(q) = \forall$, the configuration is said to be
accepting if its $\alpha$ and $\beta$ successors (obtained after applying $\transitionFunction_\alpha$ or $\transitionFunction_\beta$) are accepting. 
If $\TuringMachine$ is in a configuration whose state $q$ is such 
that $\stateType(q) = \exists$, the configuration is said to
be accepting if its $\alpha$-successor or its $\beta$-successor is accepting. 
A more thorough introduction to Turing machines can be found
 in \cite{papadimitriou}.  

We first present the reduction, and show its correctness in the next subsection.

\subsection{The Reduction}
We will create constraints and mappings that will serve to perform the following tasks:
\begin{itemize}
 \item generate addresses for cells of $\TuringMachine$ in such a way that one can check whether two addresses are consecutive in a 
guarded way. The same addresses will be used for all the configurations. This will be done by a 
mapping creating $k$ copies of two individuals that represent $0$ and $1$, along with inclusion dependencies that perform permutations and generate $2^k$ addresses;
 \item encode the content of a cell, the position of the head, and the state of the head: for each cell, 
we store a vector whose length is the size of $(\letters \cup \{\flat\}) \times (\states\cup\bot)$. 
Each position corresponds to an element $(l,s)$ of that set; we will arrange that the position
 contains $\critelement$ if and only if the cell contains $l$, and either the head is over that cell and is in state $s$, or the head is 
not over that cell and $s = \flat$. All values are first freshly instantiated by inclusion dependencies, and mappings
 are then responsible for unifying the correct positions with $\critelement$;
 \item ensure that the tape that is associated with a successor of a configuration can be obtained by a transition of the Turing machine: 
this is also performed by using a mapping to enforce the correct positions of the cell to be unified with $\critelement$;
 \item check that configurations are accepting: this is the case either when the corresponding tape is in a final accepting state, or when it is in an existential state and one of the two successor configurations is accepting, or it is in a universal state, and both successor configurations are accepting.
\end{itemize}
\begin{figure*}[h]
\begin{center}
 
\begin{tikzpicture}
\draw[rounded corners=3mm] (4,5)
  -- (4.55,5)
  -- (4.55,-0.15)
  -- (0,-0.15)  
  -- cycle;

\draw (3.6,3.85) node (c) {$c$};
\draw (4.3,3.85) node (ac) {$ac$};
\draw (3.75,0.25) node (beta) {$\beta~ac_\beta$};
\draw (1,0.25) node (alpha) {$\alpha~ac_\alpha$};
\draw (3,2.8) node (y00) {$y_0$};
\draw (3,2.1) node (y01) {$y_0$};
\draw (3,1.4) node (y10) {$y_1$};
\draw (3,0.7) node (y11) {$y_1$};
\draw (2.4,0.25) node (children1) {$\children_\forall$};

\draw[rounded corners=6mm] (1.5,1.5) 
  -- (-3.6,-3.6)
  -- (1.5,-3.6) 
  -- cycle;

 \draw (0.6,-3.2) node (alphaAlpha) {$\alpha_\beta~ac_{\alpha_\beta}$};
\draw (-2.3,-3.2) node (alphaAlpha) {$\alpha_\alpha~ac_{\alpha_\alpha}$};
\draw (0,-0.5) node  {$y_0$};
\draw (0,-1.2) node  {$y_0$};
\draw (0,-1.9) node {$y_1$};
\draw (0,-2.6) node {$y_1$};
\draw (-1,-3.25) node (children2) {$\children_\exists$};

\draw[dashed, rounded corners=2.5mm] (3.8,4.1)
  -- (4.2,3.5)
  -- (8.5,3.5)
  -- (8.5,3)
  -- (4.2,3)
  -- (4.2,0)
  -- (3.3,0)
  -- (3.3,4.1)
  -- cycle;
\draw (9,3.25) node {$\data$};
\draw (6.5,3.25) node {$y_0~y_0\quad\mathbf{v^{00}}\mathbf{v_{prev}^{00}}~\mathbf{v_{next}^{00}}$};

\draw[dotted, rounded corners=2.5mm] (3.8,4.1)
  -- (4.2,3.5)
  -- (4.2,2.75)
  -- (8.5,2.75)
  -- (8.5,2.25)
  -- (4.2,2.25)
  -- (4.2,0)
  -- (3.3,0)
  -- (3.3,4.1)
  -- cycle;
\draw (6.5,2.5) node {$y_0~y_1\quad\mathbf{v^{01}}~\mathbf{v_{prev}^{01}}~\mathbf{v_{next}^{01}}$};
\draw (9,2.5) node {$\data$};  

  \draw[densely dotted, rounded corners=2.5mm] (3.8,4.1)
  -- (4.2,1.95)
  -- (8.5,1.95)
  -- (8.5,1.45)
  -- (4.2,1.45)
  -- (4.2,0)
  -- (3.3,0)
  -- (3.3,4.1)
  -- cycle;

\draw (6.5,1.7) node {$y_1~y_0\quad\mathbf{v^{10}}~\mathbf{v_{prev}^{10}}~\mathbf{v_{next}^{10}}$};
\draw (9,1.7) node {$\data$};

  \draw[loosely dotted, rounded corners=2.5mm] (3.8,4.1)
  -- (4.2,1)
  -- (8.5,1)
  -- (8.5,0.5)
  -- (4.2,0.5)
  -- (4.2,0)
  -- (3.3,0)
  -- (3.3,4.1)
  -- cycle;

\draw (6.5,0.75) node {$y_1~y_1\quad\mathbf{v^{11}}~\mathbf{v_{prev}^{11}}~\mathbf{v_{next}^{11}}$};
\draw (9,0.75) node {$\data$};

\draw (5.1,0.25) node {$\underbrace{\qquad}$};
\draw (5.1,0) node {$\mathrm{address}$};
\end{tikzpicture}
\end{center}

  \caption{The generated structure} \label{fig:coding}
\end{figure*}
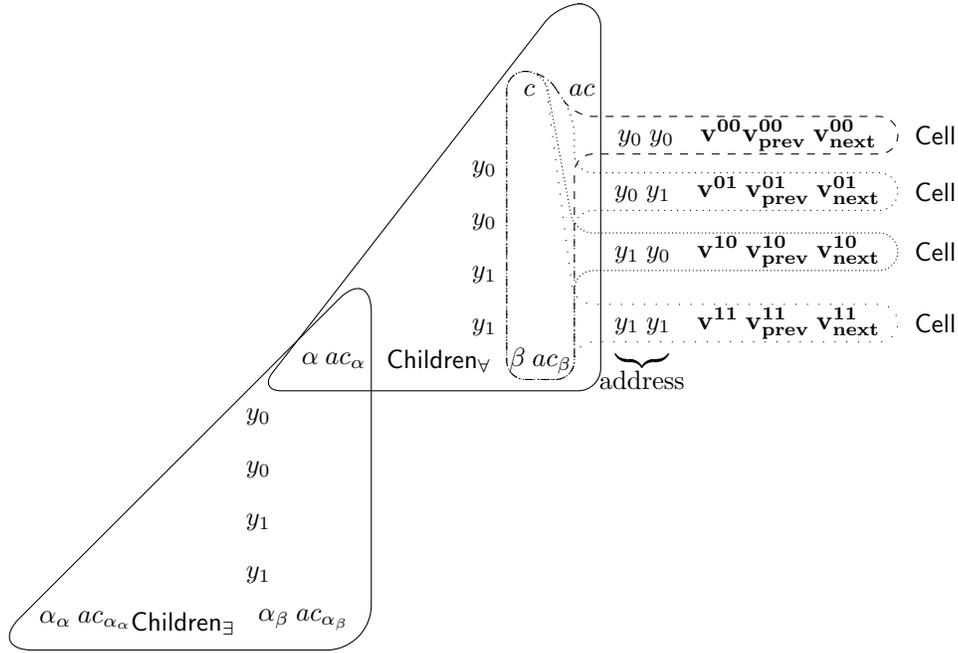

 Let us describe the source signature. For each predicate, we will explain  what feature of the ATM $\TuringMachine$ 
it should represent in the appropriate instance generated by the constraints.
By ``the appropriate instance'', we mean the visible chase of the initial instance
over the source constraints and mappings: this was introduced after Theorem \ref{thm:chase}, and it was noted
that it is the canonical instance for the source and targets  to consider for  disclosure.

 We use $\mathbf{y^{1,k}}$ to represent a tuple $(y^1,\ldots,y^k)$, and $\mathbf{y^{k}}$ to represent the tuple $(y,\ldots,y)$ of size $k$. 
 \begin{itemize}
  \item $\children_\forall(c,c_\alpha,c_\beta,ac,ac_\alpha,ac_\beta,\mathbf{y_0^{1,k}},\mathbf{y_1^{1,k}},r,z,y_0,y_1)$.
The intended meaning is that a configuration $c$ is universal and has as children $c_\alpha$ and $c_\beta$, and that the acceptance bit of $c$ is $ac$, of $c_\alpha$ is $ac_\alpha$ and of $c_\beta$ is $ac_\beta$. The last four positions are placeholders: $r$ for the root of the tree of configurations, $z$ for $\critelement$, $y_0$ for a value representing $0$ and $y_1$ for a value representing $1$. 
  \item $\children_\exists(c,c_\alpha,c_\beta,ac,ac_\alpha,ac_\beta,\mathbf{y_0^{1,k}},\mathbf{y_1^{1,k}},r,z,y_0,y_1)$: same intended meaning, except that $c$ is existential. 
  \item $\data(c_p,c_n,\mathbf{y^{1,k}},\mathbf{v},\mathbf{v_{prev}},\mathbf{v_{next}},r,z,y_0,y_1)$ 
with intended meaning that the cell of address $\mathbf{y^{1,k}}$ of the tape represented by $c_n$ has a content represented 
by $\mathbf{v}$, while the previous cell has a content represented by $\mathbf{v_{prev}}$ and the next cell has  content represented 
by $\mathbf{v_{next}}$. The last four positions are placeholders for the root of the tree of configurations, $\critelement$, 
a value representing $0$ and a value representing $1$.
  \item $\data_i^{c}(c,\mathbf{y^{1,k}},x,z,y_0,y_1)$ with intended meaning that the cell of address $\mathbf{y^{1,k}}$ in configuration $c$ contains $x$ at the $i^\mathrm{th}$ position of the representation of its content. $\data_i^{p}$ and $\data_i^{n}$ play similar roles for the cell before and after the cell of address $\mathbf{y^{1,k}}$.
  \item $\genAddr$ is an auxiliary predicate used to generated an exponential number of addresses.
  \item $\kw{succ}_\alpha(c_p,c_n)$ states that $c_n$ is the $\alpha$-successor of $c_p$ (and similarly for $\beta$)
 \end{itemize}

Below we will always use the symbol $\quantif$ to range over $\{\forall,\exists\}$.
 
The structure generated by the inclusion dependencies is represented Figure \ref{fig:coding}. Atoms are represented by geometric shapes in the inside of which are arguments (some are omitted to ease the reading). The $\children_\quantif$ atoms form a tree shaped structure, and induce a tree structure on the configuration identifiers: for instance, $c$ is the parent of $\alpha$ and $\beta$. $\data$ atoms are associated with a configuration identifier (for instance, those represented are associated with $\beta$), and has the parent configuration identifier to ensure guardedness of the mappings used in the following reduction. Note that the elements used to describe the cell's addresses ($y_0$ and $y_1$) also appear in the $\children_\quantif$ atoms, to ensure guardedness.

\paragraph{Initialization.}
We first define a mapping $\T_{\init}(x)$, introducing some elements in the visible chase. The   definition of this mapping is:
\begin{align*}
 \children_\exists(c_{root},&c_\alpha,c_\beta,ac_{root},ac_\alpha,ac_\beta,\\
 &\mathbf{y_0^{1,k}},\mathbf{y_1^{1,k}},c_{root},x,y_0^0,y_1^0)
\end{align*}
\paragraph{Generation of the tree of configuration.}
$\alpha$-successors have themselves $\alpha$- and $\beta$-successors, and are existential if their parent is universal:
\begin{align*}
 &\children_\forall(c,\alpha,\beta,ac,ac_\alpha,ac_\beta,\mathbf{y_0^{1,k}},\mathbf{y_1^{1,k}},r,z,y_0,y_1) \\
 &\rightarrow \exists \alpha_\alpha,\alpha_\beta,ac_{\alpha_\alpha},ac_{\alpha_\beta}\\
 &\children_\exists(\alpha,\alpha_\alpha,\alpha_\beta,ac_\alpha,ac_{\alpha_\alpha},ac_{\alpha_\beta},\mathbf{y_0^{1,k}},\mathbf{y_1^{1,k}},r,z,y_0,y_1)
\end{align*}
And similarly for $\children_\exists$ and for the $\beta$-successor.

\paragraph{Universal and existential acceptance condition.}
If both successors of a universal configuration $n$ are accepting, so is $n$. We create a mapping $\T_\forall(x)$ with definition:
\begin{align*}
 \children_\forall(c,\alpha,\beta,x,z,z,\mathbf{y_0^{k}},\mathbf{y_1^{k}},r,z,y_0,y_1) \\
\end{align*}

If the $\alpha$-successor of an existential configuration $n$ is accepting, so is $n$. We create a mapping $\T_{\exists,\alpha}(x)$ with definition:
\begin{align*}
 \children_\exists(c,\alpha,\beta,x,z,ac_\beta,\mathbf{y_0^{k}},\mathbf{y_1^{k}},r,z,y_0,y_1) \\
\end{align*}

We create a similar mapping $\T_{\exists,\beta}$ for the $\beta$-successor.

\paragraph{Tape representation and consistency of tapes.}
We now focus on the representation of the tape and its consistency. We generate $2^k$ addresses and associated values:
\begin{align*}
 &\children_{\quantif}(c,\alpha,\beta,ac,ac_\alpha,ac_\beta,\mathbf{y_0^{1,k}},\mathbf{y_1^{1,k}},r,z,y_0,y_1)\\
 &\rightarrow \genAddr(c,\alpha,\mathbf{y_0^{1,k}},\mathbf{y_1^{1,k}},r,z,y_0,y_1) 
 \end{align*}
 
 \begin{align*}
 &\children_{\quantif}(c,\alpha,\beta,\mathbf{y_0^{1,k}},\mathbf{y_1^{1,k}},r,z,y_0,y_1)\\
 &\rightarrow \genAddr(c,\beta,\mathbf{y_0^{1,k}},\mathbf{y_1^{1,k}},r,z,y_0,y_1) 
 \end{align*}

$\genAddr$ will generate addresses to represent the tape associated with its fifth argument. To emphasize
this, we use the letter $n$ (as node) at this position, while the fourth argument contains its parent configuration, denoted by $p$.
\begin{align*}
 &\genAddr(c_p,c_n,a_1,\ldots,a_i,\ldots,a_{k+i},\ldots,a_{2k},r,z,y_0,y_1) \\
 &\rightarrow \genAddr(c_p,c_n,a_1,\ldots,a_{k+i},\ldots,a_{i},\ldots,a_{2k},r,z,y_0,y_1) 
 \end{align*}
 
 For each address, we initialize its content (as well as the content of the previous and next cells) by fresh values $\mathbf{v},\mathbf{v_{prev}},\mathbf{v_{next}}$.
\begin{align*}
 &\genAddr(c_p,c_n,a_1,\ldots,a_{2k},z,y_0,y_1) \rightarrow \exists \mathbf{v},\mathbf{v_{prev}},\mathbf{v_{next}}~ \\
 & \data(c_p,c_n,a_1,\ldots,a_{k},\mathbf{v},\mathbf{v_{prev}},\mathbf{v_{next}},r,z,y_0,y_1) 
 \end{align*}
Note that the values $\mathbf{v}$, $\mathbf{v_{prev}}$ and $\mathbf{v_{next}}$ are vectors of length the size of $(\letters \cup \{\flat\}) \times (\states \cup \bot)$. In particular, we use the notation $\mathbf{l_i}(x)$ to represent a vector of same length, composed of fresh variables, except for the position $i$, that contains $x$.

We now use mappings to force some of these values to be equal to $\critelement$. Each position of $\mathbf{v}$ represents 
an element of $(\letters \cup \{\flat\}) \times (\states\cup\bot)$, and we will enforce exactly one of these positions to contain $\critelement$. If the head of the Turing machine is on the cell represented, then the position of $v$ corresponding to $(a,q)$ where $a$ is the letter in the cell and $q$ the state of the Turing machine, will contain $\critelement$. Otherwise, the position of $v$ corresponding to $(a,\bot)$ will contain $\critelement$. 

As we store the content of a cell in several atoms, we must  ensure that the tape associated with a configuration is consistent, by checking that $\mathbf{v_{next}}$ is consistent with $\mathbf{v}$ from the next cell. To ensure guardedness, we first introduce auxiliary predicates $\data_i^{c}, \data_i^{p}$ and $\data_i^{n}$ that define the content of the $i^\mathrm{th}$ bit of the value of the current, previous and next cells:
\begin{align*}
&\data(c_p,c_n, \mathbf{y^{1,k}}, \mathbf{l_i}(x), \mathbf{v'_{prev}}, \mathbf{v'_{nxt}},r,z,y_0, y_1) \\
&\rightarrow \data_i^{c}(c_n,\mathbf{y^{1,k}},x)
\end{align*}
We now introduce the definition of a mapping $\T_{\kw{data}_n}(x)$ which ensures the consistency of the tape content (note that the first atom is a guard):
\begin{align*}
&\data(c_p,c_n, y_{b_1},\ldots,y_{b_j},y_0,\mathbf{y_1},\mathbf{v}, \mathbf{v_{prev}},\mathbf{l_i}(x),r,z,y_0, y_1) \\
&\wedge \data_i^{c}(c_n, y_{b_1},\ldots,y_{b_j},y_1,\mathbf{y_0},z) 
\end{align*}

$\T_{\kw{data}_p}(x)$ is defined similarly to deal with the previous cell.

We enforce the tape of the initial configuration to have the head of the Turing machine on the first cell (and assume w.l.o.g that this is represented by the first position of $\mathbf{v}$ containing $\critelement$) and all the other cells containing $\flat$ (and we assume w.l.o.g that this is represented by the second position of $\mathbf{v}$ containing $\critelement$). We thus create the mappings $\T_{tape_i}(x)$, for the the first cell, having definition:
\[\data(c_p,c_n, \mathbf{y_0}, l_1(x),\mathbf{v_{prev}},\mathbf{v_{next}},c_p,z,y_0, y_1)\]

and we introduce the mappings $\T_{tape_o}(x)$, for all the other cells, having definition:
\[\data(c_p,c_n, \ldots,y_1,\ldots,a_n, l_2(x),\mathbf{v_{prev}},\mathbf{v_{next}},c_p,z,y_0, y_1)\]

Note that this data is associated with the children of the root (as $p$ is both in the fourth and last minus three positions of the atoms), and not with the root itself, due to the choice of keeping in $\data$ the identifier of the parent of the considered configuration.

  We then check that the tape associated with the $\alpha$-successor of a configuration is indeed obtained by applying an $\alpha$-transition.  This is done by noticing that the value of each cell of the $\alpha$-successor is deterministically defined by the value of the cell and its two neighbors in the original configuration (the neighbors are necessary to know whether the head of the Turing machine is now in the considered cell). To ensure guardedness, we first define a predicate marking $\alpha$-successors (and similarly for $\beta$-successors):
\begin{align*}
& \children_\quantif(c,\alpha,\beta,z,ac_\alpha,ac_\beta,\mathbf{y_0^{k}},\mathbf{y_1^{k}},c_{root},z,y_0,y_1)\\
& \rightarrow {\kw{succ}_\alpha}(c,\alpha)
\end{align*}

Let us consider a cell of address $\mathbf{b^{1,k}}$ in $c_p$. We assume that its content is represented by $i$, while the content of its left (resp. right) neighbor is represented by $j$ (resp. $k$). We represent the fact that this implies that the content of the cell of address $\mathbf{b^{1,k}}$ is $w$ in the $\alpha$-successor of $c_p$ by the following mapping $\T^\alpha_{i,j,k\rightarrow w}(x)$:

\begin{align*}
 &\data(c_p,c_n, \mathbf{b^{1,k}},\mathbf{l_w}(x),\mathbf{v_{prev}}, \mathbf{v_{next}},r,z,y_0, y_1) \\
&\wedge \data_i^{c}(c_p,\mathbf{b^{1,k}},z) \\
&\wedge \data_j^{p}(c_p,\mathbf{b^{1,k}},z) \\
&\wedge \data_k^{n}(c_p,\mathbf{b^{1,k}},z) \\
&\wedge {\kw{succ}_\alpha}(c_p,c_n)
\end{align*}

Note that the above formulation requires the content of the previous and of the next cells, which makes this mappings not applicable when $\mathbf{b^{1,k}}$ is the address of either the first or the last cell. We thus add rules to specifically deal with these two cases (that looks at the content of the current and next cell when $\mathbf{b^{1,k}}$ is a vector of $y_0$, and at the content of current and previous cell when $\mathbf{b^{1,k}}$ is a vector of $y_1$).
Note that there is only polynomially such mappings to be built.
And we finally create a mapping $\T_{accept}(x)$ enforcing that configurations whose tape is in an accepting state (which we assume w.l.o.g. corresponds to the case where the first cell contains the $l^\mathrm{th}$ bit) are declared as accepting. 
\begin{align*}
& \data_l^{c}(c_n,\mathbf{y_0^k},z)\\
 & \wedge  \children_\quantif(c_n,\alpha,\beta,x,ac_\alpha,ac_\beta,\mathbf{y_0^{k}},\mathbf{y_1^{k}},r,z,y_0,y_1)\\
\end{align*}
The policy query is 
\[\children_\exists(root,\alpha,\beta,z,ac_\alpha,ac_\beta,\mathbf{y_0^{1,k}},\mathbf{y_1^{1,k}},root,z,y_0,y_1),\]
We will show that this policy query is disclosed if and only if the original Turing machine accepts on the empty tape.

\subsection{Proof of Correctness}
We show that the policy is disclosed if and only if $\TuringMachine$ accepts on the empty tape. By Theorem
\ref{thm:critinst}, the policy is disclosed if and only if the corresponding $\hocwq$ problem has a positive
answer. Further, this  holds if and only if the policy query holds on the result
of the visible chase  (introduced after Theorem \ref{thm:chase}). We thus focus on showing the equivalence of the acceptance of the empty tape by $\TuringMachine$ and the satisfaction of the policy in the visible chase. 

Let us start by describing some relationships between the visible chase of $\critinst^\targetschema$ and the run of $\TuringMachine$. As $\critinst^\targetschema$ contains $\T_{\init}(\critelement)$, there is in the visible chase the atom
\begin{align*} 
 \children_\exists(c_{root},&c_\alpha,c_\beta,ac_{root},ac_\alpha,ac_\beta,\\
 &\mathbf{y_0^{k}},\mathbf{y_1^{k}},c_{root},\critelement,y_0^0,y_1^0),
\end{align*}

where all individuals but $\critelement$ are nulls. 

\begin{definition}[Tape Representation]
\label{def:tape-representation}
 Let $T$ be a tape (with head position and state included) of $\TuringMachine$. A representation of $T$ is a set of atoms
 \[\{\data(c_p,c_n,\mathbf{a},\mathbf{v}^{\mathbf{a}},\mathbf{v_{prev}}^\mathbf{a},\mathbf{v_{next}}^\mathbf{a},\critelement,y_0,y_1) \}_\mathbf{a},\]
where $\mathbf{a}$ ranges over the binary representations of the addresses of $T$, and such that for any cell of $T$ the following holds:
\begin{itemize}
 \item for any $a$, $\mathbf{v}$ contains fresh nulls except for the 
bit that represents the content of $T$ at address $\mathbf{a}$, where it contains $\critelement$
 \item for any $a$ except the representation of the leftmost 
cell, $\mathbf{v_{prev}}$ contains fresh nulls except for the 
bit that represents the content of $T$ at address $\mathbf{a}-1$;
in this bit it contains $\critelement$ 
($\mathbf{v_{prev}}$ exclusively contains fresh nulls for the leftmost cell)
  \item for any $a$ except the representation of the rightmost cell, 
$\mathbf{v_{next}}$ contains fresh nulls except for the bit that 
represents the content of $T$ at address $\mathbf{a}+1$;
on this bit  it contains $\critelement$ ($\mathbf{v_{next}}$ exclusively contains fresh nulls for the rightmost cell)
\end{itemize}
In that case, $c_n$ is called a representative of $T$.
\end{definition}

\begin{lem}
\label{lem:init}
 $c_\alpha$ and $c_\beta$, as defined above Definition \ref{def:tape-representation}, are representatives of the initial tape.
\end{lem}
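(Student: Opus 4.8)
The plan is to trace the visible chase of $\critinst^\targetschema$ explicitly and to verify that the $\data$-atoms attached to $c_\alpha$ and $c_\beta$ satisfy the three conditions of Definition \ref{def:tape-representation}. First I would recall that, because $\critinst^\targetschema$ contains $\T_{\init}(\critelement)$, the visible chase contains the root atom $\children_\exists(c_{root}, c_\alpha, c_\beta, \ldots, \critelement, y_0, y_1)$, in which the first address tuple is the all-$0$ tuple, the second is the all-$1$ tuple, and every argument other than $\critelement$ is a fresh null. Applying the $\children_\quantif \to \genAddr$ inclusion dependency to this atom then produces $\genAddr(c_{root}, c_\alpha, \ldots)$ and $\genAddr(c_{root}, c_\beta, \ldots)$, whose first $k$ address positions carry the canonical $0$ and whose second $k$ positions carry the canonical $1$.

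Next I would show that the swap dependency for $\genAddr$, which exchanges $a_i$ with $a_{k+i}$, generates exactly the $2^k$ distinct addresses: each position $i \le k$ can independently be left as the canonical $0$ (unswapped) or moved to the canonical $1$ (swapped), and since these two values are distinct nulls the resulting address tuples are pairwise distinct. Feeding each of these into the content-initialization dependency yields, for every address $\mathbf{a}$, a fact $\data(c_{root}, c_\alpha, \mathbf{a}, \mathbf{v}^{\mathbf{a}}, \mathbf{v_{prev}}^{\mathbf{a}}, \mathbf{v_{next}}^{\mathbf{a}}, \critelement, y_0, y_1)$ with fresh content vectors, and symmetrically for $c_\beta$; this supplies the full set of $2^k$ addressed cells demanded by the definition.

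I would then identify which positions of the content vectors are merged to $\critelement$. The mappings $\T_{tape_i}$ and $\T_{tape_o}$ both carry the guard $r = c_p$, so they fire precisely on the $\data$-atoms whose parent configuration is the root, i.e. on those of $c_\alpha$ and $c_\beta$: $\T_{tape_i}$ fires only on the all-$0$ address and sets the bit coding $(l,s)$, while $\T_{tape_o}$ fires on every address containing a $1$ and sets the bit coding $(\flat,\bot)$. Since these two address conditions are mutually exclusive and jointly exhaustive, each $\mathbf{v}$-vector receives exactly one $\critelement$, matching the empty initial tape with the head on the first cell. The consistency mappings $\T_{\kw{data}_n}$ and $\T_{\kw{data}_p}$ then force $\mathbf{v_{next}}^{\mathbf{a}}$ and $\mathbf{v_{prev}}^{\mathbf{a}}$ to agree with $\mathbf{v}^{\mathbf{a}+1}$ and $\mathbf{v}^{\mathbf{a}-1}$, yielding the second and third requirements of Definition \ref{def:tape-representation}.

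The step I expect to be the main obstacle is the converse inclusion: that \emph{no other} merge injects a spurious $\critelement$ into the content of $c_\alpha$ or $c_\beta$. The only content-altering mappings beyond the consistency ones are the transition mappings $\T^\alpha_{i,j,k\to w}$ and $\T^\beta_{i,j,k\to w}$, which would set a content bit of a child from the content of its parent via atoms $\data_i^{c}(c_p,\ldots)$, $\data_j^{p}(c_p,\ldots)$, $\data_k^{n}(c_p,\ldots)$. Here I would argue that the root $c_{root}$ never occupies the node slot $c_n$ of any $\data$- or $\genAddr$-atom — it appears only as the parent $c_p$, since the $\children_\quantif \to \genAddr$ rule places the configuration identifier in the parent position — so $c_{root}$ carries no cell content, the facts $\data_i^{c}(c_{root},\ldots)$ needed to trigger a transition mapping on $c_\alpha$ or $c_\beta$ simply do not exist, and the transition mappings are inert on the root's children. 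Combined with the observation that the acceptance and successor-marking rules touch only acceptance bits, this pins the forced $\critelement$-positions down to exactly those described above, so the $\data$-atoms of $c_\alpha$ and $c_\beta$ constitute a representation of the initial tape, and the lemma follows.
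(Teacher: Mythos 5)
Your proof is correct and follows essentially the same route as the paper's: generate the $\data$-atoms for all $2^k$ addresses of $c_\alpha$ and $c_\beta$ via the $\genAddr$ dependencies with fresh content nulls, use the guard identifying the first and the root-placeholder positions ($c_p=r$) to show that $\T_{tape_i}$ and $\T_{tape_o}$ fire exactly on the root's children and place exactly one $\critelement$ per content vector, then invoke $\T_{\kw{data}_n}$ and $\T_{\kw{data}_p}$ for the previous/next-cell conditions. The only difference is that where the paper simply asserts that no other mapping may merge a term of these atoms, you supply the justification (the transition mappings cannot fire because $c_{root}$ never occupies the node slot $c_n$ of any $\data$-atom, so no $\data_i^{c}(c_{root},\ldots)$ facts exist) — a useful elaboration, but not a structurally different argument.
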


\begin{proof} We show the result for $c_\alpha$, the same reasoning being applicable to $c_\beta$.
 As the atom 
 \begin{align*} 
 \children_\exists(c_{root},&c_\alpha,c_\beta,ac_{root},ac_\alpha,ac_\beta,\\
 &\mathbf{y_0^{k}},\mathbf{y_1^{k}},c_{root},\critelement,y_0^0,y_1^0),
\end{align*}
belongs to the visible chase, atoms of the shape
\begin{align*}
 \data(c_{root},c_\alpha,a_1,\ldots,a_{k},\mathbf{v},\mathbf{v_{prev}},\mathbf{v_{next}},c_{root},z,y_0,y_1) 
\end{align*}
for any vector $a_1,\ldots,a_k$ with $a_i \in \{y_0,y_1\}$ for any $i$, are generated, where all nulls from $\mathbf{v},\mathbf{v_{prev}}$ and $\mathbf{v_{next}}$ are fresh (thanks to the rules involving $\genAddr$). As the first argument and the ante-ante-penultimate argument of such an atom are equal, the definition of $\T_{\kw{tape}_i}(x)$ maps to the atom of address $y_0,\ldots,y_0$, and the 
body of $\T_{\kw{tape_o}}(x)$ maps to all the other atoms. Applying $\T_{\kw{data}_n}$ and $\T_{\kw{data}_p}$ then ensures that $c_\alpha$ is a representative for the initial tape, as no other mapping may merge a term 
of these atoms.
\end{proof}

\begin{lem}
\label{lem:children}
 If $c_p$ is a representative of a tape $T$ and if the visible chase contains
 \[\children_\quantif(c_p,\alpha,\beta,ac,ac_\alpha,ac_\beta,\mathbf{y_0^{k}},\mathbf{y_1^{k}},c_{root},\critelement,y_0,y_1),\]
then $\alpha$ (resp. $\beta$) is a representative of the tape $T_\alpha$ (resp. $T_\beta$) obtained by applying the $\alpha$-transition (resp. $\beta$-transition) applicable to $T$.
\end{lem}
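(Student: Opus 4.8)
The plan is to trace the visible chase on the atom
\[\children_\quantif(c_p,\alpha,\beta,ac,ac_\alpha,ac_\beta,\mathbf{y_0^{k}},\mathbf{y_1^{k}},c_{root},\critelement,y_0,y_1)\]
and show that the $\data$-atoms it eventually produces for $\alpha$ satisfy Definition \ref{def:tape-representation} for the tape $T_\alpha$. I would organize the argument into address generation, content placement via the transition mappings, treatment of the boundary cells, and finally consistency of the previous/next fields; throughout, the object of study is the fixpoint of the visible chase, exactly as in the proof of Lemma \ref{lem:init}.

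First I would argue that the $\children_\quantif$ atom forces, through the rule $\children_\quantif(\ldots)\rightarrow\genAddr(c_p,\alpha,\ldots)$ together with the permutation rules on $\genAddr$, a $\genAddr$ atom for $\alpha$ at every one of the $2^k$ addresses $\mathbf{a}\in\{y_0,y_1\}^k$; the rule introducing $\exists\,\mathbf{v},\mathbf{v_{prev}},\mathbf{v_{next}}$ then yields, for each such $\mathbf{a}$, a fact $\data(c_p,\alpha,\mathbf{a},\mathbf{v},\mathbf{v_{prev}},\mathbf{v_{next}},c_{root},\critelement,y_0,y_1)$ whose three content vectors initially consist entirely of fresh distinct nulls. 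This fixes the skeleton of the representation, and it remains to see which positions get merged with $\critelement$.

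Next I would use the hypothesis that $c_p$ is a representative of $T$. For each address $\mathbf{a}$, the cell of $c_p$ at $\mathbf{a}$ has a \emph{unique} $\critelement$-position $i$, and since the projection rules derive $\data_i^{c}(c_p,\mathbf{a},x)$ for the value $x$ in position $i$, the atoms $\data_i^{c}(c_p,\mathbf{a},\critelement)$, $\data_j^{p}(c_p,\mathbf{a},\critelement)$, $\data_k^{n}(c_p,\mathbf{a},\critelement)$ hold for exactly the triple $(i,j,k)$ recording the content of $\mathbf{a}$ and of its two neighbours in $T$. The rule deriving $\kw{succ}_\alpha(c_p,\alpha)$ marks $\alpha$ as the $\alpha$-successor, so precisely the mapping $\T^\alpha_{i,j,k\rightarrow w}$ matching this local pattern fires and sets the $w$-th position of $\alpha$'s cell at $\mathbf{a}$ to $\critelement$, where $w$ is the content prescribed by $\transitionFunction_\alpha$ on a neighbourhood $(i,j,k)$; the first and last cells are handled by the dedicated boundary rules, which read only the present cell and its single existing neighbour.

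The final step, and the main obstacle, is faithfulness in both directions: I must show that \emph{exactly} the intended position of each content vector of $\alpha$ becomes $\critelement$, every other position staying a fresh null, so that the result truly matches Definition \ref{def:tape-representation} rather than merely containing it. Existence follows from the firing above; for the absence of spurious merges I would argue that the transition mappings are mutually exclusive on a given cell, since $c_p$ being a representative pins down a unique triple $(i,j,k)$ per address and $\transitionFunction_\alpha$ is a function, so no second mapping $\T^\alpha_{i',j',k'\rightarrow w'}$ matches the same cell and no other mapping head lands in a content position. Once the $\mathbf{v}$ fields of $\alpha$ are determined, the consistency mappings $\T_{\kw{data}_n}$ and $\T_{\kw{data}_p}$ copy each cell's single marked position into the $\mathbf{v_{next}}$ and $\mathbf{v_{prev}}$ fields of its neighbours, again merging only the one correct position. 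Verifying that the leading guard atoms and the placeholder positions $c_{root},\critelement,y_0,y_1$ keep every one of these mappings guarded and confined to the cells of $\alpha$ is the delicate bookkeeping that makes the simulation exact. This yields that $\alpha$ is a representative of $T_\alpha$, and the identical argument using $\kw{succ}_\beta$ and $\transitionFunction_\beta$ gives the claim for $\beta$.
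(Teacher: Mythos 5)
Your outline reproduces the paper's proof of this lemma nearly step for step: the $\children_\quantif$ atom spawns, through $\genAddr$ and its permutation rules, $\data$ atoms for $\alpha$ at all $2^k$ addresses whose content vectors are fresh nulls; the hypothesis that $c_p$ represents $T$ yields the facts $\data_i^{c}(c_p,\mathbf{a},\critelement)$, $\data_j^{p}(c_p,\mathbf{a},\critelement)$, $\data_k^{n}(c_p,\mathbf{a},\critelement)$ for exactly one triple $(i,j,k)$ per address; the mapping $\T^\alpha_{i,j,k\rightarrow w}$ (unique per cell, since the marked bit is unique and $\transitionFunction_\alpha$ is a function) merges the intended bit of $\alpha$'s cell with $\critelement$; and $\T_{\kw{data}_n}$, $\T_{\kw{data}_p}$ then fix the previous/next fields. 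This is the same decomposition and the same use of the chase fixpoint as in the paper.

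There is, however, one step whose justification is wrong as stated, and it is precisely the point where the paper's proof does its only non-mechanical work. You assert that, apart from the transition and consistency mappings, ``no other mapping head lands in a content position.'' That is false: the initialization mappings $\T_{tape_i}(x)$ and $\T_{tape_o}(x)$ export a variable sitting exactly in a content position of $\mathbf{v}$ (the bits encoding the initial head symbol and the blank), so if their bodies could match the $\data$ atoms of $\alpha$'s cells they would merge a wrong bit with $\critelement$ and $\alpha$ would fail to represent $T_\alpha$. The reason they cannot fire has nothing to do with where their exported variable sits; it is the repeated variable in their bodies, which forces the parent-configuration argument (first position) to coincide with the root placeholder (the ante-ante-penultimate position). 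Since $c_p$ is a representative of a tape, $c_p \neq c_{root}$, so these mappings are inert on the cells of $\alpha$. The paper states this explicitly (``$c_p$ is necessarily distinct from $c_{root}$ \ldots\ hence neither $\T_{\kw{tape}_i}$ nor $\T_{\kw{tape}_o}$ may unify a term with $\critelement$''); without that observation, the exactness claim in your final paragraph — that only the intended positions become $\critelement$ — is unsupported. With this single repair your argument coincides with the paper's.
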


\begin{proof}
 We show the result for the $\alpha$-successor, the same reasoning being applicable for the $\beta$-successor. 
As the visible chase contains
\[\children_\quantif(c_p,\alpha,\beta,ac,ac_\alpha,ac_\beta,\mathbf{y_0^{k}},\mathbf{y_1^{k}},c_{root},\critelement,y_0,y_1),\]
it also contains atoms of the shape: 
\begin{align*}
 \data(c_p,\alpha,a_1,\ldots,a_{k},\mathbf{v},\mathbf{v_{prev}},\mathbf{v_{next}},c_{root},\critelement,y_0,y_1),
\end{align*}
for any vector $a_1,\ldots,a_k$, where all nulls from $\mathbf{v},\mathbf{v_{prev}}$ and $\mathbf{v_{next}}$ are fresh. 
Note that $c_p$ is necessary distinct from $c_{root}$ (as it is the representative of a tape).
Hence neither $\T_{\kw{tape}_i}(x)$ nor $\T_{\kw{tape_o}}(x)$ may unify a term with $\critelement$. 
As $c_p$ is a representative of $T$, for any address, if the $i^\mathrm{th}$ bit of $\mathbf{v}$ represents the actual value in $T$ at address $ad$, then the visible chase contains $\data_i^{c}(c_p,\mathbf{b^{1,k}},\critelement)$ where $\mathbf{b^{1,k}}$ is the binary 
encoding of $ad$. Similarly, $\data_i^{n}(c_p,\mathbf{b^{1,k}},\critelement)$ and $\data_i^{p}(c_p,\mathbf{b^{1,k}},\critelement)$ also belong to the visible chase where applicable. 
Then for all addresses, an application of the relevant mapping of the shape $\T^\alpha_{i,j,k\rightarrow w}(x)$ merges the null at the 
position representing the content of $T_\alpha$ with $\critelement$. Applying $\T_{\kw{data}_n}$ and $\T_{\kw{data}_p}$ then ensures that $\alpha$ is a representative for $\T_\alpha$.
\end{proof}

Wrapping up the previous two lemmas, we get that there is a tree structure in the visible chase that corresponds exactly to the tree of 
configurations of the run of $\TuringMachine$: the two 
individuals $c_\alpha$ and $c_\beta$ are representatives of the initial configuration, 
and their children (which are the individuals at the second and third individuals in the $\children$ atom in which they appear 
at the first position) are representatives of the configurations that can be reached with an $\alpha$ or $\beta$ transition. 
It remains to check that the argument representing the accepting status of a configuration are correctly set, which is the topic of the following lemma.

\begin{lem}
\label{lem:acceptance}
 If $c_p$ is the representative of a tape, there is in the visible chase an atom of the shape
 \[\children_\quantif(c_p,\alpha,\beta,
\critelement,ac_\alpha,ac_\beta,\mathbf{y_0^{k}},\mathbf{y_1^{k}},c_{root},\critelement,y_0,y_1),\]

if and only if $\TuringMachine$ accepts on $T$.
\end{lem}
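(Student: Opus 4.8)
The plan is to prove the two implications by well-founded inductions that mirror the inductive definition of acceptance for an alternating Turing machine, using throughout the structural correspondence already established: by Lemma~\ref{lem:init} the initial children $c_\alpha,c_\beta$ represent the initial configuration, and by Lemma~\ref{lem:children}, if $c_p$ represents $T$ then its two tree-successors represent the tapes $T_\alpha,T_\beta$ obtained by the $\alpha$- and $\beta$-transitions. The one bookkeeping fact I would record first is that the tree-generation rules \emph{identify} a child's own acceptance bit with the parent's record of it: the frontier variable $ac_\alpha$ occurs as the fifth argument of the parent $\children_\quantif$ atom in the body of the generation rule and as the fourth argument of the generated child atom in its head, so these are literally the same null (and symmetrically for $ac_\beta$ and the sixth position). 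Since position~$4$ of each $\children$ atom is the configuration's own acceptance bit, this means the fifth/sixth positions of $c_p$'s atom are $\critelement$ exactly when the children's own acceptance bits have been merged with $\critelement$.

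For the forward direction (if $\TuringMachine$ accepts $T$, then position~$4$ of $c_p$ is merged with $\critelement$ in the visible chase) I would induct on the rank of the well-founded accepting-computation tree rooted at $T$, which is defined precisely when $T$ is accepting. In the base case $T$ sits in the unique accepting state, so by the representation property $\data_l^{c}(c_p,\mathbf{y_0^k},\critelement)$ holds; this matches the body of $\T_{accept}$ and merges the acceptance bit of $c_p$ with $\critelement$. If $T$ is universal, both successors accept, so by Lemma~\ref{lem:children} and the induction hypothesis their acceptance bits are $\critelement$, which by the identification above are exactly the fifth and sixth positions of $c_p$'s $\children_\forall$ atom; hence $\T_\forall$ fires. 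The existential case is identical, using $\T_{\exists,\alpha}$ or $\T_{\exists,\beta}$ for whichever successor accepts.

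For the backward direction I would induct on the step of the visible chase at which position~$4$ of $c_p$ is first merged with $\critelement$. The essential preliminary is a syntactic case analysis of all mappings, showing that the only ones whose head places $\critelement$ in an acceptance-bit position are $\T_{accept},\T_\forall,\T_{\exists,\alpha},\T_{\exists,\beta}$, and that each of these acts on position~$4$ (the configuration's own bit), never directly on positions~$5$ or~$6$. Since every acceptance bit is introduced as a fresh null by the generation rules, the children's recorded bits can become $\critelement$ \emph{only} through the child-identification. I then read off which mapping set the bit: $\T_{accept}$ forces $\data_l^{c}(c_p,\mathbf{y_0^k},\critelement)$, so $T$ is in the accepting state; $\T_\forall$ forces both children's own bits to have been merged at strictly earlier steps, so the induction hypothesis yields acceptance of $T_\alpha$ and $T_\beta$; $\T_{\exists,\alpha}$ (resp.\ $\T_{\exists,\beta}$) forces the corresponding child's bit to have been merged earlier, giving acceptance of that successor. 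In each case the ATM acceptance rule for the matching state type shows $\TuringMachine$ accepts $T$.

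I expect the main obstacle to be the ``no spurious merge'' step of the backward direction: one must verify exhaustively that none of the tape-consistency, transition, and address-generation mappings ($\T^\alpha_{i,j,k\rightarrow w}$, the $\genAddr$ rules, the tape-initialization mappings, and $\T_{\kw{data}_n},\T_{\kw{data}_p}$) can ever unify an acceptance-bit null with $\critelement$, and that the guardedness placeholders $r,z,y_0,y_1$ carried in every atom do not create unintended matches that let an acceptance mapping fire across distinct configurations. Once this containment is pinned down, the step-indexed induction closes cleanly, because each firing of an acceptance mapping strictly postdates the merges of the successor bits it reads, and the two directions together give the claimed equivalence.
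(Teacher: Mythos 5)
Your proposal is correct, and its core coincides with the paper's own proof: the paper argues by induction on the number of transitions needed to establish acceptance of $T$, with exactly your four cases (final state handled by $\T_{accept}$, universal by $\T_\forall$, existential by $\T_{\exists,\alpha}$ or $\T_{\exists,\beta}$), so your forward direction is essentially the paper's proof verbatim. Where you genuinely differ is in the converse. The paper folds the ``only if'' into the same case analysis, asserting the biconditional (explicitly only in the final-state case) and never isolating the claim that no other mapping can merge an acceptance bit with $\critelement$; you instead run a second, step-indexed induction on the visible chase and turn that ``no spurious merge'' containment into an explicit proof obligation, checking that $\T_{\init}$, the tape-initialization mappings, $\T_{\kw{data}_n}$, $\T_{\kw{data}_p}$ and the transition mappings $\T^\alpha_{i,j,k\rightarrow w}$ export only placeholder or tape-content positions, never position~4 of a $\children_\quantif$ atom. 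You also make explicit the bookkeeping fact the paper uses silently: a child's own acceptance bit (position~4 of its atom) is literally the same null as the parent's record of it (position~5 or~6), because it is a frontier variable of the tree-generation $\incd$, so merges propagate between the two views of the bit for free. The trade-off is clear: the paper's rendering is shorter but leaves the backward direction to the reader, while yours is longer but closes it rigorously; both rest on the same decomposition and the same four mappings, so this is a more careful rendering of the same argument rather than a different route.
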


\begin{proof}
 Let $T$ be a tape of representative $c_p$. There are four cases in which $\TuringMachine$ accepts on $T$:
 \begin{itemize}
  \item the state of $\TuringMachine$ is final in $T$: this is the case if and only if $\T_{\kw{accept}}$ merges the fourth argument of $\children_\quantif(c_p,\alpha,\beta,
ac,ac_\alpha,ac_\beta,\mathbf{y_0^{k}},\mathbf{y_1^{k}},c_{root},\critelement,y_0,y_1)$ with $\critelement$
  \item the state of $\TuringMachine$ is universal in $T$ and both its successors are accepting: by induction assumption (on the number of transitions that need to be applied to prove acceptance of a tape), both accepting bits of $\alpha$ and $\beta$ are unified with $\critelement$, and thus the accepting bit of $c_p$ is unified with $\critelement$ thanks to $\T_\forall$
  \item the state of $\TuringMachine$ is existential in $T$ and its $\alpha$-successor is accepting: by induction assumption, the accepting bit of $\alpha$ is unified with $\critelement$, and thus the accepting bit of $c_p$ is unified with $\critelement$ thanks to $\T_{\exists,\alpha}$
  \item similar case, with the $\beta$-successor.
 \end{itemize}

\end{proof}

Let us now use the above lemmas to show that $\TuringMachine$ accepts if and only if the policy query $\secretquery$ holds in the visible chase. If $\TuringMachine$ accepts, let us consider an accepting run of $\TuringMachine$. From Lemmas~\ref{lem:init} and~\ref{lem:children}, we can build a configuration tree that contains a representative for all the tapes that are involved in this run. From Lemma~\ref{lem:acceptance}, the accepting bits of the representative are set adequately, and the policy query holds in the visible chase. 

Conversely, let us consider a visible chase sequence such that the policy query holds in its result. Let us first remark that 
the argument of the policy query appearing in the first position is equal to the argument in the ante-ante-penultimate position. 
This implies that none of the witnesses of mappings other than $\T_\init$ need to be applied in order to entail the policy query, which can be seen from the following three facts: \emph{(i)} none of the positions that may contain a configuration identifier may be unified with $\critelement$; \emph{(ii)} all mappings contain configuration identifiers \emph{(iii)} only the witness associated with $\T_\init$ may generate an atom of the shape
\[\children_\exists(root,\alpha,\beta,z,ac_\alpha,ac_\beta,\mathbf{y_0^{1,k}},\mathbf{y_1^{1,k}},root,z,y_0,y_1),\] 
This implies that in the visible chase sequence entailing the policy query, we start by introducing $\alpha$ and $\beta$ as in Lemma~\ref{lem:init}. Let us now consider the smallest set $S$ of configuration representatives that fulfills the following conditions:
\begin{itemize}
 \item $\alpha$ and $\beta$ are in $S$
 \item if $c$ is in $S$ and the tape associated with $c$ is in a universal state, then both successors of $c$ are in $S$
 \item if $c$ is in $S$ and the tape associated with $c$ is an existential state, then a successor of $c$ having its acceptance bit equal to $\critelement$ is in $S$.
\end{itemize}

By the previous lemmas, there exists an accepting run of~$\TuringMachine$ going exactly through the represented configurations.

\subsection{Second Part of Proof of Theorem \ref{thm:incdguardedlower}: $\exptime$-hardness for Inclusion Dependencies and Guarded Maps in Bounded Arity}

Recall the statement of the second part of Theorem \ref{thm:incdguardedlower}:

\medskip

$\discloseClass(\incd,\guardedmap)$ is $\exptime$-hard even
  in bounded arity.

\medskip

In the proof of Theorem \ref{thm:incdguardedlower}, we used predicates of unbounded arity only to generate exponentially many cell 
addresses. Here, we use only $k$ addresses, and can encode their content through $k$ predicates $\data_1$ to $\data_k$. 
However, the proof follows  the same line of argumentation as in Theorem \ref{thm:incdguardedlower}.

 Let us describe the source signature. 
 \begin{itemize}
  \item $\children_\forall(c,c_\alpha,c_\beta,ac,ac_\alpha,ac_\beta,r,z)$ states that a configuration $c$ is universal and has as children $c_\alpha$ and $c_\beta$, and that the acceptance bit of $c$ is $ac$, of $c_\alpha$ is $ac_\alpha$ and of $c_\beta$ is $ac_\beta$. The last two positions are placeholders for the root of the tree of configurations, and $\critelement$.
  \item $\children_\exists(c,c_\alpha,c_\beta,ac,ac_\alpha,ac_\beta,r,z)$: same meaning, except that $c$ is existential. 
  \item $\data^l(c_p,c_n,\mathbf{v},z)$ states that the cell of address $l$ of the tape represented by $c_n$ has a content represented by $\mathbf{v}$. The last position is a placeholder for $\critelement$.
  \item $\data_i^{l}(c,x,z)$ states that the cell of address $l$ in configuration $c$ contains $x$ at the $i^\mathrm{th}$ position of the representation of its content.
  \item $\kw{succ}_\alpha(c_p,c_n)$ states that $c_n$ is the $\alpha$-successor of $c_p$ (and similarly for $\beta$)
 \end{itemize}

 The symbol $\quantif$ always ranges over $\{\forall,\exists\}$.
 
\paragraph{Initialization.}
We first define a mapping $\T_{\init}(x)$, introducing some elements in the visible chase, whose definition is:
\begin{align*}
 \children_\exists(c_{root},c_\alpha,c_\beta,ac_{root},ac_\alpha,ac_\beta,c_{root},x)
\end{align*}
\paragraph{Generation of the tree of configuration.}
$\alpha$-successors have themselves $\alpha$- and $\beta$-successors, and are existential if their parent is universal:
\begin{align*}
 &\children_\forall(c,\alpha,\beta,ac,ac_\alpha,ac_\beta,r,z) \\
 &\rightarrow \exists \alpha_\alpha,\alpha_\beta,ac_{\alpha_\alpha},ac_{\alpha_\beta}\\
 &\children_\exists(\alpha,\alpha_\alpha,\alpha_\beta,ac_\alpha,ac_{\alpha_\alpha},ac_{\alpha_\beta},r,z)
\end{align*}
And similarly for $\children_\exists$ and for the $\beta$-successor.

\paragraph{Universal and existential acceptance condition.}
If both successors of a universal configuration $n$ are accepting, so is $n$. We create a mapping $\T_\forall(x)$ having definition:
\begin{align*}
 \children_\forall(c,\alpha,\beta,x,z,z,r,z) \\
\end{align*}

If the $\alpha$-successor of an existential configuration $n$ is accepting, so is $n$. We create a mapping $\T_{\exists,\alpha}(x)$ 
having definition:
\begin{align*}
 \children_\exists(c,\alpha,\beta,x,z,ac_\beta,r,z) \\
\end{align*}

We create a similar mapping $\T_{\exists,\beta}$ for the $\beta$-successor.

\paragraph{Tape representation and consistency of tapes.}
We now focus on the representation of the tape and its consistency. For each configuration, we generate $k$ cells whose content is initialized freshly:
\begin{align*}
 &\children_\forall(c,\alpha,\beta,ac,ac_\alpha,ac_\beta,r,z) \\
 &\rightarrow \exists \mathbf{v}~\data^l(c,\alpha,\mathbf{v},z)
\end{align*}
and similarly for existential configurations and for the $\beta$-successors. Note that the values $\mathbf{v}$, $\mathbf{v_{prev}}$ and $\mathbf{v_{next}}$ are again vectors of length the size of $(\letters \cup \{\flat\}) \times (\states \cup \bot)$. We again use the notation $\mathbf{l_i}(x)$ to represent a vector of same length, composed of fresh variables, except for the position $i$, which contains $x$.

To ensure guardedness, we first introduce auxiliary predicates $\data_i^{c}, \data_i^{p}$ and $\data_i^{n}$ that define the content of the $i^\mathrm{th}$ bit of the value of the current, previous and next cells:
\begin{align*}
&\data^l(c_p,c_n,\mathbf{l_i}(x),z) \\
&\rightarrow ~ \data_i^{l}(c_n,x)
\end{align*}
 
We enforce that the tape of the initial configuration has the head of the Turing machine on the first cell (and assume w.l.o.g that this is represented by the first position of $\mathbf{v}$ containing $\critelement$) with all the other cells containing $\flat$.
We also  assume w.l.o.g that the other cells containing $\flat$ is represented by the second position of $\mathbf{v}$ containing $\critelement$. We thus create the mappings $\T_{tape_i}(x)$, for the the first cell, having definition:
\[\data^1(c_p,c_n,l_1(x),c_p)\]

and $\T^l_{tape_o}(x)$, for all the other cells ($2\leq l \leq n$), with definition:
\[\data^l(c_p,c_n,l_2(x),c_p)\]

Note that this data is associated with the children of 
the root (as $c_p$ is both in the first and the penultimate positions of the 
atoms), and not with the root itself, due to the choice of 
keeping in $\data^l$ the identifier of the parent of the considered configuration.

  We then check that the tape associated with the $\alpha$-successor of a configuration is indeed obtained by 
applying an $\alpha$-transition.  
This is done by noticing that the value of each cell of the $\alpha$-successor is determined by the value of the cell 
and its two neighbors in the original configuration (the neighbors are necessary to know whether the head of the Turing machine is now 
in the considered cell). To ensure guardedness, we first define a predicate marking $\alpha$-successors (and similarly for $\beta$-successors):
\begin{align*}
& \children_\quantif(c,\alpha,\beta,z,ac_\alpha,ac_\beta,r,z)\\
& \rightarrow {\kw{succ}_\alpha}(c,\alpha)
\end{align*}

Let us consider a cell of address $l$ in $c_p$. We assume that its content is represented by $i$, while the content of its left (resp. right) neighbor is represented by $j$ (resp. $k$). We represent the fact that this implies that the content of the cell of address $l$ is $w$ in the $\alpha$-successor of $c_p$ by the following mapping $\T^\alpha_{i,j,k\rightarrow w}(x)$:

\begin{align*}
 &\data^l(c_p,c_n,\mathbf{l_w}(x),z) \\
&\wedge~ \data_i^{l}(c_p,z) \\
&\wedge~ \data_j^{l-1}(c_p,z) \\
&\wedge {\kw{succ}_\alpha}(c_p,c_n)
\end{align*}

As in the non-bounded case, the first (resp. last) cell should be dealt with separately, as there is no content in the (non-existent) previous (resp. next) cell. And we finally create a mapping $\T_{accept}(x)$ enforcing that configurations whose tape is in an accepting state (which we assume w.l.o.g. corresponds to the case where the first cell contains the $l_f^\mathrm{th}$ bit) are declared as accepting. 
\begin{align*}
& \data_{l_f}^{1}(c_n,z)\\
 & \wedge  \children_\quantif(c_n,\alpha,\beta,x,ac_\alpha,ac_\beta,r,z)\\
\end{align*}
The policy is 
\[
\children_\exists(root,\alpha,\beta,z,ac_\alpha,ac_\beta,root,z)
\]
We can verify 
that this policy is disclosed if and only if the original Turing machine accepts on the empty tape, using a similar reasoning 
to the unbounded case.

\subsection{Final Part of Proof of Theorem \ref{thm:incdguardedlower}:
  Reduction from $\gtgd$ and $\projectionmap$ to $\incd$ and $\guardedmap$}

Theorem \ref{thm:incdguardedlower} states a $\twoexp$ lower bound for 
general arity and an $\exptime$ lower bound in bounded arity for two
different cases. The first case was when the source constraints
are $\incd$s  and the mappings are guarded. The previous sections
of the appendix have gone through the proofs of this case in detail.
We now finish the proof of Theorem \ref{thm:incdguardedlower}
showing:

\medskip

  $\discloseClass(\gtgd,\projectionmap)$
 is $\twoexp$-hard,
 and is $\exptime$-hard even in bounded arity.

\medskip

The proof of both of these assertions follows directly from
Corollary~\ref{cor:guardedreduce} (the general reduction of maps
presented in section~\ref{ss:simplifymappings}).

We have seen that $\incd$ and $\guardedmap$ reduces to $\gtgd$ and
$\projectionmap$, therefore we have the lower bound for
$\disclose(\gtgd, \projectionmap, \asecretquery)$ from the lower bound
$\disclose(\incd, \guardedmap, \asecretquery)$

\subsection{Proof of Theorem \ref{thm:exptimelowergeneral}: $\exptime$-hardness for Inclusion Dependencies and Atomic Maps,
and for $\ltgd$s with Projection Maps}
Recall the statement of Theorem \ref{thm:exptimelowergeneral}:

\medskip

$\discloseClass(\incd,\atomicmap)$ and $\discloseClass(\ltgd, \projectionmap)$
are both $\exptime$-hard.

\medskip

We first focus on the case of $\discloseClass(\incd,\atomicmap)$.
We adapt the construction used for $\pspace$-hardness of entailment with $\incd$s \cite{casanova} to show  $\Exptime$-hardness 
for $\incd$ source constraints and atomic maps. We start with an alternating (rather than deterministic in \cite{casanova}) 
Turing machine $\M$ and an input $x$, and consider the problem
asking whether there exists a halting computation of $\M$ that uses at most $|x|$ cells. 
As in the original reduction, we use inclusion dependencies to simulate the transition relation of $\M$. The adaptation lies in the 
additional use of a fresh position holding a configuration identifier, and the generation of a tree of configurations,
as in the reduction presented in Theorem \ref{thm:incdguardedlower}.

Let us describe the signature: 
\begin{itemize}
 \item $\config^\quantif(c,ac,\mathbf{v},z)$ states the configuration $c$ has quantification $\quantif$, has accepting bit $ac$, a tape represented by $\mathbf{c}$. the last argument will always hold $\critelement$ in the visible chase;
 \item $\transition^\quantif_{t_\alpha,t_\beta}(c,ac,\mathbf{v},\alpha,ac_\alpha,\beta,ac_\beta,z)$ names two successors configurations $\alpha$ and $\beta$, with the configurations consisting
of acceptance bits $ac_\alpha$ and $ac_\beta$, which are obtained from $c$ by applying transitions $t_\alpha$ and $t_\beta$.
\end{itemize}

Let us turn to the description of $\mathbf{v}$ and subsequently $t_\alpha$. $\mathbf{v}$ represents the content of the tape: 
for each position of the tape, there is an argument for each pair of $\letters \times (\states\cup\{\bot\})$. Intuitively, this argument is equal to $\critelement$ if and only if the position contains the corresponding letter and head, and a fresh null otherwise. 

We introduce a mapping that initializes the tape:
\[\config^\forall(x,ac,\mathbf{v},x)\]

As in the proof of Theorem \ref{thm:incdguardedlower}, we propagate the acceptance information using mappings. 
For a universal state, we use a mapping with definition:
\begin{align*}
 \transition^\forall_{t_\alpha,t_\beta}(c,x,\mathbf{v},\alpha,z,\beta,z,z)
\end{align*}

For an existential state, we use two mappings with definitions:
\begin{align*}
 \transition^\exists_{t_\alpha,t_\beta}(c,x,\mathbf{v},\alpha,z,\beta,ac_\alpha,z)
\end{align*}
and 
\begin{align*}
 \transition^\exists_{t_\alpha,t_\beta}(c,x,\mathbf{v},\alpha,ac_\alpha,\beta,z,z)
\end{align*}

As before, we notice that the state of a cell after applying a transition is deterministically defined by its content as well as the 
content of its left and right neighbor. 
The following inclusion dependency states that from any configuration, we can try to apply all possible transitions to 
generate the $\alpha$- and $\beta$-successors:
\begin{align*}
 &\config^\quantif(c,ac,\mathbf{v},z) \rightarrow \exists \alpha,ac_\alpha,\beta,ac_\beta\\
 &\transition^\quantif_{t_\alpha,t_\beta}(c,ac,\mathbf{v},\alpha,ac_\alpha,\beta,ac_\beta,z)
\end{align*}

We now generate the tape associated with the $\alpha$-transition (and similarly for the $\beta$-transition):
\begin{align*}
 &\transition^\quantif_{t_\alpha,t_\beta}(c,ac,\mathbf{v},\alpha,ac_\alpha,\beta,ac_\beta,z) \rightarrow \\
 &\exists \mathbf{v'_\alpha}~\config^{\vec{\quantif}}(\alpha,ac_\alpha,\mathbf{v_\alpha}\oplus\mathbf{v'_\alpha},z),
\end{align*}
where $\vec{\quantif}$ denotes the dual quantifier. 
Let us describe the vector $\mathbf{v_\alpha}\oplus\mathbf{v'_\alpha}$. Suppose $t_\alpha$ is the transition that checks whether 
position $i$ contains $a$, position $i+1$ contains $b$ and the head 
in state $s$, and position $i+2$ contains $c$; changes $b$ to $b'$, moves the head to the right and goes into state $s'$.
Then $\mathbf{v_\alpha}\oplus\mathbf{v'_\alpha}$ is defined as follows:
\begin{itemize}
 \item any argument that corresponds to a position distinct from $i+1$ or $i+2$ is chosen equal to the argument at the same position in $\mathbf{v}$;
 \item the argument that corresponds to $(i+1,(b',\bot))$ now contains the value of $\mathbf{v}$ at 
position $((i+1),(b,s))$, and all other variables appearing in an argument corresponding to position $(i+1)$ are existentially quantified;
 \item the argument that corresponds to $((i+2),(c,s'))$ now contains the value of $\mathbf{v}$ at position $((i+2),(b,\bot))$, and all other variables appearing in an argument corresponding to position $(i+1)$ are existentially quantified.
\end{itemize}

Note that here we have a distinction with the previous reduction: we do not check that a transition is applicable before applying it, as this would be out of the capabilities of $\incd$. 
However, the same argument as in \cite{casanova} proves that a configuration reached from simulating a non-applicable transition cannot 
lead to an accepting state. We choose as a policy:
\[\config^\forall(x,x,\mathbf{v},x),\]

\begin{proposition} \label{prop:correctexptime}
The policy  is disclosed if and only if there is an accepting computation that uses at most $|x|$ cells.
\end{proposition}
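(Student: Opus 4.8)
The plan is to mirror the correctness argument already given for Theorem~\ref{thm:incdguardedlower}: first reduce disclosure to a statement purely about the visible chase, then show that this chase is a faithful encoding of the alternating computation tree of $\M$, and finally read off acceptance of the root from whether its acceptance bit is merged with $\critelement$. Concretely, by Theorem~\ref{thm:critinst} together with Proposition~\ref{prop:visiblechase}, disclosure of $\secretquery = \config^\forall(x,x,\mathbf{v},x)$ holds iff $\visiblechase(\sourcecon,\M)$ contains a $\config^\forall$ atom whose first, second and fourth arguments coincide. The fourth argument is the $\critelement$-placeholder and is always $\critelement$, while the only $\config^\forall$ atom whose configuration identifier is $\critelement$ is the root configuration created by the initialization mapping $\T_\init$. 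Hence the whole problem reduces to showing that the root's acceptance bit is merged with $\critelement$ in the visible chase iff $\M$ has an accepting computation using at most $|x|$ cells. Note the space bound is automatically enforced, since the tape vector $\mathbf{v}$ has exactly one component per pair (cell position, letter/head) for the $|x|$ positions.

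Next I would set up the correspondence between the chase and the computation. Analogously to Definition~\ref{def:tape-representation}, I would call a $\config^\quantif(c,ac,\mathbf{v},\critelement)$ atom a \emph{faithful representation} of a configuration $C$ when, for each cell, the unique component of $\mathbf{v}$ carrying $\critelement$ is the one encoding the actual letter and head state of $C$ at that cell, all remaining components being distinct fresh nulls. The initialization mapping forces the root tape to faithfully represent the initial configuration on the empty (resp.\ input) tape. The forward $\incd$s $\config^\quantif \to \transition^\quantif_{t_\alpha,t_\beta}$ and $\transition^\quantif_{t_\alpha,t_\beta} \to \config^{\vec{\quantif}}$ generate, for every pair of transitions $(t_\alpha,t_\beta)$, the two successor configurations, whose tapes $\mathbf{v_\alpha}\oplus\mathbf{v'_\alpha}$ are copied component-by-component from the parent as specified in the construction. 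A routine induction then shows that whenever $t_\alpha$ is \emph{applicable} to a faithfully represented parent, the $\alpha$-successor it generates is again faithfully represented: the component encoding the checked local content carries $\critelement$ in the parent, so the copy rule propagates $\critelement$ into exactly the component encoding the new content, and the component-wise nature of the copy leaves all other cells faithful.

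The main obstacle is the treatment of \emph{non-applicable} transitions, which -- unlike in Theorem~\ref{thm:incdguardedlower}, where a guard checks matching -- are fired blindly here, since $\incd$s cannot test applicability before firing. I would reuse the argument of \cite{casanova}: if $t_\alpha$ does not match the parent's local content at the checked cells, then the parent component that the copy rule reads is a fresh null rather than $\critelement$, so the corresponding component of the successor tape never receives $\critelement$. That successor therefore has a cell carrying $\critelement$ in no component, so it is not a faithful representation of any configuration; in particular $\T_{accept}$ can never merge its acceptance bit with $\critelement$, and neither can the propagation mappings $\T_\forall,\T_{\exists,\alpha},\T_{\exists,\beta}$, since those require the relevant successors' acceptance bits to already equal $\critelement$. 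Thus spuriously generated configurations are inert and cannot cause the root bit to become $\critelement$.

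Finally I would close the induction on the structure of the alternating computation. For the ``if'' direction, given an accepting computation tree I build the matching faithfully represented configurations and show, proceeding bottom-up from accepting leaves, that $\T_{accept}$, $\T_\forall$, $\T_{\exists,\alpha}$ and $\T_{\exists,\beta}$ successively merge each acceptance bit with $\critelement$, reaching the root. For the ``only if'' direction, I would argue that any chain of merges witnessing that the root bit equals $\critelement$ can involve only faithfully represented configurations (by the non-applicability argument above), and hence projects onto a genuine accepting alternating computation of $\M$ within $|x|$ cells. This establishes the stated equivalence and therefore $\exptime$-hardness of $\discloseClass(\incd,\atomicmap)$.
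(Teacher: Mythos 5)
Your proposal is correct and takes essentially the same route as the paper, which leaves this proposition's proof implicit: the intended argument is precisely the adaptation of the correctness lemmas of Theorem~\ref{thm:incdguardedlower} (reduction to the visible chase via Theorem~\ref{thm:critinst} and Proposition~\ref{prop:visiblechase}, a tape-representation invariant, bottom-up propagation of acceptance bits into $\critelement$), combined with the paper's one-line appeal to the argument of \cite{casanova} that configurations produced by firing non-applicable transitions can never become accepting. Your sketch fills in exactly these steps, including the key observation that a spurious successor has a cell with $\critelement$ in no component and is therefore inert, so it matches the paper's proof.
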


The lower bound for $\discloseClass(\ltgd, \projectionmap)$ follows by reduction:

\begin{proposition} \label{prop:reduceexptimelower}
There is a polynomial time reduction from $\discloseClass(\incd, \atomicmap)$
to $\discloseClass(\ltgd, \projectionmap)$.
\end{proposition}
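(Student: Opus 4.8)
The plan is to invoke the general mapping-simplification construction of Proposition~\ref{prop:reduceprojmap} and to observe that, once specialized to atomic mappings, it never leaves the classes $\ltgd$ and $\projectionmap$. Given an instance $\disclose(\sourcecon, \M, \asecretquery)$ with $\sourcecon$ a set of $\incd$s and $\M$ a set of atomic mappings, I would build $\sourcecon'$ and $\M'$ exactly as in Proposition~\ref{prop:reduceprojmap}: for each mapping $m \in \M$, whose body is a single atom $\phi(\vec x, \vec y)$ (possibly with repeated variable occurrences), introduce a fresh source predicate $R_\phi$ over the distinct variables $\vec x, \vec y$, add the two constraints $\phi(\vec x, \vec y) \rightarrow R_\phi(\vec x, \vec y)$ and $R_\phi(\vec x, \vec y) \rightarrow \phi(\vec x, \vec y)$ to $\sourcecon$, and replace $m$ by the projection rule $R_\phi(\vec x, \vec y) \rightarrow T_\phi(\vec x)$.

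The key step is to check the syntactic form of the output. First, each original constraint of $\sourcecon$ is an $\incd$, and $\incd$s are a special case of $\ltgd$s, so these remain admissible. Second, because each $\phi$ is a single atom, both new constraints are linear: the forward constraint $\phi(\vec x, \vec y) \rightarrow R_\phi(\vec x, \vec y)$ has a one-atom body (it is in fact itself an $\incd$), and the backward constraint $R_\phi(\vec x, \vec y) \rightarrow \phi(\vec x, \vec y)$ has the one-atom body $R_\phi$; its head $\phi$ may carry repeated variables, but that is permitted for $\ltgd$s, which constrain only the body to be a single atom. Hence $\sourcecon' \in \ltgd$. Third, since $R_\phi$ is introduced with pairwise-distinct arguments and the head variables $\vec x$ of a GAV rule are non-repeating by assumption, each new mapping $R_\phi(\vec x, \vec y) \rightarrow T_\phi(\vec x)$ is a single atom with no repeated variables, i.e.\ a projection map, so $\M' \in \projectionmap$.

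Correctness is then immediate: Proposition~\ref{prop:reduceprojmap} yields $\disclose(\sourcecon, \M, \asecretquery)$ iff $\disclose(\sourcecon', \M', \asecretquery)$, with the secret query left unchanged (it is over the old source predicates, and the new source schema merely extends the old one). Thus a disclosure is admitted in the source problem exactly when one is admitted in the target problem. The transformation adds only one fresh predicate and two constraints per mapping and rewrites each mapping locally, so it is computable in polynomial (indeed linear) time, giving the claimed polynomial-time reduction.

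The only genuine content lies in the class-membership bookkeeping of the second paragraph, so I expect no serious obstacle. The one point that deserves care is the backward constraint $R_\phi(\vec x, \vec y) \rightarrow \phi(\vec x, \vec y)$: the repeated variables that an atomic map pushes into the head $\phi$ do \emph{not} violate linearity, and would obstruct only membership in the narrower class $\incd$. This is precisely why the target class must be $\ltgd$ rather than $\incd$, and it is consistent with Theorem~\ref{thm:exptimelowergeneral} giving the same $\exptime$ lower bound for both $\discloseClass(\incd,\atomicmap)$ and $\discloseClass(\ltgd,\projectionmap)$.
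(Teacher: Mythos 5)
Your proposal is correct, and it takes a genuinely different route from the paper's. The paper proves Proposition~\ref{prop:reduceexptimelower} by a direct, ad hoc construction: for each atomic map with body atom $H(\vec t)$ (possibly carrying repeated variables) it introduces a repetition-free predicate $H'$, replaces the map by the projection map over $H'$, adds only the \emph{single} source constraint $H'(\vec t\,') \rightarrow H(\vec t)$, and declares preservation of disclosure ``easy to see.'' You instead specialize the general mapping-simplification of Proposition~\ref{prop:reduceprojmap}, which adds \emph{both} constraints $\phi(\vec x,\vec y) \rightarrow R_\phi(\vec x,\vec y)$ and $R_\phi(\vec x,\vec y) \rightarrow \phi(\vec x,\vec y)$ and whose correctness is already proved via the visible chase; your only new content is the bookkeeping that, for single-atom $\phi$, both added constraints are $\ltgd$s and the new maps are projections, which is right. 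This modularity buys you more than convenience: the forward constraint $\phi \rightarrow R_\phi$, absent from the paper's construction, appears to be genuinely necessary. Without it, an $H$-fact with repeated values produced by chasing the $\incd$s (e.g.\ $B(c,c)$ pushed into $H(c,c)$ by $B(u,v)\rightarrow H(u,v)$, where $B(c,c)$ itself is a witness fact of another atomic map) no longer triggers the merge step that the original map $H(x,x)\rightarrow \T_1(x)$ would have triggered, since nothing ever derives the corresponding $H'$-fact; from this one can build a concrete instance whose original problem discloses while the paper's transformed problem does not. Your version, inheriting both directions from Proposition~\ref{prop:reduceprojmap}, is immune to this. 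One small slip: your parenthetical that the forward constraint ``is in fact itself an $\incd$'' fails precisely in the interesting case --- when $\phi$ has a repeated variable, the body of $\phi(\vec x,\vec y) \rightarrow R_\phi(\vec x,\vec y)$ has a repeated occurrence, so it is an $\ltgd$ but not an $\incd$. This does not affect your argument, since $\ltgd$ membership is all you need.
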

\begin{proof}
Given a mapping $\phi(\vec x) \rightarrow \exists \vec y ~ H(\vec  t)$ where there may be repeated
variables in the head atom, we replace it by a projection mapping
\[
\phi(\vec x) \rightarrow \exists \vec y ~ H'(\vec  t)
\]
where $H'$ is a new predicate whose  arity is the number of distinct variables in $H(\vec t)$.  $H'(\vec t')$
has the same variables as $H$, but with no repetition. For example, if
the head of the original rule is $H(x,x,y)$, then the new rule has head $H'(x,y)$.

We additionally add the source constraint:
\[
\forall \vec t ~ H'(\vec t) \rightarrow H(\vec t)
\]
It is easy to see that this transformation preserves disclosure.
\end{proof}

\subsection{Proof of Theorem \ref{thm:incdepcqmapboundedlower}: lower bounds for $\incd$s in bounded arity}

Recall the statement of Theorem  \ref{thm:incdepcqmapboundedlower}:

\medskip

 $\discloseClass(\incd,\mappingClass)$
   is $\twoexp$-hard  in bounded arity.

\medskip

\myeat{
The first part of the proof will to show that there is a reduction
from $\discloseClass(\fgtgd,\projectionmap)$ to
$\discloseClass(\incd,\mappingClass)$ but this is given by the
reduction presented in Proposition~\ref{prop:reduceprojmap}: indeed,
since the mappings of $\M'$ are unary they necessarily are composed of
$\fgtgd$.
}

This proof will be very similar to the proof of
Theorem~\ref{thm:incdguardedlower}. We will provide a reduction
from an alternating {\expspace} Turing machine to $\incd$ and
$\singleconstantrule$s. We show how to simulate the run of an
alternating {\expspace} Turing machine $\M$ with inclusion
dependencies and $\singleconstantrule$s.

The main difference between the proof of
Theorem~\ref{thm:incdguardedlower} and the proof here is that in
Theorem~\ref{thm:incdguardedlower} each cell carried $n$ bits
$b_1\dots b_n$ specifying the address of the cell. In this version we
cannot use this trick as we are using a reduction where all predicates
are bounded. For each configuration, the tape will represented in the
leaves of a full binary tree of depth $n$. For a cell $c$, the $n$
bits specifying the address of a $c$ will scattered across the $n$
predicates in its lineage, each holding one bit of the address: an
internal node has two descendants each carrying four values $b,
\vec{b}, y_0, y_1$. We will have $b=y_0$ and $\vec{b}=y_1$ when then
node represents the addresses where the $i$-th bit is $0$ and $b=y_1$,
$\vec{b}=y_0$ when it is $1$.

Let us describe our source signature:
\begin{itemize}
\item
  $\children_\forall(c,c_\alpha,c_\beta,ac,ac_\alpha,ac_\beta,r,y_0,y^{bis}_0,y_1,y^{bis}_1)$
  states that a configuration $c$ is universal and has children
  $c_\alpha$ and $c_\beta$, and that the acceptance bit of $c$ is
  $ac$, of $c_\alpha$ is $ac_\alpha$ and of $c_\beta$ is
  $ac_\beta$. The last four positions are placeholders for $r$ the
  root of the tree of configurations, two values $y_0=y^{bis}_0$
  representing $0$ and two values $y_1=y^{bis}_1$ representing $1$.
\item
  $\children_\exists(c,c_\alpha,c_\beta,ac,ac_\alpha,ac_\beta,r,y_0,y^{bis}_0,y_1,y^{bis}_1)$:
  same meaning, except that $c$ is existential.
\item for $i\in 1..n$, $\addr_i(c_p,c_n,b_{i},\vec{b}_i,y_0,y_1)$
  corresponds to a node of depth $i$ in the binary tree representing
  the tape of a configuration. In this predicate $c_p$ is the parent
  of the node, $c_n$ is the current node, $b_i$ will be equal to $y_0$
  when the node if the first child of $c_p$ and equal to $y_1$
  otherwise. $\vec{b}_i$ will be the complement of $b_i$
  (i.e. $y_0=b_i$ implies $y_1=\vec{i}$ and $y_1=b_i$ implies
  $y_0=\vec{i}$).
\item $\data^c(c,\vec v)$ states that the cell at position $c$
  contains the data represented by $\vec v$. $\data^{p}$ and
  $\data^{n}$ play similar roles for the previous cell and the next
  cell.
\end{itemize}

\paragraph{Critical element.}
We create a mapping $T_{\critelement}(x)$ defined as $\iscrit(x)$.
The relation $\iscrit$ will allow us to test whether a variable is
equal to $\critelement$.

\paragraph{Initialization.}

We first define a mapping $\T_{init}()$ introducing some elements in
the visible chase, whose definition
is: $$\children_{\exists}(c_{root},c_{\alpha},c_{\beta},ac_{root},ac_{\alpha},ac_{\beta},c_{root},y_0,y_0,y_1,y_1)$$

\paragraph{Generating the tree of configuration.}
$\alpha$-successors have themselves $\alpha$- and $\beta$-successors, and are existential if their parent is universal:
\begin{align*}
 &\children_\forall(c,\alpha,\beta,ac,ac_\alpha,ac_\beta,r,y_0,y^b_0,y_1,y_1^b)
  \\ &\rightarrow \exists
  \alpha_\alpha,\alpha_\beta,ac_{\alpha_\alpha},ac_{\alpha_\beta}\\
  &\children_\exists(\alpha,\alpha_\alpha,\alpha_\beta,ac_\alpha,ac_{\alpha_\alpha},ac_{\alpha_\beta},r,y_0,y^b_0,y_1,y_1^b)
\end{align*}
And similarly for $\children_\exists$ and for the $\beta$-successor.

\paragraph{Universal and Existential Acceptance Condition.}
If both successors of a universal configuration $n$ are accepting, so
is $n$. We create a mapping $\T_\forall(x)$ with definition:
\begin{align*}
 \children_\forall(c,\alpha,\beta,x,ac,ac,r,y_0,y^b_0,y_1,y^b_1)  \land \iscrit(ac)\\
\end{align*}

If the $\alpha$-successor of an existential configuration $n$ is
accepting, so is $n$. We create a mapping $\T_{\exists,\alpha}(x)$ of
definition:
\begin{align*}
 \children_\exists(c,\alpha,\beta,x,ac_\alpha,ac_\beta,r,y_0,y^b_0,y_1,y^b_1) \land \iscrit(ac_\alpha)  \\
\end{align*}

We create a similar mapping $\T_{\exists,\beta}$ for the $\beta$-successor.

\paragraph{Generating the tape cells.}
We now focus on the representation of the tape and its consistency. We generate $2^k$ addresses and associated values: 
\begin{align*}
 &\children_{\quantif}(c,\alpha,\beta,ac,ac_\alpha,ac_\beta,r,y_0,y^b_0,y_1,y^b_1)  \\
 &\rightarrow \addr_1(c,c_1,y_0,y_1,y^b_0,y^b_1) 
 \end{align*}
\begin{align*}
 &\children_{\quantif}(c,\alpha,\beta,ac,ac_\alpha,ac_\beta,r,y_0,y^b_0,y_1,y^b_1)  \\
 &\rightarrow \addr_1(c,c_1,y_1,y_0,y^b_0,y^b_1) 
 \end{align*}
And for $i\in 1..n-1$ we have:
\begin{align*}
 &\addr_i(c_{i-1},c_i,b,\vec{b},y^b_0,y^b_1) \\
 &\rightarrow \addr_{i+1}(c_i,c_{i+1},,b,\vec{b},y^b_0,y^b_1) 
 \end{align*}
\begin{align*}
 &\addr_i(c_{i-1},c_i,b,\vec{b},y^b_0,y^b_1) \\
 &\rightarrow \addr_{i+1}(c_i,c_{i+1},\vec{b},b,y^b_0,y^b_1) 
 \end{align*}
Finally for $n$ we have:
\begin{align*}
 &\addr_n(c_{n-1},c_n,b,\vec{b},y^b_0,y^b_1) \\
 &\rightarrow \data^c(c_n,\vec{v})
 \end{align*}

\paragraph{Initialization of the tape.}

For the case 0, we use the pattern $l_1$ and introduce a mapping $\T_{tape0}(x)$ defined as:
\[
\begin{array}{lccr}
 & \children_{\exists}(c_{root},\alpha,\beta,ac,ac_{\alpha},ac_{\beta},y_0,y_0,y_1,y_1) \\
   \land & \addr_1(root,id_1,y_0,y_1,y_0,y_1) \\
   \land&  \dots \\
   \land & \addr_n(id_{n-1},id_n,y_0,y_1,y_0,y_1) \\
   \land & \data^c(id_n,l_1(x)) \\
\end{array}
\]

For all others cases, with a first $1$ at the $i$-th bit, we use the pattern
$l_0$ and introduce $\T_{tape~i}(x)$:
\[
\begin{array}{lccr}
  &\children_{\exists}(c_{root},\alpha,\beta,ac,ac_{\alpha},ac_{\beta},y_0,y_0,y_1,y_1) \\
  \land  & \addr_1(root,id_1,y_0,y_1,y_0,y_1) \\
   \land &  \dots \\
   \land & \addr_{i}(id_{i-1},id_i,y_1,y_0,y_0,y_1) \\
  \land  & \addr_{i+1}(id_{i},id_{i+1},a^i,b^i,y_0,y_1) \\
   \land&  \dots \\
  \land  & \addr_n(id_{n-1},id_n,a^n,b^n,y_0,y_1) \\
  \land  & \data^c(id_n,l_0(x)) \\
\end{array}
\]

\paragraph{Ensuring the coherence between $\data^c$ and $\data^p$.}

We need to check the coherence between $\data^c$ in an address and
$\data^p$ at the previous address. As usual when $v$ is at the address
$\vec{b}10^j$ then $\data^c$ needs to be checked against the $\data^p$
at the address $\vec{b}01^j$. We introduce the mapping
$\T_{prev~j}(x)$:
\[
\begin{array}{llcccc}
  &  \addr_{n-j-1}(id,id_1,y_1,y_0,y_0,y_1) \\
  \land  &  \addr_{n-j-1}(id,id_0,y_0,y_1,y_0,y_1) \\
  \land &  \addr_{n-j}(id_1,id_{10},y_0,y_1,y_0,y_1) \\
  \land  &  \addr_{n-j}(id_{0},id_{01},y_1,y_0,y_0,y_1) \\
    & \dots \\
   \land &  \addr_{n}(id_{10^{j-1}},id_{10^j},y_0,y_1,y_0,y_1) \\
  \land  &  \addr_{n}(id_{01^{j-1}},id_{01^j},y_1,y_0,y_0,y_1)  \\
  \land & \data^c(id_{10^{j}},l_k(\mathbf{s})) \\
  \land & \data^p(id_{10^{j}},l_k(v)) \\
  \land & \iscrit(v) \\

\end{array}  
\]

\paragraph{Encoding transitions.}

As in previous reductions, we encode the transitions of
$\delta_\epsilon$ as a set of $(i,j,k) \rightarrow w$ (where $i$ is
the value of the cell, $j$ is the value at the cell before and $k$
at the cell after and $w$ is the written value).
 
For our transition, we need to write $w$ at the same address s
where $l$ lies in the $\epsilon$-child of the configuration of $c$.
To be at the same address, we need to check that the path follows the
same bits (that we note here $b_n \dots b_1$). We use the mapping
$\T_{i,j,k\rightarrow w}^\epsilon(x)$ defined as: 

\[
\begin{array}{rcl}
   & \iscrit(v) \\
  \land & \data^p(id_{n}, l_i(v)) \\
  \land & \data^c(id_{n}, l_j(c)) \\
  \land & \data^n(id_{n}, l_k(c)) \\
  \land & \data^c(id'_{n}, l_w(x)) \\
  \land &  \addr_n(id_{n},id_{n-1},b_n,\vec{b}_n,y_0,y_1,y_0,y_1) \\
  \land & \addr_n(id'_{n},id'_{n-1},b_n,\vec{b}_n,y_0,y_1) \\
  \land & \addr_{n-1}(id_{n-1},id_{n-2},b_{n-1},\vec{b}_{n-1},y_0,y_1) \\
  \land & \addr_{n-1}(id'_{n-1},id'_{n-2},b_{n-1},\vec{b}_{n-1},y_0,y_1) \\
 & \dots \\
  \land  & \addr_1(id_p,id_1,b_1,\vec{b}_{1},y_0,y_1) \\
  \land & \addr_1(id'_c,id'_{1},\vec{b}_1,b_{1},y_0,y_1) \\
  \land  & Tree_\ell(id_g,id_p,y_0,y_0,y_1,y_1) \\
  \land & \children_{\quantif}(c,c_{\alpha},c_{\beta},ac,ac_{\alpha},ac_{\beta},r,y_0,y_0,y_1,y_1) \\
 \\
\end{array}
\]

\paragraph{Encoding final states.}

Whenever the current state is $q_{accept}$ we need to enforce that the
$ac$ bit is set.  To enforce that the $ac$ bit is set, we introduce the
following mapping, for each value $k\in \{q_{accept}\}\times \Sigma$
marking a final state:

\[
\begin{array}{rcl}
  & \iscrit(v) \\
  \land & \data^c(id_n,l_k(v)) \\
  \land & \addr_n(id_{n-1},id_n,x^{n-1},z^{n-1},y_0,y_1) \\
  & \dots \\
  \land & \addr_1(c,id_1,x^{1},z^{1},y_0,y_1) \\
  \land & \children_{\quantif}(c,\alpha,\beta,x,ac_{\alpha},ac_{\beta},r,y_0,y_0,y_1,y_1) \\
\end{array}
\]

\paragraph{Policy.}

The policy query is 
\[\children_\exists(root,\alpha,\beta,z,ac_\alpha,ac_\beta,r,y_0,y_0,y_1,y_1),\]

We can verify that the policy query is disclosed if and only if the original Turing machine accepts 
on the empty tape.

\subsection{Maximality of our Tractability Conditions}
Recall that Theorem \ref{thm:uidptime}  shows that we can get tractability
by simultaneously restricting our constraints to be $\uid$s and our mappings
to be $\projectionmap$s. Recall also that a $\uid$ is an $\incd$ with at most one
exported variable. Here we show that these restrictions are maximal in the following sense:
if we increase from $\uid$s to $\ltgd$s with frontier one we get intractability.
We also get intractability if we stick with $\uid$s but we allow the mappings to be atomic.
Let $\froneltgd$ denote the $\ltgd$s with at most one exported variable. In fact, we will
show something stronger (here $\emptyset$ denotes no constraints):

\begin{theorem}
$\discloseClass(\emptyset,\atomicmap)$ 
and $\discloseClass(\froneltgd,\projectionmap)$ are both $\np$-hard.
\end{theorem}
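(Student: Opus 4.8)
The plan is to prove both $\np$-hardness results by reducing from a known $\np$-hard problem, where the core idea is that each restriction---allowing atomic maps (with repeated head variables) on one hand, or allowing frontier-one $\ltgd$s (with nontrivial head atoms) on the other---lets us build a generated instance on which conjunctive-query evaluation is itself $\np$-hard. The natural source problem is $3$-colorability (or, equivalently, the homomorphism problem from a fixed hard target), since CQ evaluation is exactly a homomorphism-existence question and this is where classical $\np$-hardness of CQ answering comes from.

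First I would handle $\discloseClass(\emptyset,\atomicmap)$. With no source constraints, Theorem~\ref{thm:simplecritinstrewriting} tells us that disclosure reduces to whether $\hide_\M(\critinst^\targetschema)$ entails some rewriting $\secretquery_\annot$ under the constraints $\critinstrewrite(\M)\cup\iscrit(\M)$. The key leverage is that an atomic map $\phi(\vec x,\vec y)\rightarrow \T(\vec x)$ may have \emph{repeated variables} in its single body atom, and the $\hide_\M$ construction produces witness tuples where distinct non-exported positions get fresh constants while exported positions get $\critelement$. By choosing a single source relation and a suitable map together with a policy query $\secretquery$ encoding the edge-and-color constraints of a graph $G$, I would arrange that $\hide_\M(\critinst^\targetschema)$ is precisely (an encoding of) the instance $K_3$ (the triangle, i.e.\ the $3$-coloring target), and that $\secretquery$ is the canonical query associated with $G$. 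Then $\secretquery$ is disclosed exactly when $G$ maps homomorphically into $K_3$, i.e.\ exactly when $G$ is $3$-colorable. Since the instance is fixed-size and only the query varies with $G$, this is a clean polynomial reduction, and $\np$-hardness of CQ evaluation over a fixed instance carries over.

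Next I would handle $\discloseClass(\froneltgd,\projectionmap)$. Here the mappings are forced to be projections (trivial rewritings), so the combinatorial power must come from the source constraints. A frontier-one $\ltgd$ $B(\vec x)\rightarrow \exists\vec y~H(\vec z)$ with a single frontier variable can, when chased, generate from a single seed fact an entire gadget instance: starting from $\iscrit(\critelement)$ and the facts $\hide_\M(\critinst^\targetschema)$, I would design a small set of frontier-one rules that chase out a fixed finite target structure---again essentially $K_3$ or a comparable hardness gadget---onto which the (projection-reduced) policy query must embed. The argument then mirrors the first: disclosure holds iff the policy query, which encodes $G$, has a homomorphism into the chased gadget, which is iff $G$ is $3$-colorable. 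The frontier-one restriction is exactly what makes these rules legal $\froneltgd$s, so the reduction stays inside the claimed class.

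The main obstacle I anticipate is \textbf{controlling the generated/chased instance precisely} in each case: I must ensure that the only facts relevant to entailment of the policy form the intended fixed hardness gadget, with no spurious facts (from the $\iscrit(\M)$ constraints collapsing elements to $\critelement$, or from chase steps firing unexpectedly) that would make the query trivially satisfiable or unsatisfiable and break the equivalence with $3$-colorability. In the atomic-map case this means carefully tracking which witness positions receive fresh constants versus $\critelement$ under $\hide_\M$; in the frontier-one case it means verifying that the restricted chase terminates on (or at least stabilizes to) exactly the intended gadget and that no additional existential witnesses provide unwanted homomorphism targets. Once the generated instance is pinned down to be the fixed coloring target, the reduction from $3$-colorability and the resulting $\np$-hardness are routine.
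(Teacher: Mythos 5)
There is a genuine gap, and it sits exactly at the point you flag as ``the main obstacle'': neither of your two constructions can actually produce (an encoding of) $K_3$. For $\discloseClass(\emptyset,\atomicmap)$, the instance $\hide_\M(\critinst^\targetschema)$ consists of one witness fact per atomic map, in which exported positions hold $\critelement$ and every other variable is instantiated by a constant that is \emph{fresh for that fact}; subsequent merge steps of the visible chase can only replace values by $\critelement$. Hence every instance reachable in this setting is a \emph{single-shared-value instance}: the only value occurring in more than one fact is $\critelement$. Any faithful relational encoding of $K_3$ needs three distinct values each shared among several facts (the edge facts around each vertex), so no choice of maps makes the generated instance homomorphically behave like $K_3$. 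The same obstruction, in tree form, kills the second construction: with $\projectionmap$s and $\froneltgd$s, each chase step creates a fact sharing at most \emph{one} value (the frontier value) with its parent, and merges again only introduce $\critelement$; the fact-sharing graph over non-critical values is therefore a forest, whereas $K_3$ requires a cycle of facts pairwise sharing distinct non-critical values. This is not a technicality---it is precisely the structural reason the paper's Theorem \ref{thm:uidptime} obtains \PTime{} for $\uid$s and projection maps, so any hardness proof must work \emph{within} this restricted shape of instance rather than trying to escape it.

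The paper's proof resolves this by accepting the restriction and designing a hardness gadget that lives inside it: a fixed single-shared-value instance with one relation of arity $8$ and six tuples (realizable both by six atomic maps with no constraints, and by one projection map plus six $\froneltgd$s), over which CQ evaluation is shown $\np$-hard by a reduction from Circuit SAT. The trick is that Boolean values are encoded not by which \emph{element} a query variable hits in a shared target like $K_3$, but by whether a variable maps to the shared value $c=\critelement$ or to a designated null, and the gate semantics are enforced through the pattern of nulls \emph{repeated within} individual tuples---the one degree of freedom that single-shared-value instances still offer. So your high-level strategy (produce a fixed generated instance, then show CQ evaluation on it is $\np$-hard) matches the paper's, but the concrete target you propose is provably unrealizable, and replacing $K_3$ by a gadget compatible with the single-shared-value/forest structure is the actual content of the proof, not a routine finishing step.
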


In order to prove our results, we will rely again on Proposition \ref{prop:visiblechase},
which states that testing for disclosure is equivalent to evaluating the policy query on the
result of the visible chase process. The process starts with the instance $\hide_\M(\critinst^\targetschema)$,
which has source witnesses for each tuple in $\critinst^\targetschema$. It proceeds by alternating traditional
 chase steps  and merge steps, which are applications of a $\singleconstantrule$.
It is well known that query evaluation is $\np$-hard on arbitrary
instances. But the constraints that we are considering in this section
do not allow us to generate arbitrary instances as a visible sections.
In this section we
will exhibit a instance $\instance$ on which query evaluation is $\np$-hard,
but where $\instance$ can be the result of the visible chase 
using  $\atomicmap$s but no constraints, or with a visible chase using
$\froneltgd$ constraints and $\projectionmap$s.

\paragraph{The instance $\instance$.}
$\instance$ will have one relation $R$ with $6$
atoms. We present the content of $R$ below. Empty cells are filled
with fresh nulls, $c$ is the only value shared by two tuples and $n_i$
correspond to nulls that are shared inside a tuple:

\[
\begin{array}{|c|c|c|c|c|c|c|c|c|}
  \hline
  a & b & \lnot a & \lnot b & a\lor b  & \lnot^2 b & d & \lnot d \\
  \hline
  n_1 & n_1 & c & c & n_1 & & c & n_1 \\
  c & n_2 & n_2 & c & c & & &  \\
  n_3 & c & c & n_3 & c & & &  \\
  c & c & n_4 & n_4 & c & &  & \\
  c &   & n_5 & n_5 &   & c& & \\
    &  & & c  &     & n_6 & n_6 & c \\
  \hline
\end{array}
\]

Note that this is a \emph{single-shared value instance}: only one value, namely
$c$, is shared among multiple tuples.
Such instances can be produced as the result of the visible chase  over
atomic mappings with no constraints. In this case:

\[
\begin{array}{ccc}
R(y,y,x,x,y,v_1,x,y) & \rightarrow & T_1(x) \\
R(x,y,y,x,x,v_1,v_2,v_3) & \rightarrow & T_2(x) \\
R(y,x,x,y,x,v_1,v_2,v_3) & \rightarrow & T_3(x) \\
R(x,x,y,y,x,v_1,v_2,v_3) & \rightarrow & T_4(x) \\
R(x,u,y,y,v_1,x,v_2,v_3) & \rightarrow & T_5(x) \\
R(v_1,v_2,v_3,x,v_4,y,y,x) & \rightarrow & T_6(x) \\
\end{array}
\]

They can also be produced as the result of the visible chase
over one projection mapping $A(x)\rightarrow T(x)$ with $6$ $\froneltgd$s:
\[
\begin{array}{ccc}
A(x)  & \rightarrow & R(y,y,x,x,y,v_1,x,y) \\
A(x) & \rightarrow &  R(x,y,y,x,x,v_1,v_2,v_3) \\
A(x) & \rightarrow & R(y,x,x,y,x,v_1,v_2,v_3) \\
A(x) & \rightarrow & R(x,x,y,y,x,v_1,v_2,v_3) \\
A(x) & \rightarrow & R(x,u,y,y,v_1,x,v_2,v_3)  \\
A(x) & \rightarrow & R(v_1,v_2,v_3,x,v_4,y,y,x)  \\
\end{array}
\]

The remainder of the argument is to show that CQ evaluation is $\np$-hard over this instance,
via reduction from satisfiability of a propositional circuit (Circuit SAT).

\paragraph{General idea of the reduction.}

The reduction that we provide will create a query $Q$ for each
instance $\I$ of Circuit SAT. Without loss of generality, we suppose
that $\I$ is composed of wires $w_1, \dots, w_k$, of negation gates
$N_1,\dots N_l$ and of binary {\org} gates $O_1, \dots, O_m$. Wire that
are not the output of any gate are the inputs of the circuit. We will
suppose that the output corresponds to the wire $1$.

We will build the query $Q$ to contain conjuncts for each wire, each
negation gate and each binary {\org}. Furthermore we will create a
variable $v_i$ for each wire $w_i$.

For the sake of readability, we present the conjuncts graphically, with
each row representing an $R$ atom.
A row with entries $t_{j_1} \ldots t_{j_k}$ represents an  atom
$R(\vec w)$ 
where $w_i$ is a fresh existentially quantified variable
when the cell is empty and the variable $t_{j_i}$ in the cell otherwise.

\paragraph{Wires.}

For each wire $w_i$, we will force the value of its associated
variable $v_i$ to be either $c$ (when the wire carries the value true)
or $n_1$ (when the wire carries false).

For each wire $i$, we will have a conjunct:
\[
\begin{array}{|c|c|c|c|c|c|c|c|c|}
  \hline
  a & b & \lnot a & \lnot b & a\lor b  & \lnot^2 b & d & \lnot d \\
  \hline
  v_i & v_i & & & & & & \\
  \hline
\end{array}
\]

For the variable $v_1$ corresponding to the output wire we also add a
conjunct:

\[
\begin{array}{|c|c|c|c|c|c|c|c|c|}
  \hline
  a & b & \lnot a & \lnot b & a\lor b  & \lnot^2 b & d & \lnot d \\
  \hline
  & &  v_1 & v_1 & &  & & \\
  \hline
\end{array}
\]

\paragraph{Negation.}

For each negation gate $N_k$, whose input is the wire $i$ and 
output is the wire $j$, we will have the following conjuncts:
\[
\begin{array}{|c|c|c|c|c|c|c|c|c|}
  \hline
  a & b & \lnot a & \lnot b & a\lor b  & \lnot^2 b & d & \lnot d \\
  \hline
  v_i &    & r_k   &    &    &    &   &       \\
      &    &      & r_k   &    & p_k &    &    \\
      &    &    &    &    &    & p_k & v_j \\ \hline
\end{array}
\]

\paragraph{Computing binary {\org}.}

For the binary {\org} $O_\ell$ gate whose inputs are the wires $v_i$ and
$v_j$ and the output is $v_k$, we introduce the following conjuncts:

\[
\begin{array}{|c|c|c|c|c|c|c|c|c|}
  \hline
  a & b & \lnot a & \lnot b & a\lor b  & \lnot^2 b & d & \lnot d \\
  \hline
  v_i &    & x_\ell  &    &    &    &   &       \\
      & v_j&    &  y_\ell  &    &    &   &       \\
      &    & x_\ell  &  y_\ell  & v_k & &    &    \\
 \hline
\end{array}
\]

\paragraph{Proof that this reduction captures Circuit-SAT.}

Let us suppose that the circuit is satisfied. Towards showing
that the query is satisfied in the instance $\instance$, we first build a binding
for the variables that are shared between multiple of the conjunct grouping above, which
are exactly the ``wire variables'' $v_i$. We do this by setting $v_i=c$ when
$w_i=\top$ and $v_i=n_1$ when $w_i=\bot$. 
We now show that this binding extends to a valuation making the query true.
Since all the other variables are not shared between the conjunct groups, it suffices
to show satisfiability of each conjunct group in isolation.
\begin{itemize}
\item We see that all the conjuncts corresponding to wires are
  satisfied (even the special conjunct corresponding to the output).
\item For the negation gate $N_k$ whose input is $v_i$ and
output is $v_j$. When $w_i=\top$ and thus $v_i=c$, we can set
$r_k=n_5$, $p_k=c$ and satisfy all 3 conjuncts. When $w_i=\bot$ and thus
$v_i=n_1$, we can set $r_k=c$ and $p_k=n_6$ and satisfy all 3 conjuncts.
\item For an {\org} gate $O_\ell$ whose inputs are $v_i$ and $v_j$, and
  whose output is $v_k$. There are four cases:
\begin{itemize}
  \item when $w_i=w_j=\top$ and thus $v_i=v_j=c$ we can set $x_\ell=y_\ell=n_4$ 
  \item when $w_i=\bot$ and $w_j=\top$ and thus $v_i=n_1$, $v_j=c$ we can set $x_\ell=c$, $y_\ell=n_3$ 
  \item when $w_j=\bot$ and $w_i=\top$ and thus $v_j=n_1$, $v_i=c$ we can set $x_\ell=n_2$, $y_\ell=c$ 
  \item when $w_i=w_j=\bot$ and thus $v_i=v_j=n)1$ we can set $x_\ell=y_\ell=c$ 
\end{itemize}
\end{itemize}
In all cases, our conjuncts are satisfied.

Conversely, let us show that when the query is satisfied in our instance $\instance$, then
the circuit is satisfiable. Let $h$ be a homomorphism from the query
variables to values. Since we have wire conjuncts constraining $v_i$
for each wire $w_i$, we can see that $h(v_i)=n_1$ or $h(v_i)=c$. We
now consider the circuit assignment such that $w_i=\top$ when
$h(v_i)=c$ and $w_i=\bot$ when $h(v_i)=n_1$. Let us show that this
assignment witnesses the satisfiability of the circuit.
\begin{itemize} 
  \item The output wire is already constrained such that $h(v_1)\in
    \{n_1,c\}$ but it also has a special conjunct and the only
    remaining possibility for $h(v_1)$ is $c$ and thus the output gate
    is set at $\top$.
  \item For each negation gate whose input is $w_i$ and output is
    $w_j$:
    \begin{itemize}
    \item when $h(v_i)=n_1$ then the conjunct holding $v_i$ and $r_k$
      (i.e. the first row in the graphical representation) forces that
      $h(r_k)=c$. The conjunct holding $r_k$ and $p_k$ forces $p_k$ to
      be a fresh null or $n_6$. But since $p_k$ appears in the column
      $\lnot^2 b$ and in the column $d$, we can only have $p_k=n_6$
      and thus $v_j=c$.
    \item when $h(v_i)=c$ then the conjunct holding $v_i$ and $r_k$
      forces that $h(r_k)=n_5$ or $h(r_k)=n4$. Then the conjunct
      holding $r_k$ and $p_k$ forces $p_k$ to be either a fresh null
      (when $h(r_k)=n_4$) or $c$ (when $h(r_k)=n_5$). Since $p_k$
      appears in the column $\lnot^2 b$ and in the column $d$ we
      cannot have $p_k$ fresh null, we conclude that$p_k=c$, and thus
      $v_j=n_1$.
    \end{itemize}
    In both cases, the semantics of the negation gate is respected.
  \item Consider each {\org} gate whose inputs are $w_i$, $w_j$ and
    output is $w_k$. First we have:
     \begin{itemize}
    \item when $h(v_i)=n_1$ then necessarily $h(x_\ell)=c$
    \item when $h(v_i)=c$ then $h(x_\ell)=n_2$ or $h(x_\ell)=n_4$ or $h(x_\ell)=n_5$
    \item when $h(v_j)=n_1$ then necessarily $h(y_\ell)=c$
    \item when $h(v_j)=c$ then $h(x_\ell)=n_3$ or $h(x_\ell)=n_4$. 
    \end{itemize} Therefore we see that:
     \begin{itemize}
    \item when $h(v_i)=n_1=h(v_j)$ then necessarily $h(x_\ell)=h(y_\ell)=c$ and thus $h(v_k)=n_1$
    \item when $h(v_i)=c$ and $h(v_j)=n_1$ then $h(x_\ell)=n_2$ and thus $h(v_k)=c$
    \item when $h(v_i)=n_1$ and $h(v_j)=c$ then necessarily $h(y_\ell)=n_3$ and thus $h(v_k)=c$
    \item when $h(v_j)=c=h(v_i)$ then $h(x_\ell)=n_4=h(y_\ell)$ and thus $h(v_k)=c$
     \end{itemize}
     in all cases we do have that the semantics of the {\org} gate is
     respected.
\end{itemize}

All in all, we have seen that the circuit is satisfiable if and only
if the query has a solution on the visible chase.

\subsection{Lower Bounds Inherited from Entailment} \label{app:entailment}

In the body of the paper we claimed that in several cases, we could show that the complexity
of disclosure for a class was at least as hard as the complexity of query entailment for the class.
We do \emph{not} claim that there is a generic reduction  from query entailment to disclosure. 
There is a simple reduction from entailment for special classes of instances to disclosure. More specifically,
disclosure is easily seen to subsume entailment on instances of the form $\hide_\M(\critinst^\targetschema)$.
But one needs to see that entailment on these specialized instances is as hard as entailment in general;
this requires a separate argument for each class.

There are three cases of ``lower bounds from entailment'' that are used in the body of the paper: those
whose lower bound is annotated with $\qentail$ in
Table \ref{tab:summary}. We give the details of each argument below.

\subsubsection{In Bounded Arity, Disclosure with $\incd$ Source Constraints and Projection Maps is $\np$-hard}
We begin by showing that disclosure for $\incd$ source constraints and projection maps inherits the $\np$-hardness
that is known for query entailment with $\incd$s. We do this via a direct reduction from $3$-coloring.
We make use again of the characterization of disclosure using the visible chase.

Let us take a graph $G=(V,E)$ that is an input to $3$-coloring.
In our reduction, the
schema, the constraints and the mapping will not depend on this actual
graph reduced. Only the query will depend on the graph.

We will have a single source relation $OK(x,y,z)$ and one mapping $OK(x,y,z) \rightarrow M()$ 
to create canonical values for $(x_0,y_0,z_0)$ in $\hide_\M(\critinst^\targetschema)$, which
is the initial instance in the visible chase. Then we will use
two $\incd$ constraints to create all permutations for these values:
$OK(x,y,z) \rightarrow OK(x,z,y)$ and $OK(x,y,z) \rightarrow
OK(y,x,z)$.

Because of the mapping, $\hide_\M(\critinst^\targetschema)$ will have three values
$x_0,y_0,z_0$ with $OK(x_0,y_0,z_0)$. Then the constraints ensure that the
canonical model contains the six permutations of arguments for $OK$:
$OK(x_0,y_0,z_0)$, $OK(x_0,z_0,y_0)$, $OK(y_0,x_0,z_0)$, $OK(y_0,z_0,x_0)$, $OK(z_0,x_0,y_0)$, $OK(z_0,y_0,z_0)$.

In our query $|V|$ variables will capture the coloring of each node,
we note $v(n)$ the variable associated with node $n$.  For each
$(f,t)\in E$ the query will include a conjunct $\exists c  ~~ OK(v(f),v(t),c)$.

We sketch the correctness of this reduction.
The three values $x_0,y_0,z_0$ in $\hide_\M(\critinst^\targetschema)$  encode the three
possible colors in a coloring. The conjunct $\exists c ~~
OK(v(f),v(t),c)$  forbids the nodes $f$ and $t$ to be mapped
to the same value ($x_0$, $y_0$ or $z_0$) as $\exists c ~~OK(v,v,c)$ has no
solution in $\hide_\M(\critinst^\targetschema)$. Therefore if we have disclosure
have a $3$-coloring.

Conversely, if we have a $3$-coloring, we can find a solution for the
query in the visible chase.

\subsubsection{In General Arity, Disclosure with $\incd$ Source Constraints and Projection Maps is $\pspace$-hard}

Here we give a direct reduction\footnote{The original proof given here
  was faulty, many thanks to Balder ten Cate for noticing it and
  suggesting a fix.}  from the implication problem for $\incd$s, or
equivalently, the entailment problem for a single-atom instance and an
atomic query. This is known to be $\pspace$-hard \cite{casanova}.
Given a problem $\Sigma \vDash R_1(\vec{X}) \subseteq R_2(\vec{X})$
(where $\vec{X}$ has no repeated variables and $\Sigma$ is composed of
IDs), we introduce a fresh predicate $shadowR_1$ and we reduce it to
the disclosure problem with query $shadowR_1(\vec{X})\land
R_2(\vec{X})$ on the constraints $\Sigma$ plus
$shadowR_1(\vec{X})\rightarrow R_1(\vec{X})$ and the mapping $V() :=
\exists \vec{X} ~~ shadowR_1(\vec{X})$.

\subsubsection{In Bounded Arity, Disclosure with $\fgtgd$ Source Constraints and Projection Maps is $\twoexp$-hard}

The last  place where we claim that disclosure is at least as hard as entailment is for $\fgtgd$ source constraints
and projection maps in bounded arity. Here we will proceed by modifying the reduction used in 
 Theorem  \ref{thm:incdepcqmapboundedlower}. In this proof we used mappings for two distinct purposes. 
The initialization mapping $\T_{init}()$ was used to generate some values in the initial instance
of the visible chase. In the proof, this mapping is an atomic map but not a projection map; but we can easily  change
this to use a projection map and an $\ltgd$. 

The remaining maps are used to ensure that certain values get merged with $\critelement$ in the visible chase.
Put another way, they are used to enforce certain $\singleconstantrule$s. But with the mappings $\T_{init}()$
and  $T_{\critelement}(x)$, we can ensure that the initial instance of the visible chase includes  exactly  one
element satisfying $\iscrit$. Once we have done this, we can mimic a $\singleconstantrule$
\[
\phi(\vec x) \rightarrow x_i=\critelement
\]
by a source constraint
\[
\phi(\vec x) \rightarrow \iscrit(x_i)
\]
This must be a $\fgtgd$, since the frontier has size one.
Transforming the mappings according to this methodology, while leaving the query the same  as in
 Theorem  \ref{thm:incdepcqmapboundedlower} gives us a modification of the hardness proof  using
$\fgtgd$ source constraints and projection maps, as required.

\newpage
\section{Refinements of our results} \label{sec:appglobal}
\paragraph{Atomic queries.}
We have focused in the body of the paper on policy queries given as general CQs. But
almost all of our lower bounds can be seen to hold for atomic queries. The only exceptions
are stated in Theorem \ref{thm:linlinupper}
and Corollary \ref{cor:ididupper}, where we claim $\ptime$ membership in bounded arity when restricting
to atomic queries.
 Note that the  $\np$-hardness bounds for general CQs corresponding to these upper
bounds do not follow from our custom reductions,
but using the simple reduction from entailment of CQs for the corresponding classes (e.g. $\incd$s).

\paragraph{Non-Boolean queries.}
In this appendix we have provided details of our upper bounds, assuming for simplicity that
the queries $\asecretquery$ are Boolean. But the proofs all extend to the non-Boolean case, as we now explain.
To see this we need to go back to  Theorem  \ref{thm:critinst}. We restate the theorem
in a slightly different variant:

\begin{theorem} \cite{bbcplics}
When source constraints are TGDs and mapping rules are given by CQ definitions, then if a disclosure of a CQ
(Boolean or non-Boolean) occurs,
then the source instance which witnesses this can be taken to be $\critinst^\sourceschema$.
\end{theorem}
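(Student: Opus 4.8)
The plan is to prove the equivalence: the CQ $\secretquery(\vec x)$ admits a disclosure if and only if the Boolean query $\secretquery(\critelement, \ldots, \critelement)$, obtained by substituting the single critical constant for every free variable, is disclosed at $\critinst^\targetschema$. By Theorem~\ref{thm:critinst} and Proposition~\ref{prop:visiblechase} the latter is exactly the $\hocwq$/visible-chase condition already analysed in the body, so this reduces the non-Boolean case to machinery we have. Throughout I would use that the setting is constant-free, so disclosure is invariant under renaming of domain elements.

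For the easy direction I would first check that $\critinst^\sourceschema$ is a possible source for $\critinst^\targetschema$: it satisfies every source TGD, since any match of a body lands in the single element $\critelement$ and the head is satisfied by sending its existential variables to $\critelement$ as well; and $\M(\critinst^\sourceschema)=\critinst^\targetschema$ because each mapping definition is a constant-free CQ, hence satisfiable over the all-$\critelement$ instance and producing exactly the all-$\critelement$ tuple. Thus $\critinst^\targetschema$ is realizable, which gives the backward implication and identifies $\critinst^\sourceschema$ as the witnessing source. Moreover, since $\critinst^\sourceschema$ is always among the possible sources for $\critinst^\targetschema$ and contains only $\critelement$, any binding disclosed at $\critinst^\targetschema$ must already hold in $\critinst^\sourceschema$ and is therefore forced to be $(\critelement,\ldots,\critelement)$; so the binding never needs to be guessed.

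The substantive direction asserts that if $\secretquery(\vec t)$ is disclosed at some realizable $\V$, then $\secretquery(\critelement,\ldots,\critelement)$ is disclosed at $\critinst^\targetschema$. Here I would fix an arbitrary possible source $\instance^*$ for $\critinst^\targetschema$ and any possible source $\instance_V$ for $\V$, and form the direct product $\instance_V \times \instance^*$. Direct products preserve TGD satisfaction, so $\instance_V \times \instance^* \models \sourcecon$; and since conjunctive queries distribute over direct products, the mapping image is $\M(\instance_V \times \instance^*) = \V \times \critinst^\targetschema$. Because $\critinst^\targetschema$ has the single element $\critelement$, every global tuple of the product carries $\critelement$ in its second coordinate, so $\V \times \critinst^\targetschema$ is isomorphic to $\V$ via the first projection; hence, up to renaming (harmless by constant-freeness), $\instance_V \times \instance^*$ is a genuine possible source for $\V$. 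Disclosure at $\V$ then forces $\instance_V \times \instance^* \models \secretquery(\vec t)$. Finally the second projection $\pi_2 : \instance_V \times \instance^* \to \instance^*$ is a homomorphism, CQ-satisfaction is preserved by homomorphisms, and $\pi_2(\vec t)=(\critelement,\ldots,\critelement)$, so $\instance^* \models \secretquery(\critelement,\ldots,\critelement)$; as $\instance^*$ was arbitrary, this is the desired disclosure.

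The main obstacle is the bookkeeping behind the last step, namely that $\pi_2(\vec t)=(\critelement,\ldots,\critelement)$. I would argue that the binding $\vec t$, being common to all possible sources of $\V$ (which is what disclosure demands), must occur in positions that $\M$ exposes: an element sitting only in invisible positions can be renamed in some possible source without changing the image, so it cannot be shared by all of them. Exposed source elements are exactly those feeding a global tuple, and in the product those tuples were shown to carry $\critelement$ in the second coordinate, so $\pi_2$ sends each $t_i$ to $\critelement$. Pinning down this ``exposed elements collapse'' claim cleanly, together with the careful treatment of the exact-image (closed global predicate) condition inside the product, is where the real work lies; the preservation-under-products facts and the realizability of $\critinst^\targetschema$ are routine by comparison.
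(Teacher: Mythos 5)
Your proof is correct, but there is nothing in the paper to compare it against: the statement is imported verbatim from \cite{bbcplics} (it is a restatement of Theorem~\ref{thm:critinst}), and the appendix uses it purely as a black box to argue that the non-Boolean upper bounds go through. What you have produced is therefore a self-contained proof of a result the paper only cites. Your route --- realizability of $\critinst^\targetschema$ via $\critinst^\sourceschema$ for the easy direction, and, for the substantive direction, the direct product $\instance_V \times \instance^*$ of a possible source of $\V$ with an arbitrary possible source of $\critinst^\targetschema$, combined with (i) preservation of TGDs under products, (ii) \emph{exact} distribution of CQ images over products (which is precisely what handles the closed, exact-image requirement on the global predicates), and (iii) the projection homomorphism $\pi_2$ --- is the classical product argument for critical instances, and all three ingredients are sound in the constant-free setting the paper assumes. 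The two points you flag as residual bookkeeping do close: if some $t_i$ lies outside the active domain of $\V$, then renaming $t_i$ to a fresh value inside any possible source yields another possible source in which $t_i$ does not occur at all, so $\secretquery(\vec t)$ fails there, contradicting disclosure; hence every $t_i$ lies in the active domain of $\V$, corresponds to the pair $(t_i,\critelement)$ in the product, and is sent to $\critelement$ by $\pi_2$. One presentational nit: ``disclosure at $\V$ forces $\instance_V\times\instance^*\models \secretquery(\vec t)$'' should literally say that the \emph{renamed} copy $\rho(\instance_V\times\instance^*)$ (where $\rho$ maps $(a,\critelement)\mapsto a$ on elements of the image and is fresh-injective elsewhere) is a possible source for $\V$ and hence satisfies $\secretquery(\vec t)$, after which one pulls the binding back through $\rho^{-1}$ before projecting; this is exactly the renaming you invoke and is harmless by constant-freeness. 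A side benefit of your argument is that it proves, rather than asserts, the claim the appendix makes immediately after the theorem: any disclosed answer must be the all-$\critelement$ tuple, so the non-Boolean case really does reduce to a single Boolean query.
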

The statement  differs slightly from that of Theorem  \ref{thm:critinst}, since this version talks about
getting an instance that agrees with  $\critinst^\sourceschema$ on the mapping images, rather than
having one that extends $\hide_\M(\critinst^\targetschema)$ and satisfies the constraints.

The important point is that the result holds for non-Boolean queries as well as Boolean queries.
Note that  for a Non-Boolean query $\asecretquery(\vec x)$, all the facts that an attacker
will see  in the mapping image of $\critinst^\sourceschema$ will contain only the value $\critelement$. Thus
the only query answers that can be disclosed to the attacker will involve the value
$\critelement$. Inspection of each of the upper bound reductions
will show  that to detect such disclosures, it suffices to
pre-process the query to add conjuncts $\iscrit(x_i)$ for each variable $x_i$, treating the result as a Boolean query.

Note that this transformation converts atomic queries to queries consisting of a single atom and an additional set of unary
atoms. However, this will not impact the $\ptime$ claims in Theorem \ref{thm:linlinupper}
and Corollary \ref{cor:ididupper}. For example in  Theorem \ref{thm:linlinupper}, we will
need only to note that for atomic queries on a bounded arity  schema, we will get an atomic
query with a bounded number
of additional atoms of the form $\iscrit(x_i)$. Entailment of such queries over a bounded arity schema with
$\incd$s is still in $\ptime$.

\paragraph{Dependencies with multiple atoms in the head.}
In some of our upper bound proofs, we assumed that the dependencies had a single atom in the head for simplicity, even
when the classes in question (e.g. $\gtgd$s) does not impose this.
In our results that are stated for general arity, this assumption can be made without loss of generality, since
one can simplify the heads by introducing intermediate predicates.  In bounded arity, one must
take some care, since one cannot polynomially reduce to the case of a single atom in the head.
All of our results for bounded arity do in fact hold as stated, without any additional restrictions on the head.
We explain how the argument needs to be customized for the most subtle case, Theorem \ref{thm:linlinupper}.

Recall that the bounded arity case of Theorem \ref{thm:linlinupper}  starts with the critical-instance rewriting, which reduces to reasoning
with Guarded TGDs having a fixed side signature, the unary predicate $\iscrit(x)$.
The linearization of \cite{antoinemichael,antoinemichaelarxiv}, applied in this context,  proceeds in two steps. First we generate all
derived rules of the form:
\[
R(\vec x) \wedge \bigwedge_i \iscrit(x_i) \rightarrow \iscrit(x_j)
\]
Notice that these are \emph{full-dependencies}: no existentials in the head.
This generation can be done inductively, via the dynamic programming steps  in  \cite{antoinemichaelarxiv}: one inductive steps
composes a derived rule with one of the original non-full dependencies. A second step  composes two derived  full rules. This can be applied
directly to the case of rules with multiple atoms in the head.

After this is done, the second step of linearization moves to an  extended signature described as follows: for every relation $R$ of arity $k$ in the original signature
(without $\iscrit$), and for each set of positions $P$ of $R$, we introduce predicates $R^P$ of arity  $k$. Informally,
$R^P(\vec x)$ stands in for $R(\vec x) \wedge \bigwedge_{i \in P} \iscrit(x_i)$.
We lift every original dependency:
\[
R(\vec x) \wedge \bigwedge_{i \in P} \iscrit(x_i) \rightarrow \exists \vec y ~ \bigwedge H_j(\vec t_j)
\]
to a linear TGD:
\[
\bigwedge R^P(\vec x) \rightarrow \exists \vec y ~ \bigwedge H^{P_j}_j(\vec t_j)
\]
where $P_j$ contains the positions corresponding to exported variables in $P$.
We lift every derived full dependency of the form:
\[
R(\vec x) \wedge \bigwedge_{i \in P} \iscrit(x_i) \rightarrow \iscrit(x_j)
\]
to a linear TGD:
\[
R^P(\vec x) \rightarrow R^{P \cup \{j\}}(\vec x)
\]
Finally we have linear TGD asserting that the semantics of $R^P$ become stronger as one adds to the set of positions $P$:
\[
R^{P'}(\vec x) \rightarrow R^{P} (\vec x)
\]
for $P \subset P'$.

We rewrite the query to the extended signature in the analogous way.
The correctness of this transformation is given by an argument identical to that in the single-headed case in  \cite{antoinemichaelarxiv}.

Note that this  transformation is in $\ptime$ when the arity is fixed.
It reduces us to an entailment problem with $\ltgd$s, still with bounded arity, but with multiple atoms in the head.
Such an entailment problem  can be shown to be in $\np$ using a simple variation of the algorithm
for $\incd$s of \cite{johnsonklug}.

\end{document}